\let\oldvec\vec
\let\vec\oldvec
\newcommand{\comment}[1]{}
\theoremstyle{definition}
\newcommand{\eg}{{\em e.g.~}}
\newcommand{\ie}{{\em i.e.~}}
\newcommand{\Coloneq}{\ensuremath{::=}}
\newcommand{\dbckslash}{\ensuremath{\setminus\!\!\setminus}}
\newcommand{\llbrace}{\{\!\!\{}
\newcommand{\rrbrace}{\}\!\!\}}
\newcommand{\llangle}{\langle\mkern-\thickmuskip\langle}
\newcommand{\rrangle}{\rangle\mkern-\thickmuskip\rangle}
\newcommand{\eqdef}{\mathrel{\raisebox{-1pt}{\ensuremath{\stackrel{\textit{\tiny{def}}}{=}}}}}
\newcommand{\eqalpha}{\ensuremath{=_{\alpha}}}
\newcommand{\emphdef}[1]{\textbf{\emph{#1}}}
\newcommand{\A}{\ensuremath{\mathcal{A}}}
\newcommand{\R}{\ensuremath{\mathcal{R}}}
\newcommand{\V}{\ensuremath{\mathcal{V}}}
\newcommand{\WN}[1]{\ensuremath{\mathcal{W\!N}_{#1}}}
\newcommand{\ctxt}[1]{\ensuremath{\mathtt{#1}}}
\newcommand{\pos}[1]{\ensuremath{\mathsf{#1}}}
\newcommand{\posset}[1]{\ensuremath{\mathcal{#1}}}
\newcommand{\mtp}[2]{\ensuremath{\mathsf{mtp}_{#2}({#1})}}
\newcommand{\Term}{\ensuremath{\mathcal{T}_{\mathtt{a}}}}
\newcommand{\TermExplicit}{\ensuremath{\mathcal{T}_{\mathtt{e}}}}
\newcommand{\TermVariable}{\ensuremath{\mathcal{X}}}
\newcommand{\TypeVariable}{\ensuremath{\mathcal{B}}}
\newcommand{\exsubs}[2]{\ensuremath{[{#1}\backslash{#2}]}}
\newcommand{\appterm}[2]{\ensuremath{{#1}\,{#2}}}
\newcommand{\absterm}[2]{\ensuremath{\lambda{#1}.{#2}}}
\newcommand{\substerm}[3]{\ensuremath{{#3}\exsubs{#1}{#2}}}
\newcommand{\valuetype}{\ensuremath{\mathtt{a}}}
\newcommand{\functtype}[2]{\ensuremath{{#1}\to{#2}}}
\newcommand{\intertype}[2]{\ensuremath{\multiset{#1}_{#2}}}
\newcommand{\M}{\ensuremath{\mathcal{M}}}
\newcommand{\N}{\ensuremath{\mathcal{N}}}
\newcommand{\sequ}[2]{\ensuremath{{#1}\vdash{#2}}}
\newcommand{\assign}[2]{\ensuremath{{#1}:{#2}}}
\newcommand{\derivable}[3]{\ensuremath{{#1}\rhd_{#3}{#2}}}
\newcommand{\ctxtapp}[2]{\ensuremath{{#1}\langle{#2}\rangle}}
\newcommand{\ctxtsum}[3]{\ensuremath{{#1}\mathrel{+_{#3}}{#2}}}
\newcommand{\ctxtres}[3]{\ensuremath{{#1}\mathrel{\dbckslash_{#3}}{#2}}}
\newcommand{\ctxtwoc}[2]{\ensuremath{{#1}\llangle{#2}\rrangle}}
\newcommand{\dom}[1]{\ensuremath{\mathtt{dom}({#1})}}
\newcommand{\applyax}[2]{\ensuremath{\mathtt{AX}({#1},{#2})}}
\newcommand{\applyval}[2]{\ensuremath{\mathtt{VAL}({#1},{#2})}}
\newcommand{\applyabs}[2]{\ensuremath{\mathtt{ABS}({#1},{#2})}}
\newcommand{\applyapp}[3]{\ensuremath{\mathtt{APP}({#1},{#2},{#3})}}
\newcommand{\projctxt}[1]{\ensuremath{\mathtt{CTXT}({#1})}}
\newcommand{\projprem}[1]{\ensuremath{\mathtt{PREM}({#1})}}
\newcommand{\projrule}[1]{\ensuremath{\mathtt{RULE}({#1})}}
\newcommand{\projsubj}[1]{\ensuremath{\mathtt{SUBJ}({#1})}}
\newcommand{\projtype}[1]{\ensuremath{\mathtt{TYPE}({#1})}}
\newcommand{\bv}[1]{\ensuremath{\mathtt{bv}({#1})}}
\newcommand{\fv}[1]{\ensuremath{\mathtt{fv}({#1})}}
\newcommand{\size}[1]{\ensuremath{\mathtt{sz}({#1})}}
\newcommand{\oc}[1]{\ensuremath{\mathtt{oc}({#1})}}
\newcommand{\roc}[1]{\ensuremath{\mathtt{r}\oc{#1}}}
\newcommand{\toc}[1]{\ensuremath{\mathtt{TOC}({#1})}}
\newcommand{\residuals}[2]{\ensuremath{{#1}/{#2}}}
\newcommand{\prefix}{\ensuremath{\leq}}
\newcommand{\rrule}[1]{\ensuremath{\mathrel{\mapsto_{#1}}}}
\newcommand{\rewrite}[1]{\ensuremath{\mathrel{\rightarrow_{#1}}}}
\newcommand{\rewriten}[1]{\ensuremath{\mathrel{\twoheadrightarrow_{#1}}}}
\newcommand{\reduction}[4]{\ensuremath{{#1}:{#2}\rewrite{#4}{#3}}}
\newcommand{\reductionn}[4]{\ensuremath{{#1}:{#2}\rewriten{#4}{#3}}}
\newcommand{\needed}{\ensuremath{\mathtt{nd}}}
\newcommand{\headnd}{\ensuremath{\mathtt{hnd}}}
\newcommand{\weaknd}{\ensuremath{\mathtt{whnd}}}
\newcommand{\callbyname}{\ensuremath{\mathtt{name}}}
\newcommand{\callbyneed}{\ensuremath{\mathtt{need}}}
\newcommand{\dB}{\ensuremath{\mathtt{dB}}}
\newcommand{\lsv}{\ensuremath{\mathtt{lsv}}}
\newcommand{\set}[1]{\ensuremath{\{{#1}\}}}
\newcommand{\multiset}[1]{\ensuremath{\llbrace{#1}\rrbrace}}
\newcommand{\substitute}[3]{\ensuremath{{{#3}\left\{{#1}\left/{#2}\right.\right\}}}}
\newcommand{\many}[2]{\ensuremath{({#1})_{#2}}}
\newcommand{\NF}[1]{\ensuremath{\mathcal{N\!F}_{#1}}}
\newcommand{\HNF}[1]{\ensuremath{\mathcal{H}\NF{#1}}}
\newcommand{\WHNF}[1]{\ensuremath{\mathcal{W}\HNF{#1}}}
\newcommand{\nf}[2]{\ensuremath{\mathtt{nf}_{#1}({#2})}}
\newcommand{\hnf}[2]{\ensuremath{\mathtt{hnf}_{#1}({#2})}}
\newcommand{\whnf}[2]{\ensuremath{\mathtt{whnf}_{#1}({#2})}}
\newcommand{\termat}[2]{\ensuremath{{#1}|_{#2}}}
\newcommand{\treeat}[2]{\ensuremath{{#1}|_{#2}}}
\newcommand{\replaceat}[3]{\ensuremath{{#1}[{#3}]_{#2}}}
\newcommand{\ruleName}[1]{\ensuremath{\mathtt{({#1})}}}
\newcommand{\ruleAxiom}{\ruleName{ax}}
\newcommand{\ruleArrowE}{\ruleName{\functtype{\!}{\!}e}}
\newcommand{\ruleArrowI}{\ruleName{\functtype{\!}{\!}i}}
\newcommand{\ruleValue}{\ruleName{val}}
\newcommand{\Rule}[3]{
    \prooftree
         {#1}
    \justifies  
         {#2}
    \thickness=0.05em
    \using
         {#3}
    \endprooftree}
\newcommand{\rootpos}{\ensuremath{\epsilon}}
\newcommand{\emptyred}{\ensuremath{\mathit{nil}}}
\newcommand{\obs}[1]{\cong_{#1}}
\title{Call-by-Need, Neededness and All That\thanks{This work was partially founded by LIA INFINIS.}}
\author{Delia Kesner\inst{1} \and Alejandro R\'{i}os\inst{2} \and Andr\'{e}s Viso\inst{3}}
\institute{
  IRIF, CNRS and Univ. Paris-Diderot
\and
  Universidad de Buenos Aires
\and
  CONICET and Universidad de Buenos Aires
}
\begin{document}

\maketitle

\begin{abstract}
We show that call-by-need is observationally equivalent to  weak-head needed
reduction. The proof of this result uses  a semantical argument based on a
(non-idempotent) intersection type system called $\V$. Interestingly, system
$\V$ also allows to syntactically identify all the weak-head needed redexes of
a term.
\end{abstract}

\section{Introduction}

One of the fundamental notions underlying this paper is the one of \emph{needed
reduction} in $\lambda$-calculus, which is to be used here to understand (lazy)
evaluation of functional programs. Key notions are those of reducible and
non-reducible programs: the former are programs (represented by
$\lambda$-terms) containing non-evaluated subprograms, called reducible
expressions (redexes), whereas the latter can be seen as definitive results of
computations, called normal forms. It turns out that every reducible program
contains a special kind of redex known as needed or, in other words, every
$\lambda$-term not in normal form contains a needed redex. A redex $\pos{r}$ is
said to be \emph{needed} in a $\lambda$-term $t$ if $\pos{r}$ has to be
contracted (\ie evaluated) sooner or later when reducing $t$ to
\emph{normal form}, or, informally said, if there is no way of avoiding
$\pos{r}$ to reach a normal form.

The needed strategy, which always contracts a needed redex, is
normalising~\cite{BarendregtKKS87}, \ie if a term can be reduced (in any way)
to a normal form, then contraction of needed redexes necessarily terminates.
This is an excellent starting point to design an evaluation strategy, but
unfortunately, neededness of a redex is not decidable~\cite{BarendregtKKS87}.
As a consequence, real implementations of functional languages cannot be
directly based on this notion.

Our goal is, however, to establish a clear connection between the semantical
notion of neededness and different implementations of lazy functional languages
(\eg Miranda or Haskell). Such implementations are based on \emph{call-by-need
calculi}, pioneered by Wadsworth~\cite{Wadsworth:thesis}, and extensively
studied \eg in~\cite{AriolaFMOW95}. Indeed, call-by-need calculi fill the gap
between the well-known operational semantics of the call-by-name
$\lambda$-calculus and the actual implementations of lazy functional languages.
While call-by-name re-evaluates an argument each time it is used --an operation
which is quite expensive-- call-by-need can be seen as a \emph{memoized}
version of call-by-name, where the value of an argument is stored the first
time it is evaluated for subsequent uses.  For example, if $t =
\appterm{\Delta}{(\appterm{I}{I})}$, where $\Delta =
\absterm{x}{\appterm{x}{x}}$ and $I = \absterm{z}{z}$, then call-by-name
duplicates the argument $\appterm{I}{I}$, while lazy languages first reduce
$\appterm{I}{I}$ to the value $I$ so that further uses of this argument do not
need to evaluate it again.

While the notion of needed reduction is defined with respect to (full strong)
\emph{normal forms}, call-by-need calculi evaluate programs to special values
called \emph{weak-head normal forms}, which are either abstractions or
arbitrary applications headed by a variable (\ie terms of the form
$\appterm{\appterm{x}{t_1}\ldots}{t_n}$ where $t_1 \ldots t_n$ are arbitrary
terms). To overcome this shortfall, we first adapt the notion of needed redex
to terms that are not going to be fully reduced to \emph{normal forms} but only
to \emph{weak-head normal forms}. Thus, informally, a redex $\pos{r}$ is
\emph{weak-head needed} in a term $t$ if $\pos{r}$ has to be contracted sooner
or later when reducing $t$ to a weak-head normal form. The derived notion of
strategy is called a \emph{weak-head needed strategy}, which always contracts a
weak-head needed redex.

This paper introduces two independent results about weak-head neededness, both
obtained by means of (non-idempotent) intersection
types~\cite{Gardner94,Carvalho:thesis} (a survey can be found
in~\cite{BucciarelliKV17}). We consider, in particular, typing system
$\V$~\cite{Kesner16} and show that it allows to identify all the weak-head
needed redexes of a weak-head normalising term. This is done by
adapting the classical notion of \emph{principal type}~\cite{Rocca88} and
proving that a redex in a weak-head normalising term $t$ is weak-head needed
iff it is typed in a principally typed derivation for $t$ in $\V$.

Our second goal is to show observational equivalence between call-by-need and
weak-head needed reduction. Two terms are observationally equivalent when all
the empirically testable computations on them are identical. This means that a
term $t$ can be evaluated to a weak-head normal form using the call-by-need
machinery if and only if the weak-head needed reduction normalises $t$.

By means of system $\V$ mentioned so far we use a technique to reason about
observational equivalence that is flexible, general and easy to verify or even
certify. Indeed, system $\V$ provides a semantic argument: first showing that a
term $t$ is typable in system $\V$ iff it is normalising for the weak-head
needed strategy ($t \in \WN{\weaknd}$), then by resorting to some results
in~\cite{Kesner16}, showing that system $\V$ is complete for call-by-name, \ie
a term $t$ is typable in system $\V$ iff $t$ is normalising for call-by-name
($t \in \WN{\callbyname}$); and that $t$ is normalising for call-by-name iff
$t$ is normalising for call-by-need ($t \in \WN{\callbyneed}$). Thus completing
the following chain of equivalences:
\vspace{-0.5em}
\begin{center}
\begin{tikzpicture}[xscale=1.5, yscale=1.1, baseline=(call-by-need)]
  \node(call-by-need) at(0,0) {$t \in \WN{\weaknd}$};
  \node(name) at(2,0){$t$ typable in $\V$};
  \node(types) at(4,0){$t \in \WN{\callbyname}$};
  \node(weakheadneeded) at(6,0){$t  \in \WN{\callbyneed}$};
  \draw[implies-implies,double equal sign distance] (call-by-need) to (name) node[pos=.5,right]{\scriptsize $ $};
  \draw[implies-implies,double equal sign distance] (name) to (types) node[pos=.5,right]{\scriptsize $ $};
  \draw[implies-implies,double equal sign distance] (types) to (weakheadneeded) node[pos=.5,right]{\scriptsize $ $};
\end{tikzpicture}
\label{fig:main-picture}
\end{center}
\vspace{-0.5em}
This leads to the observational equivalence between call-by-need, call-by-name
and weak-head needed reduction.

\emph{Structure of the paper}:
Sec.~\ref{sec:preliminaries} introduces preliminary concepts while
Sec.~\ref{sec:needed-notions} defines different notions of needed reduction.
The type system $\V$ is studied in Sec.~\ref{sec:systemV}.
Sec~\ref{sec:allowable} extends $\beta$-reduction to derivation trees. We show
in Sec.~\ref{sec:redexes} how system $\V$ identifies weak-head needed redexes,
while Sec.~\ref{sec:charact} gives a characterisation of normalisation for the
weak-head needed reduction. Sec.~\ref{sec:call-by-need} is devoted to define
call-by-need. Finally, Sec.~\ref{sec:results} presents the observational
equivalence result.

\section{Preliminaries}
\label{sec:preliminaries}

This section introduces some standard definitions and notions concerning the
reduction strategies studied in this paper, that is, call-by-name, head and
weak-head reduction, and neededness, this later notion being based on the
\emph{theory of residuals}~\cite{Barendregt84}.

\subsection{The Call-By-Name Lambda-Calculus}
\label{sec:callbyname}

Given a countable infinite set $\TermVariable$ of variables $x, y, z, \ldots$
we consider the following grammar:
\begin{center}
\begin{tabular}{rrcll}
\textbf{(Terms)}         & $t,u$      & $\Coloneq$ & $x \in \TermVariable \mid \appterm{t}{u} \mid \absterm{x}{t}$ \\
\textbf{(Values)}        & $v$        & $\Coloneq$ & $\absterm{x}{t}$ \\
\textbf{(Contexts)}      & $\ctxt{C}$ & $\Coloneq$ & $\Box \mid \appterm{\ctxt{C}}{t} \mid \appterm{t}{\ctxt{C}} \mid \absterm{x}{\ctxt{C}}$ \\
\textbf{(Name contexts)} & $\ctxt{E}$ & $\Coloneq$ & $\Box \mid \appterm{\ctxt{E}}{t}$
\end{tabular}
\end{center}

The set of $\lambda$-terms is denoted by $\Term$. We use $I$, $K$ and $\Omega$
to denote the terms $\absterm{x}{x}$, $\absterm{x}{\absterm{y}{x}}$ and
$\appterm{(\absterm{x}{\appterm{x}{x}})}{(\absterm{x}{\appterm{x}{x}})}$
respectively. We use $\ctxtapp{\ctxt{C}}{t}$ (resp. $\ctxtapp{\ctxt{E}}{t}$)
for the term obtained by replacing the hole $\Box$ of $\ctxt{C}$ (resp.
$\ctxt{E}$) by $t$. The sets of \emphdef{free} and \emphdef{bound variables} of
a term $t$, written respectively $\fv{t}$ and $\bv{t}$, are defined as
usual~\cite{Barendregt84}. We work with the standard notion of
\emphdef{$\alpha$-conversion}, \ie renaming of bound variables for
abstractions; thus for example $\absterm{x}{\appterm{x}{y}} \eqalpha
\absterm{z}{\appterm{z}{y}}$.

A term of the form $\appterm{(\absterm{x}{t})}{u}$ is called a
\emphdef{$\beta$-redex} (or just \emphdef{redex} when $\beta$ is clear from
the context) and $\lambda x$ is called the \emphdef{anchor} of the redex.
The \emphdef{one-step reduction relation} $\rewrite{\beta}$ (resp.
$\rewrite{\callbyname}$) is given by the closure by contexts $\ctxt{C}$
(resp. $\ctxt{E}$) of the rewriting rule $\appterm{(\absterm{x}{t})}{u} \rrule{\beta}
\substitute{x}{u}{t}$, where $\substitute{\_}{\_}{\_}$ denotes
the capture-free standard higher-order substitution. Thus, call-by-name
forbids reduction inside arguments and $\lambda$-abstractions, \eg
$\appterm{(\absterm{x}{II})}{(II)} \rewrite{\beta}
\appterm{(\absterm{x}{II})}{I}$ and $\appterm{(\absterm{x}{II})}{(II)} \rewrite{\beta}
\appterm{(\absterm{x}{I})}{(II)}$
but neither $\appterm{(\absterm{x}{II})}{(II)} \rewrite{\callbyname}
\appterm{(\absterm{x}{II})}{I}$ nor $\appterm{(\absterm{x}{II})}{(II)} \rewrite{\callbyname}
\appterm{(\absterm{x}{I})}{(II)}$ holds. We write $\rewriten{\beta}$
(resp. $\rewriten{\callbyname}$) for the reflexive-transitive closure of
$\rewrite{\beta}$ (resp. $\rewrite{\callbyname}$).

\subsection{Head, Weak-Head and Leftmost Reductions}
\label{sec:occurrences}

In order to introduce different notions of reduction, we start by formalising
the general mechanism of reduction which consists in contracting a redex at
some specific occurrence. \emphdef{Occurrences} are finite words over the
alphabet $\set{\pos{0}, \pos{1}}$. We use $\rootpos$ to denote the empty word
and notation $\pos{a^n}$ for $\pos{n} \in \mathbb{N}$ concatenations of some
letter $\pos{a}$ of the alphabet. The set of \emphdef{occurrences} of a given
term is defined by induction as follows: $\oc{x} \eqdef \set{\rootpos}$;
$\oc{\appterm{t}{u}} \eqdef \set{\rootpos} \cup \set{\pos{0p} \mid \pos{p} \in
\oc{t}} \cup \set{\pos{1p} \mid \pos{p} \in \oc{u}}$; $\oc{\absterm{x}{t}}
\eqdef \set{\rootpos} \cup \set{\pos{0p} \mid \pos{p} \in \oc{t}}$.

Given two occurrences $\pos{p}$ and $\pos{q}$, we use the notation $\pos{p}
\prefix \pos{q}$ to mean that $\pos{p}$ is a \emphdef{prefix} of $\pos{q}$,
\ie there is $\pos{p'}$ such that $\pos{p} \pos{p'} = \pos{q}$. We denote by
$\termat{t}{\pos{p}}$ the \emphdef{subterm of $t$ at occurrence $\pos{p}$},
defined as expected~\cite{BaaderN98}, thus for example
$\termat{(\appterm{(\absterm{x}{y})}{z})}{\pos{00}} = y$. The set of
\emphdef{redex occurrences} of $t$ is defined by $\roc{t} \eqdef \set{\pos{p}
\in \oc{t} \mid \termat{t}{\pos{p}} = \appterm{(\absterm{x}{s})}{u}}$. We use
the notation $\reduction{\pos{r}}{t}{t'}{\beta}$ to mean that $\pos{r} \in
\roc{t}$ and $t$ reduces to $t'$ by \emphdef{contracting} the redex at
occurrence $\pos{r}$, \eg
$\reduction{\pos{000}}{\appterm{(\absterm{x}{\appterm{\appterm{(\absterm{y}{y})}{x}}{x}})}{z}}{\appterm{(\absterm{x}{\appterm{x}{x}})}{z}}{\beta}$.
This notion is extended to reduction sequences as expected, and noted
$\reductionn{\rho}{t}{t'}{\beta}$, where $\rho$ is the list of all the redex
occurrences contracted along the reduction sequence. We use $\emptyred$ to
denote the empty reduction sequence, so that
$\reductionn{\emptyred}{t}{t}{\beta}$ holds for every term $t$.

Any term $t$ has exactly one of the following forms:
$\absterm{x_1}{\dots\absterm{x_n}{\appterm{y}{\appterm{t_1}{\appterm{\ldots}{t_m}}}}}$
or
$\absterm{x_1}{\dots\absterm{x_n}{\appterm{\appterm{(\absterm{y}{s})}{u}}{\appterm{t_1}{\appterm{\ldots}{t_m}}}}}$
with $n, m \geq 0$. In the latter case we say that
$\appterm{(\absterm{y}{s})}{u}$ is the \emphdef{head redex} of $t$, while in
the former case there is no head redex. Moreover, if $n = 0$, we say that
$\appterm{(\absterm{y}{s})}{u}$ is the \emphdef{weak-head redex} of $t$. In
terms of occurrences, the \emph{head redex} of $t$ is the \emph{minimal}
redex occurrence of the form $\pos{0^n}$ with $\pos{n} \geq 0$. In particular,
if it satisfies that $\termat{t}{\pos{0^k}}$ is not an abstraction for every
$\pos{k} \leq \pos{n}$, it is the \emph{weak-head redex} of $t$. A reduction
sequence contracting at each step the head redex (resp. weak-head redex) of the
corresponding term is called the \emphdef{head reduction} (resp.
\emphdef{weak-head reduction}).

Given two redex occurrences $\pos{r}, \pos{r'} \in \roc{t}$, we say that
$\pos{r}$ is \emphdef{to-the-left of} $ \pos{r'}$ if the anchor of $\pos{r}$ is
to the left of the anchor of $\pos{r'}$. Thus for example, the redex occurrence
$\pos{0}$ is to-the-left of $\pos{1}$ in the term
$\appterm{(\appterm{I}{x})}{(\appterm{I}{y})}$, and $\pos{\rootpos}$ is
to-the-left of $\pos{00}$ in $\appterm{(\absterm{x}{(\appterm{I}{I})})}{z}$.
Alternatively, the relation \emph{to-the-left} can be understood as a
dictionary order between redex occurrences, \ie $\pos{r}$ is
\emph{to-the-left of} $\pos{r'}$ if either $\pos{r' = rq}$ with $\pos{q} \neq
\rootpos$ (\ie $\pos{r}$ is a proper prefix of $\pos{r'}$); or $\pos{r} =
\pos{p0q}$ and $\pos{r'} = \pos{p1q'}$ (\ie they share a common prefix and
$\pos{r}$ is on the left-hand side of an application while $\pos{r'}$ is on the
right-hand side).  Notice that in any case this implies $\pos{r'} \not\prefix
\pos{r}$. Since this notion defines a total order on redexes, every term not in
normal form has a unique \emphdef{leftmost redex}. The term $t$
\emphdef{leftmost reduces} to $t'$ if $t$ reduces to $t'$ and the reduction
step contracts the leftmost redex of $t$. For example,
$\appterm{(\appterm{I}{x})}{(\appterm{I}{y})}$ leftmost reduces to
$\appterm{x}{(\appterm{I}{y})}$ and
$\appterm{(\absterm{x}{(\appterm{I}{I})})}{z}$ leftmost reduces to
$\appterm{I}{I}$. This notion extends to reduction sequences as expected.

\section{Towards neededness}
\label{sec:needed-notions}

Needed reduction is based on two fundamental notions: that of residual, which
describes how a given redex is traced all along a reduction sequence, and that
of normal form, which gives the form of the expected result of the reduction
sequence. This section extends the standard notion of needed
reduction~\cite{BarendregtKKS87} to those of head and weak-head needed
reductions.

\subsection{Residuals}
\label{sec:residuals}

Given a term $t$, $\pos{p} \in \oc{t}$ and $\pos{r} \in \roc{t}$, the
\emphdef{descendants of $\pos{p}$ after $\pos{r}$ in $t$}, written
$\residuals{\pos{p}}{\pos{r}}$, is the set of \emph{occurrences} defined as
follows: $$
\begin{array}{rl}
\varnothing
  & \text{if $\pos{p} = \pos{r}$ or $\pos{p} = \pos{r0}$} \\
\set{\pos{p}}
  & \text{if $\pos{r} \not\prefix \pos{p}$} \\
\set{\pos{rq}}
  & \text{if $\pos{p} = \pos{r00q}$} \\
\set{\pos{rkq} \mid \termat{s}{\pos{k}} = x}
  & \text{if $\pos{p} = \pos{r1q}$ with $\termat{t}{\pos{r}} = \appterm{(\absterm{x}{s})}{u}$}
\end{array} $$
For instance, given $t =
\appterm{(\absterm{x}{\appterm{(\absterm{y}{x})}{x}})}{z}$, then $\oc{t} =
\set{\rootpos, \pos{0}, \pos{1}, \pos{00}, \pos{000}, \pos{001}, \pos{0000}}$,
$\roc{t} = \set{\rootpos, \pos{00}}$, $\residuals{\pos{00}}{\pos{00}} =
\varnothing$, $\residuals{\rootpos}{\pos{00}} = \set{\rootpos}$,
$\residuals{\pos{00}}{\rootpos} = \set{\rootpos}$ and
$\residuals{\pos{1}}{\rootpos} = \set{\pos{1}, \pos{00}}$.

Notice that $\residuals{\pos{p}}{\pos{r}} \subseteq \oc{t'}$ where
$\reduction{\pos{r}}{t}{t'}{\beta}$. Furthermore, if $\pos{p}$ is the
occurrence of a redex in $t$ (\ie $\pos{p} \in \roc{t}$), then
$\residuals{\pos{p}}{\pos{r}} \subseteq \roc{t'}$, and each position in
$\residuals{\pos{p}}{\pos{r}}$ is called a \emphdef{residual} of $\pos{p}$
after reducing $\pos{r}$. This notion is extended to sets of redex
occurrences, indeed, the \emphdef{residuals of $\posset{P}$ after $\pos{r}$
in $t$} are $\residuals{\posset{P}}{\pos{r}} \eqdef \bigcup_{\pos{p} \in
\posset{P}}{\residuals{\pos{p}}{\pos{r}}}$. In particular
$\residuals{\varnothing}{\pos{r}} = \varnothing$. Given
$\reductionn{\rho}{t}{t'}{\beta}$ and $\posset{P} \subseteq \roc{t}$,
the \emphdef{residuals of $\posset{P}$ after the sequence $\rho$} are:
$\residuals{\posset{P}}{\emptyred} \eqdef \posset{P}$ and
$\residuals{\posset{P}}{\pos{r}\rho'} \eqdef
\residuals{(\residuals{\posset{P}}{\pos{r}})}{\rho'}$.

Stability of the to-the-left relation makes use of the notion of residual:

\begin{lemma}
\label{lem:leftPreservation}
Given a term $t$, let $\pos{l}, \pos{r}, \pos{s} \in \roc{t}$ such that
$\pos{l}$ is to-the-left of $\pos{r}$, $\pos{s} \not\prefix \pos{l}$ and
$\reduction{\pos{s}}{t}{t'}{\beta}$. Then, $\pos{l} \in \roc{t'}$ and
$\pos{l}$ is to-the-left of $\pos{r'}$ for every $\pos{r'} \in
\residuals{\pos{r}}{\pos{s}}$.
\end{lemma}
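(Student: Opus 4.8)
The plan is to combine the clause-by-clause definition of the descendant function $\residuals{\pos{\cdot}}{\pos{s}}$ with the dictionary-order characterisation of the \emph{to-the-left} relation recalled in Sec.~\ref{sec:occurrences}.

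First I would settle the claim $\pos{l} \in \roc{t'}$. Since $\pos{s} \not\prefix \pos{l}$, the descendant function falls into its second clause, so $\residuals{\pos{l}}{\pos{s}} = \set{\pos{l}}$. As $\pos{l} \in \roc{t}$, the remark following the definition of residuals gives $\residuals{\pos{l}}{\pos{s}} \subseteq \roc{t'}$, hence $\pos{l} \in \roc{t'}$. This part is routine.

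For the \emph{to-the-left} statement I would reason by cases on the position of $\pos{s}$ relative to $\pos{r}$, following the clauses defining $\residuals{\pos{r}}{\pos{s}}$. If $\pos{r} = \pos{s}$ (the case $\pos{r} = \pos{s0}$ cannot occur, since $\termat{t}{\pos{s0}}$ is an abstraction and hence not in $\roc{t}$), then $\residuals{\pos{r}}{\pos{s}} = \varnothing$ and the conclusion holds vacuously. If $\pos{s} \not\prefix \pos{r}$, then $\residuals{\pos{r}}{\pos{s}} = \set{\pos{r}}$ and the conclusion is exactly the hypothesis that $\pos{l}$ is to-the-left of $\pos{r}$. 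The two remaining clauses are the substantial ones: $\pos{r} = \pos{s00q}$, where $\residuals{\pos{r}}{\pos{s}} = \set{\pos{sq}}$; and $\pos{r} = \pos{s1q}$ with $\termat{t}{\pos{s}} = \appterm{(\absterm{x}{w})}{u}$, where $\residuals{\pos{r}}{\pos{s}} = \set{\pos{skq} \mid \termat{w}{\pos{k}} = x}$. The key uniform observation is that in both cases every residual $\pos{r'}$ has $\pos{s}$ as a prefix, i.e. $\pos{s} \prefix \pos{r'}$.

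It then suffices to show: if $\pos{l}$ is to-the-left of $\pos{r}$, $\pos{s} \prefix \pos{r}$ and $\pos{s} \not\prefix \pos{l}$, then $\pos{l}$ is to-the-left of every $\pos{r'}$ with $\pos{s} \prefix \pos{r'}$. I would split on the two ways $\pos{l}$ can be to-the-left of $\pos{r}$. If $\pos{l}$ is a proper prefix of $\pos{r}$, then $\pos{l}$ and $\pos{s}$ are both prefixes of $\pos{r}$, hence comparable; since $\pos{s} \not\prefix \pos{l}$, we obtain that $\pos{l}$ is a proper prefix of $\pos{s}$, and therefore of $\pos{r'}$, so $\pos{l}$ is to-the-left of $\pos{r'}$ by the first clause of the dictionary order. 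Otherwise $\pos{l} = \pos{p0a}$ and $\pos{r} = \pos{p1b}$ for a common prefix $\pos{p}$; here $\pos{s} \not\prefix \pos{l}$ rules out both $\pos{s} \prefix \pos{p}$ and $\pos{s} = \pos{p}$ (either would give $\pos{s} \prefix \pos{l}$), so $\pos{s}$, being a prefix of $\pos{p1b}$, must start with $\pos{p1}$; consequently every $\pos{r'}$ with $\pos{s} \prefix \pos{r'}$ also starts with $\pos{p1}$, while $\pos{l}$ starts with $\pos{p0}$, and the second clause of the dictionary order yields $\pos{l}$ to-the-left of $\pos{r'}$. The only mild obstacle is the bookkeeping in the substitution clause $\pos{r} = \pos{s1q}$, where the residual set may contain several occurrences $\pos{skq}$; but since they all share the prefix $\pos{s}$, the argument above applies uniformly to each of them, so no extra work is required.
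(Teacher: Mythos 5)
Your proof is correct and takes essentially the same route as the paper's: after settling $\pos{l} \in \roc{t'}$ via $\residuals{\pos{l}}{\pos{s}} = \set{\pos{l}}$, both arguments reduce to the observation that whenever $\pos{s} \prefix \pos{r}$ every residual $\pos{r'}$ retains $\pos{s}$ as a prefix, and then split on the two dictionary-order cases of \emph{to-the-left} to show that the common-prefix structure (either $\pos{l}$ a proper prefix, or divergence at $\pos{p0}$ versus $\pos{p1}$) is inherited by each $\pos{r'}$. Your explicit treatment of the empty-residual clauses and of why $\pos{s}$ must begin with $\pos{p1}$ merely spells out details the paper leaves implicit.
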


\begin{proof}
First, notice that $\pos{s} \not\prefix \pos{l}$ implies
$\residuals{\pos{l}}{\pos{s}} = \set{\pos{l}}$, then it is immediate to see
that $\pos{l} \in \roc{t'}$. If $\pos{s} \not\prefix \pos{r}$, then
$\residuals{\pos{r}}{\pos{s}} = \set{\pos{r}}$ and the result holds
immediately. Otherwise, $\pos{s} \prefix \pos{r}$ implies $\pos{s} \prefix
\pos{r'}$ for every $\pos{r'} \in \residuals{\pos{r}}{\pos{s}}$ by definition.
From $\pos{l}$ is to-the-left of $\pos{r}$, we may distinguish two cases:
either $\pos{r = lq}$ with $\pos{q} \neq \rootpos$; or $\pos{r} = \pos{p1q}$
and $\pos{l} = \pos{p0q'}$. Both cases imply that $\pos{l}$ and $\pos{r}$ share
a common prefix, say $\pos{p'}$ (resp. $\pos{l}$ or $\pos{p}$ on each case).
From $\pos{s} \not\prefix \pos{l}$ and $\pos{s} \prefix \pos{r}$ we know that
$\pos{p'}$ is a proper prefix of $\pos{s}$. Thus, it is a proper prefix of
$\pos{r'}$ too, \ie either $\pos{r'} = \pos{lq''}$ with $\pos{q''} \neq
\rootpos$; or $\pos{r'} = \pos{p1q''}$. Hence, $\pos{l}$ is to-the-left of
$\pos{r'}$ for every $\pos{r'} \in \residuals{\pos{r}}{\pos{s}}$.
\end{proof}

Notice that this result does not only implies that the leftmost redex is
preserved by reduction of other redexes, but also that the residual of the
leftmost redex occurs in exactly the same occurrence as the original one.

\begin{corollary}
\label{cor:leftmost}
Given a term $t$, and $\pos{l} \in \roc{t}$ the leftmost redex of $t$, if the
reduction $\reductionn{\rho}{t}{t'}{\beta}$ contracts neither $\pos{l}$ nor any
of its residuals, then $\pos{l} \in \roc{t'}$ is the leftmost redex of $t'$.
\end{corollary}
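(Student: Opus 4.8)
The plan is to derive the corollary from Lemma~\ref{lem:leftPreservation} by induction on the length of the reduction sequence $\rho$, since the lemma handles exactly the single-step case and the corollary is its iterated version. The key observation linking the two is that the hypothesis ``$\rho$ contracts neither $\pos{l}$ nor any of its residuals'' is precisely the iterated form of the side condition $\pos{s}\not\prefix\pos{l}$ appearing in the lemma, together with the remark following the lemma that the residual of the leftmost redex stays at the \emph{same} occurrence.

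First I would set up the induction on the number of steps in $\reductionn{\rho}{t}{t'}{\beta}$. The base case $\rho=\emptyred$ is trivial: $t'=t$ and the leftmost redex is unchanged. For the inductive step, write $\rho=\pos{s}\rho'$ with $\reduction{\pos{s}}{t}{t_1}{\beta}$ and $\reductionn{\rho'}{t_1}{t'}{\beta}$. Since $\rho$ contracts neither $\pos{l}$ nor any of its residuals, in particular the first step does not contract $\pos{l}$, so $\pos{s}\neq\pos{l}$; I would need to argue this gives $\pos{s}\not\prefix\pos{l}$. Here I use that $\pos{l}$ is the \emph{leftmost} redex of $t$: for any other redex occurrence $\pos{s}$, the relation ``$\pos{l}$ to-the-left of $\pos{s}$'' holds, and the closing remark of the to-the-left definition in the excerpt notes that this implies $\pos{s}\not\prefix\pos{l}$. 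With $\pos{s}\not\prefix\pos{l}$ in hand, I apply Lemma~\ref{lem:leftPreservation} (taking $\pos{r}$ to range over every redex occurrence of $t$ to the right of $\pos{l}$) to conclude that $\pos{l}\in\roc{t_1}$ and that $\pos{l}$ is still to-the-left of every residual $\pos{r'}$ of such redexes; since the only redexes of $t_1$ are residuals of redexes of $t$, this makes $\pos{l}$ the leftmost redex of $t_1$.

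To invoke the induction hypothesis on $\reductionn{\rho'}{t_1}{t'}{\beta}$, I must check that $\rho'$ contracts neither $\pos{l}$ nor any of its residuals \emph{in the reduction starting from $t_1$}. This is where the remark after the lemma is essential: because the residual of the leftmost redex occurs at exactly the same occurrence $\pos{l}$, the notion ``residual of $\pos{l}$ along $\rho'$'' computed from $t_1$ coincides with the tail of the corresponding residual set computed from $t$ along $\rho$. Hence the standing hypothesis for $\rho$ restricts to the required hypothesis for $\rho'$, and the induction hypothesis yields that $\pos{l}$ is the leftmost redex of $t'$.

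The main obstacle I anticipate is the bookkeeping of residuals across multiple steps, specifically the claim that the residual of $\pos{l}$ is a singleton fixed at occurrence $\pos{l}$ throughout. For a general redex a residual set may have several elements, so the clean ``same occurrence'' behaviour is special to the leftmost redex and relies on $\pos{s}\not\prefix\pos{l}$ forcing $\residuals{\pos{l}}{\pos{s}}=\set{\pos{l}}$ by the first residual clause; I would make sure to state this singleton-preservation explicitly as the glue that lets the single-step lemma compose into the multi-step statement. The to-the-left comparisons themselves are handled entirely by the lemma, so no new combinatorial argument on occurrences is needed beyond what the lemma already provides.
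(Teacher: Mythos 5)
Your overall strategy---induction on the length of $\rho$, applying Lemma~\ref{lem:leftPreservation} at each step, deriving $\pos{s}\not\prefix\pos{l}$ from leftmostness of $\pos{l}$, and using the singleton residual $\residuals{\pos{l}}{\pos{s}}=\set{\pos{l}}$ as the glue between steps---is exactly the paper's proof, which is stated in one line (induction on the length of $\rho$ using Lemma~\ref{lem:leftPreservation}). However, your justification of the inductive step rests on a false claim: ``the only redexes of $t_1$ are residuals of redexes of $t$.'' In the $\lambda$-calculus, contraction can \emph{create} redexes that are residuals of nothing. For instance, in $t=\appterm{(\appterm{I}{w})}{(\appterm{(\appterm{I}{(\absterm{y}{y})})}{z})}$ we have $\roc{t}=\set{\pos{0},\pos{10}}$; contracting $\pos{s}=\pos{10}$ yields $t_1=\appterm{(\appterm{I}{w})}{(\appterm{(\absterm{y}{y})}{z})}$, whose redex at occurrence $\pos{1}$ is not a residual of any redex of $t$ (the residuals of $\pos{0}$ and $\pos{10}$ after $\pos{s}$ are $\set{\pos{0}}$ and $\varnothing$ respectively). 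Since Lemma~\ref{lem:leftPreservation} speaks only about residuals, it does not by itself give that $\pos{l}$ is leftmost in $t_1$; ruling out created redexes to the left of $\pos{l}$ is the real content of the step, and your argument fails exactly there.

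The gap is repairable, because a created redex can never appear to the left of $\pos{l}$. One way to see it: suppose $\pos{c}\in\roc{t_1}$ were to-the-left of $\pos{l}$. Since $\pos{l}$ is to-the-left of $\pos{s}$ and the to-the-left relation is a (strict) total order, $\pos{c}$ is to-the-left of $\pos{s}$, hence $\pos{s}\not\prefix\pos{c}$; moreover $\pos{c}$ cannot be the parent occurrence of $\pos{s}$ (\ie $\pos{c0}=\pos{s}$), since a short case analysis on the definition of to-the-left shows that this parent is never to the left of $\pos{l}$. But for any occurrence $\pos{c}$ with $\pos{s}\not\prefix\pos{c}$ and $\pos{c0}\neq\pos{s}$, the symbols at $\pos{c}$ and $\pos{c0}$ are untouched by contracting $\pos{s}$, so $\pos{c}\in\roc{t_1}$ iff $\pos{c}\in\roc{t}$---contradicting that $\pos{l}$ is leftmost in $t$. (Equivalently, one can invoke the classification of redex creation: created redexes occur only at the parent of $\pos{s}$ or at occurrences having $\pos{s}$ as a prefix, all of which lie to the right of $\pos{l}$.) With that supplement your induction goes through verbatim; note that the paper's own one-line proof leaves this same detail implicit, but it does not assert the false residual-coverage claim on which your write-up explicitly relies.
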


\begin{proof}
By induction on the length of $\rho$ using Lem.~\ref{lem:leftPreservation}.
\end{proof}



\subsection{Notions of Normal Form}
\label{sec:nf}

The expected result of evaluating a program is specified by means of some
appropriate notion of normal form. Given any relation $\rewrite{\R}$, a term
$t$ is said to be in \emphdef{$\R$-normal form} ($\NF{\R}$) iff there is no
$t'$ such that $t \rewrite{\R} t'$. A term $t$ is \emphdef{$\R$-normalising}
($\WN{\R}$) iff there exists $u \in \NF{\R}$ such that $t \rewriten{\R} u$.
Thus, given an $\R$-normalising term $t$, we can define the set of $\R$-normal
forms of $t$ as $\nf{\R}{t} \eqdef \set{t' \mid t \rewriten{\R} t' \land t' \in
\NF{\R}}$.

In particular, it turns out that a term in \emphdef{weak-head $\beta$-normal
form} ($\WHNF{\beta}$) is of the form
$\appterm{\appterm{\appterm{x}{t_1}}{\ldots}}{t_n}$ ($n \geq 0$) or
$\absterm{x}{t}$, where $t, t_1, \ldots, t_n$ are arbitrary terms, \ie it has
no weak-head redex. The set of weak-head $\beta$-normal forms of $t$ is
$\whnf{\beta}{t} \eqdef \set{t' \mid t \rewriten{\beta} t' \land t' \in
\WHNF{\beta}}$.

Similarly, a term in \emphdef{head $\beta$-normal form} ($\HNF{\beta}$) turns
out to be of the form
$\absterm{x_1}{\ldots\absterm{x_n}{\appterm{\appterm{\appterm{x}{t_1}}{\ldots}}{t_m}}}$ ($n,m \geq 0$),
\ie it has no head redex. The set of head $\beta$-normal forms of $t$ is
given by $\hnf{\beta}{t} \eqdef \set{t' \mid t \rewriten{\beta} t' \land t' \in
\HNF{\beta}}$.

Last, any term in \emphdef{$\beta$-normal form} ($\NF{\beta}$) has the form
$\absterm{x_1}{\ldots\absterm{x_n}{\appterm{x}{t_1 \ldots t_m}}}$ ($n, m \geq
0$) where $t_1, \ldots, t_m$ are themselves in $\beta$-normal form. It is
well-known that the set $\nf{\beta}{t}$ is a singleton, so we may use it
either as a set or as its unique element.

It is worth noticing that $\NF{\beta} \subset \HNF{\beta} \subset
\WHNF{\beta}$. Indeed, the inclusions are strict, for instance
$\absterm{x}{\appterm{(\absterm{y}{y})}{z}}$ is in weak-head but not in head
$\beta$-normal form, while
$\appterm{x}{\appterm{(\appterm{(\absterm{y}{y})}{x})}{z}}$ is in head but not
in $\beta$-normal form.


\subsection{Notions of Needed Reduction}
\label{sec:needed}

The different notions of normal form considered in Sec.~\ref{sec:nf} suggest
different notions of needed reduction, besides the standard one in the
literature~\cite{BarendregtKKS87}. Indeed, consider $\pos{r} \in \roc{t}$. We
say that $\pos{r}$ is \emphdef{used} in a reduction sequence $\rho$ iff $\rho$
reduces $\pos{r}$ or some residual of $\pos{r}$. Then:
\begin{enumerate}
  \item $\pos{r}$ is \emphdef{needed} in $t$ if every reduction sequence
  from $t$ to $\beta$-normal form uses $\pos{r}$;
  \item $\pos{r}$ is \emphdef{head needed} in $t$ if every reduction sequence
from $t$ to  head $\beta$-normal form uses $\pos{r}$;
  \item $\pos{r}$ is \emphdef{weak-head needed} in $t$ if every reduction
  sequence of $t$ to  weak-head $\beta$-normal form uses $\pos{r}$.
\end{enumerate}

Notice in particular that $\nf{\beta}{t} = \varnothing$ (resp. $\hnf{\beta}{t}
= \varnothing$ or $\whnf{\beta}{t} = \varnothing$) implies every redex in $t$
is needed (resp. head needed or weak-head needed).

A \emphdef{one-step reduction $\rewrite{\beta}$} is \emphdef{needed}
(resp. \emphdef{head} or \emphdef{weak-head needed}), noted  $\rewrite{\needed}$
(resp. $\rewrite{\headnd}$ or $\rewrite{\weaknd}$), if the contracted redex is needed 
(resp. head or weak-head needed).
A \emphdef{reduction sequence $\rewriten{\beta}$} is \emphdef{needed}
(resp. \emphdef{head} or \emphdef{weak-head needed}), noted $\rewriten{\needed}$
(resp. $\rewriten{\headnd}$ or $\rewriten{\weaknd}$), if every reduction step in the
sequence is needed (resp. head or weak-head needed).

For instance, consider the reduction sequence: $$
\appterm{(\absterm{y}{\absterm{x}{\appterm{\appterm{I}{x}}{(\underline{\appterm{I}{I}}_{\pos{r_1}})}}})}{(\appterm{I}{I})}
\rewrite{\needed} \appterm{(\absterm{y}{\absterm{x}{\appterm{\underline{\appterm{I}{x}}_{\pos{r_2}}}{I}}})}{(\appterm{I}{I})}
\rewrite{\needed} \underline{\appterm{(\absterm{y}{\absterm{x}{\appterm{x}{I}}})}{(\appterm{I}{I})}}_{\pos{r_3}}
\hspace{-.5em}\rewrite{\needed} \absterm{x}{\appterm{x}{I}} $$
which is needed but not head needed, since redex $\pos{r_1}$ might not be contracted to
reach a head normal form: $$
\appterm{(\absterm{y}{\absterm{x}{\appterm{\underline{\appterm{I}{x}}_{\pos{r_2}}}{(\appterm{I}{I})}}})}{(\appterm{I}{I})}
\rewrite{\headnd} \underline{\appterm{(\absterm{y}{\absterm{x}{\appterm{x}{(\appterm{I}{I})}}})}{(\appterm{I}{I})}}_{\pos{r_3}}
\hspace{-.5em}\rewrite{\headnd} \absterm{x}{\appterm{x}{(\appterm{I}{I})}} $$
Moreover, this second reduction sequence is head needed but not weak-head needed since
only redex $\pos{r_3}$ is needed to get a weak-head normal form: $$
\underline{\appterm{(\absterm{y}{\absterm{x}{\appterm{\appterm{I}{x}}{(\appterm{I}{I})}}})}{(\appterm{I}{I})}}_{\pos{r_3}}
\hspace{-.5em}\rewrite{\weaknd} \absterm{x}{\appterm{\appterm{I}{x}}{(\appterm{I}{I})}} $$

Notice that the following equalities hold: $\NF{\needed} = \NF{\beta}$,
$\NF{\headnd} = \HNF{\beta}$ and $\NF{\weaknd} = \WHNF{\beta}$.


Leftmost redexes and reduction sequences are indeed needed:

\begin{lemma}
\label{lem:leftmostNeeded}
The leftmost redex in any term not in normal form (resp. head or weak-head
normal form) is needed (resp. head or weak-head needed).
\end{lemma}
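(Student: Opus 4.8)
The plan is to argue by contradiction in each of the three cases, using Corollary~\ref{cor:leftmost} as the main engine. Fix $t$ and let $\pos{l}$ be its leftmost redex, which exists because $t$ is not in the relevant normal form. If $\pos{l}$ were not needed (resp.\ head or weak-head needed), then by definition there would be a reduction $\reductionn{\rho}{t}{u}{\beta}$ to a term $u \in \NF{\beta}$ (resp.\ $\HNF{\beta}$ or $\WHNF{\beta}$) using neither $\pos{l}$ nor any residual of it. Corollary~\ref{cor:leftmost} then guarantees that $\pos{l}$ is still a redex at the very same occurrence in $u$ (and is still leftmost there). The whole proof reduces to turning ``$u$ still contains the redex $\pos{l}$'' into a clash with ``$u$ is normal''. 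For the plain needed case this is immediate, since a $\beta$-normal form has no redex occurrence at all, so $\pos{l} \in \roc{u}$ is absurd.

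For the head case I would first note that, since $t \notin \HNF{\beta}$, the leftmost redex coincides with the head redex and hence sits at a pure occurrence $\pos{0^k}$. This identification is the one genuinely combinatorial point: the head redex is the minimal redex occurrence of shape $\pos{0^n}$, and a short case analysis on the to-the-left order shows it precedes every other redex --- any competitor either properly extends $\pos{0^k}$, or branches off through a $\pos{1}$ while the head redex stays on the $\pos{0}$-spine, while a redex strictly above on the spine would violate minimality. Since the occurrence of $\pos{l}$ is preserved, $u$ contains a redex at $\pos{0^k}$, hence a redex of shape $\pos{0^n}$, hence a head redex, contradicting $u \in \HNF{\beta}$.

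The weak-head case uses the same skeleton but is where I expect the real work. Now $t \notin \WHNF{\beta}$ forces $t$ to be an application with no leading abstraction, so $\pos{l}$ is the weak-head redex at some $\pos{0^k}$ with $\termat{t}{\pos{0^j}}$ a non-abstraction for every $j < k$. Knowing only $\pos{l} = \pos{0^k} \in \roc{u}$ is no longer enough, since a weak-head normal form may contain redexes at such occurrences (e.g.\ $\absterm{x}{\appterm{I}{I}}$). The extra ingredient I would prove is that $\rho$ cannot create a leading abstraction: every redex other than $\pos{l}$ lies strictly below $\pos{l}$ or inside a spine argument $\termat{t}{\pos{0^j 1}}$ with $j < k$ --- anything above on the spine or at the root would be to-the-left of $\pos{l}$ --- and contracting such redexes preserves the application structure at the root and all along the prefix spine. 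Hence $u$ is not an abstraction, so $u \in \WHNF{\beta}$ must be variable-headed; but then its whole $\pos{0}$-spine consists of applications ending in a variable and contains no redex of shape $\pos{0^k}$, contradicting $\pos{l} \in \roc{u}$. The main obstacle is exactly this bookkeeping: checking that preservation of the leading application spine, together with the leftmost-redex invariant of Corollary~\ref{cor:leftmost}, rules out every possible shape of weak-head normal form.
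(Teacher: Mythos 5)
Your proposal is correct and follows essentially the same route as the paper's proof: assume the leftmost redex $\pos{l}$ is not used by a reduction to (head/weak-head) normal form, invoke Corollary~\ref{cor:leftmost} to conclude that $\pos{l}$ is still the leftmost redex, at the same occurrence, in the final term, and derive a contradiction with the shape of that normal form. The only difference is one of detail: the paper compresses the final contradiction into a single remark (that the leftmost redex of a term not in head/weak-head normal form is the head/weak-head redex), whereas you additionally spell out the bookkeeping for the weak-head case --- that the reduction cannot create a leading abstraction because the application spine above $\pos{l}$ is preserved --- a point the paper's proof leaves implicit.
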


\begin{proof}
Since the existing proof~\cite{BarendregtKKS87} does not extend to weak-head
normal forms, we give here an alternative argument which does not only
cover the standard case of needed reduction but also the new ones of head and
weak-head reductions.

Let $t$ be a term not in normal form (resp. head normal form or weak-head
normal form) and let us consider $\reductionn{\rho}{t}{t'}{\beta}$ such that
$t'$ is a normal form (resp. head normal form or weak-head normal form). Assume
towards a contradiction that the leftmost redex $\pos{l}$ of $t$ is not used in
$\rho$. By Cor.~\ref{cor:leftmost} the occurrence $\pos{l}$ is still the
leftmost redex of $t'$. This leads to a contradiction with $t'$ being a normal
form (resp. head normal form or weak-head normal form), in particular because
the leftmost redex of a term not in head normal form (resp. weak-head normal
form) is necessarily the head redex (resp. weak-head redex).
\end{proof}

\begin{theorem}
\label{thm:leftmost}
Let $\pos{r} \in \roc{t}$ and $\reductionn{\rho}{t}{t'}{\beta}$ be the leftmost
reduction (resp. head reduction or weak-head reduction) starting with $t$ such
that $t' = \nf{\beta}{t}$ (resp. $t' \in \hnf{\beta}{t}$ or $t' \in
\whnf{\beta}{t}$). Then, $\pos{r}$ is needed (resp. head or weak-head needed) in
$t$ iff $\pos{r}$ is used in $\rho$.
\end{theorem}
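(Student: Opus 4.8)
The forward implication is immediate from the definitions: by hypothesis $\rho$ is a reduction from $t$ to $\nf{\beta}{t}$ (resp.\ to an element of $\hnf{\beta}{t}$ or $\whnf{\beta}{t}$), hence a normalising reduction of the relevant kind, and a needed (resp.\ head, weak-head needed) redex is by definition used by \emph{every} such reduction, in particular by $\rho$. So the real content is the converse, namely that $\pos{r}$ being used in the canonical reduction $\rho$ forces $\pos{r}$ to be needed. I would prove this converse by induction on the length $n$ of $\rho$, treating the three variants uniformly: recall from the proof of Lem.~\ref{lem:leftmostNeeded} that whenever $t$ is not in the relevant normal form its leftmost redex \emph{is} its head (resp.\ weak-head) redex, so in all three cases $\rho$ contracts the leftmost redex at each step.

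For the induction, the base case $n = 0$ is vacuous, since $\emptyred$ uses no redex. For $n \geq 1$, write the first step as $\reduction{\pos{l}}{t}{t_1}{\beta}$, where $\pos{l}$ is the leftmost redex of $t$ and is therefore needed by Lem.~\ref{lem:leftmostNeeded}, and let $\rho = \pos{l}\,\rho'$ with $\reductionn{\rho'}{t_1}{t'}{\beta}$ the canonical reduction of length $n-1$. If $\pos{r} = \pos{l}$ we are done by Lem.~\ref{lem:leftmostNeeded}. Otherwise the first step does not use $\pos{r}$, so $\rho$ using $\pos{r}$ forces $\rho'$ to use some residual $\pos{r'} \in \residuals{\pos{r}}{\pos{l}}$; by the induction hypothesis $\pos{r'}$ is needed in $t_1$, and it only remains to push neededness one step back.

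The crux is therefore the following one-step statement: if $\pos{l}$ is needed in $t$, $\reduction{\pos{l}}{t}{t_1}{\beta}$, and some $\pos{r'} \in \residuals{\pos{r}}{\pos{l}}$ is needed in $t_1$, then $\pos{r}$ is needed in $t$. To prove it I would take an arbitrary normalising reduction $\reductionn{\sigma}{t}{u}{\beta}$ (to the relevant normal form) and project it over the contraction of $\pos{l}$ using the parallel-moves / finite-developments machinery of residuals~\cite{Barendregt84}: this produces a reduction $\sigma/\pos{l}$ with $\reductionn{\sigma/\pos{l}}{t_1}{w}{\beta}$, together with a reduction $u \rewriten{\beta} w$ contracting the residuals $\residuals{\pos{l}}{\sigma}$. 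Those residuals cannot touch the head (resp.\ weak-head) structure of the normal form $u$ (for $\beta$ there are no redexes in $u$ at all), so $w$ is again a normal form of the relevant kind and $\sigma/\pos{l}$ is itself normalising. Since $\pos{r'}$ is needed in $t_1$, $\sigma/\pos{l}$ uses $\pos{r'}$, hence uses a residual of $\pos{r}$; transferring this back to $\sigma$ — the co-initial reductions $\sigma$ and $\pos{l}\,(\sigma/\pos{l})$ being permutation-equivalent and therefore using the same redexes of $t$ — yields that $\sigma$ uses $\pos{r}$, as required.

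I expect this last sub-lemma to be the main obstacle, and specifically its two residual-theoretic ingredients: that projecting a normalising reduction over a single contraction again yields a normalising reduction (for the head and weak-head variants this rests on the observation that the residuals of the contracted redex survive only outside the head, resp.\ weak-head, structure of the target), and that ``being used by a reduction'' is invariant under the permutation equivalence relating $\sigma$ to $\pos{l}\,(\sigma/\pos{l})$. Both are standard consequences of finite developments and L\'evy's theory of residuals, but they are where the genuine work lies; everything else is bookkeeping with Lem.~\ref{lem:leftmostNeeded} and the definition of neededness.
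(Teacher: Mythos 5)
Your proof is correct and follows essentially the same route as the paper's, just organised as an induction. The paper's proof of the converse is not inductive: it decomposes $\rho = \rho'\pos{r'}\rho''$ at the step contracting a residual $\pos{r'} \in \residuals{\pos{r}}{\rho'}$, observes that $\pos{r'}$ is the leftmost redex of its term and hence needed by Lem.~\ref{lem:leftmostNeeded}, and then concludes with a single remark: a redex that is not needed has no needed residuals, claimed to ``follow from the definition''. Your induction on the length of $\rho$, with a one-step backward-inheritance lemma applied at each step, is precisely an unfolding of that multi-step fact. The substantive difference is one of rigour: you identify the inheritance fact as the crux and supply the residual-theoretic argument (projection over one step, preservation of the relevant class of normal forms under reduction, transfer of ``uses'' between the two sides of the projection) that the paper compresses into one sentence. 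That sentence is in fact the weakest point of the paper's proof --- the inheritance fact does not follow merely from the definition, since neededness of $\pos{r'}$ in $t_1$ and neededness of $\pos{r}$ in $t$ quantify over reductions from different terms, and relating them requires exactly the parallel-moves machinery you invoke.

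One technical slip, which matters only for the head and weak-head variants: $\sigma$ and $\pos{l}\,(\sigma/\pos{l})$ are in general \emph{not} permutation-equivalent, because $u$ may still contain residuals of $\pos{l}$ (it is only a head or weak-head normal form), so the two reductions are not co-final; what holds is $\sigma\,(\pos{l}/\sigma) \equiv \pos{l}\,(\sigma/\pos{l})$. Your argument survives with a small repair: the appended development $\pos{l}/\sigma$ contracts only residuals of $\pos{l}$, and since you are in the case $\pos{r} \neq \pos{l}$, distinct redex occurrences have disjoint residuals along any reduction, so those extra steps can never use $\pos{r}$; hence if $\pos{l}\,(\sigma/\pos{l})$ uses $\pos{r}$, then $\sigma\,(\pos{l}/\sigma)$ does, and therefore $\sigma$ itself does. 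For the full $\beta$-normal-form variant your claim is literally correct, since $u$ contains no redexes and $\pos{l}/\sigma$ is empty.
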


\begin{proof}
$\Rightarrow)$ Immediate by definition of needed (resp. head or weak-head
needed).

$\Leftarrow)$ Let $\rho = \rho'\pos{r'}\rho''$ with $\pos{r'} \in
\residuals{\pos{r}}{\rho'}$. By hypothesis $\pos{r'}$ is the leftmost redex of
its corresponding term. By Lem.~\ref{lem:leftmostNeeded}, $\pos{r'}$ is needed
(resp. head or weak-head needed). Notice that, given a redex $\pos{s}$ not
needed in $t$, it follows from the definition that no residual of $\pos{s}$ is
needed either. Therefore, $\pos{r'}$ needed (resp. head or weak-head needed)
implies $\pos{r}$ needed (resp. head or weak-head needed) as well.
\end{proof}

Notice that the weak-head reduction is a prefix of the head reduction,
which is in turn a prefix of the leftmost reduction to normal form.
As a consequence, it is immediate to see that every weak-head needed
redex is in particular head needed, and every head needed redex is
needed as well.  For example, consider:
$$\underline{\appterm{(\absterm{y}{\absterm{x}{\appterm{\overline{\appterm{I}{x}}^\pos{r_2}}{(\overline{\appterm{I}{I}}^\pos{r_3})}}})}{(\overline{\appterm{I}{I}}^\pos{r_4})}}_\pos{r_1}$$
where $\pos{r_3}$ is a needed redex but not head needed nor
weak-head needed. However, $\pos{r_2}$ is both needed and
head needed, while $\pos{r_1}$ is the only weak-head needed redex in
the term, and $\pos{r_4}$ is not needed at all.


\section{The Type System $\V$}
\label{sec:systemV}

In this section we recall the (non-idempotent) intersection type system
$\V$~\cite{Kesner16} --an extension of those
in~\cite{Gardner94,Carvalho:thesis}-- used here to characterise normalising
terms w.r.t. the weak-head strategy. More precisely, we show that $t$ is
typable in system $\V$ if and only if $t$ is normalising when only weak-head
needed redexes are contracted. This characterisation is used in
Sec.~\ref{sec:results} to conclude that the weak-head needed strategy is
observationally equivalent to the call-by-need calculus (to be introduced in
Sec.~\ref{sec:call-by-need}).

Given a constant type $\valuetype$ that denotes \emph{answers} and a countable
infinite set $\TypeVariable$ of base type variables $\alpha, \beta, \gamma,
\ldots$, we define the following sets of types:
\begin{center}
\begin{tabular}{rrcll}
\textbf{(Types)}          & $\tau,\sigma$ & $\Coloneq$ & $\valuetype \mid \alpha \in \TypeVariable \mid \functtype{\M}{\tau}$ \\
\textbf{(Multiset types)} & $\M, \N$      & $\Coloneq$ & $\intertype{\tau_i}{i \in I}$ & where $I$ is a finite set
\end{tabular}
\end{center}

The empty multiset is denoted by $\intertype{}{}$. We remark that types are
\emph{strict}~\cite{Bakel92}, \ie the right-hand sides of functional types are
never multisets. Thus, the general form of a type is
$\functtype{\M_1}{\functtype{\ldots}{\functtype{\M_n}{\tau}}}$ with $\tau$
being the constant type or a base type variable.

\emphdef{Typing contexts} (or just \emphdef{contexts}), written $\Gamma,
\Delta$, are functions from variables to multiset types, assigning the empty
multiset to all but a finite set of variables. The domain of $\Gamma$ is given
by $\dom{\Gamma} \eqdef \set{x \mid \Gamma(x) \neq \intertype{}{}}$. The
\emphdef{union of contexts}, written $\ctxtsum{\Gamma}{\Delta}{}$, is defined
by $(\ctxtsum{\Gamma}{\Delta}{})(x) \eqdef \Gamma(x) \sqcup \Delta(x)$, where
$\sqcup$ denotes multiset union. An example is $(x:\intertype{\sigma}{},
y:\intertype{\tau}{}) + (x:\intertype{\sigma}{}, z:\intertype{\tau}{}) =
(x:\intertype{\sigma, \sigma}{}, y:\intertype{\tau}{}, z:\intertype{\tau}{})$.
This notion is extended to several contexts as expected, so that $+_{i \in I}
\Gamma_i$ denotes a finite union of contexts (when $I = \varnothing$ the
notation is to be understood as the empty context). We write
$\ctxtres{\Gamma}{x}{}$ for the context  $(\ctxtres{\Gamma}{x}{})(x) =
\intertype{}{}$ and $(\ctxtres{\Gamma}{x}{})(y) = \Gamma(y)$ if $y \neq x$.

\emphdef{Type judgements} have the form $\sequ{\Gamma}{\assign{t}{\tau}}$,
where $\Gamma$ is a typing context, $t$ is a term and $\tau$ is a type. The
intersection type system $\V$ for the $\lambda$-calculus is given in
Fig.~\ref{fig:typingSchemesV}.

\vspace{-1em}
\begin{figure}[h] $$
\begin{array}{c}
\Rule{\vphantom{\Gamma}}
     {\sequ{\assign{x}{\intertype{\tau}{}}}{\assign{x}{\tau}}}
     {\ruleAxiom}
\qquad
\Rule{\sequ{\Gamma}{\assign{t}{\tau}}}
     {\sequ{\ctxtres{\Gamma}{x}{}}{\assign{\absterm{x}{t}}{\functtype{\Gamma(x)}{\tau}}}}
     {\ruleArrowI}
\\
\\
\Rule{\vphantom{\Gamma}}
     {\sequ{}{\assign{\absterm{x}{t}}{\valuetype}}}
     {\ruleValue}
\qquad
\Rule{\sequ{\Gamma}{\assign{t}{\functtype{\intertype{\sigma_i}{i \in I}}{\tau}}}
      \quad
      (\sequ{\Delta_i}{\assign{u}{\sigma_i}})_{i \in I}}{\sequ{\ctxtsum{\Gamma}{\Delta_i}{i \in I}
     }
     {\assign{\appterm{t}{u}}{\tau}}}
     {\ruleArrowE}
\end{array} $$
\caption{The non-idempotent intersection type system $\V$.}
\label{fig:typingSchemesV}
\end{figure}
\vspace{-1em}

The constant type $\valuetype$ in rule $\ruleValue$ is used to type values. The
axiom $\ruleAxiom$ is relevant (there is no weakening) and the rule
$\ruleArrowE$ is multiplicative. Note that the argument of an application is
typed $\#(I)$ times by the premises of rule $\ruleArrowE$. A particular case is
when $I = \varnothing$: the subterm $u$ occurring in the typed term
$\appterm{t}{u}$ turns out to be untyped.

A \emphdef{(type) derivation} is a tree obtained by applying the
(inductive) typing rules of system $\V$. The notation
$\derivable{}{\sequ{\Gamma}{\assign{t}{\tau}}}{\V}$ means there is a
derivation of the judgement $\sequ{\Gamma}{\assign{t}{\tau}}$ in system
$\V$. The term $t$ is typable in system $\V$, or $\V$-typable, iff $t$
is the \emphdef{subject} of some derivation, \ie iff
there are $\Gamma$ and $\tau$ such that
$\derivable{}{\sequ{\Gamma}{\assign{t}{\tau}}}{\V}$. We use the
capital Greek letters $\Phi, \Psi, \ldots$ to name type derivations,
by writing for example $\derivable{\Phi}{\sequ{\Gamma}{\assign{t}{\tau}}}{\V}$.
For short, we usually denote with $\Phi_{t}$ a derivation with subject $t$
for some type and context.
The \emphdef{size of the derivation} $\Phi$, denoted by $\size{\Phi}$,
is defined as the number of nodes of the corresponding derivation tree.
We write $\projrule{\Phi} \in
\set{\ruleAxiom, \ruleArrowI, \ruleArrowE}$ to access the last rule
applied in the derivation $\Phi$. Likewise, $\projprem{\Phi}$ is
the \emph{multiset} of proper maximal subderivations of $\Phi$. For instance,
given $$
\Rule{
  \Phi_{t} \quad \many{\Phi_{u}^{i}}{i \in I}
}{
  \sequ{\Gamma}{\assign{\appterm{t}{u}}{\tau}}
}{
  \hspace{-7.5em} \Phi = 
  \hspace{5.5em} \ruleArrowE
} $$
we have $\projrule{\Phi} = \ruleArrowE$ and $\projprem{\Phi} =
\multiset{\Phi_{t}} \sqcup \multiset{\Phi_{u}^{i} \mid i \in I}$. We also use
functions $\projctxt{\Phi}$, $\projsubj{\Phi}$ and $\projtype{\Phi}$ to access
the context, subject and type of the judgement in the root of the derivation
tree respectively. For short, we also use notation $\Phi(x)$ to denote the
type associated to the variable $x$ in the typing environment of the conclusion
of $\Phi$ (\ie $\Phi(x) \eqdef \projctxt{\Phi}(x)$).

\bigskip
Intersection type systems can usually be seen as models~\cite{CoppoD80}, \ie
typing is stable by convertibility: if $t$ is typable and $t =_{\beta} t'$,
then $t'$ is typable too. This property splits in two different statements
known as \emph{subject reduction}  and {\it subject expansion} respectively,  the first
one giving stability of typing by reduction, the second one by expansion.
In the particular case of \emph{non-idempotent types}, subject reduction
refines to {\it weighted subject-reduction}, stating that not only typability
is stable by reduction, but also that the size of type derivations is
decreasing. Moreover, this decrease is strict when reduction is performed on
special occurrences of redexes, called \emph{typed occurrences}. We now
introduce all these concepts.

Given a type derivation $\Phi$, the set $\toc{\Phi}$ of \emphdef{typed occurrences}
of $\Phi$, which is a subset of $\oc{\projsubj{\Phi}}$, is defined by induction on
the last rule of $\Phi$.
\begin{itemize}
  \item If $\projrule{\Phi} \in \set{\ruleAxiom, \ruleValue}$, then $\toc{\Phi} \eqdef \set{\rootpos}$.
  \item If $\projrule{\Phi} = \ruleArrowI$ with $\projsubj{\Phi} = \absterm{x}{t}$ and $\projprem{\Phi} = \multiset{\Phi_{t}}$, then $\toc{\Phi} \eqdef \set{\rootpos} \cup \set{\pos{0p} \mid \pos{p} \in \toc{\Phi_{t}}}$.
  \item If $\projrule{\Phi} = \ruleArrowE$ with $\projsubj{\Phi} = \appterm{t}{u}$ and $\projprem{\Phi} = \multiset{\Phi_{t}} \sqcup \multiset{\Phi_{u}^{i} \mid i \in I}$, then $\toc{\Phi} \eqdef \set{\rootpos} \cup \set{\pos{0p} \mid \pos{p} \in \toc{\Phi_{t}}} \cup (\bigcup_{i \in I}{\set{\pos{1p} \mid \pos{p} \in \toc{\Phi_{u}^{i}}}})$.
\end{itemize}

Remark that there are two kind of untyped occurrences, those inside untyped
arguments of applications, and those inside untyped bodies of abstractions.
For instance consider the following type derivations: $$
\qquad\quad
\prooftree
  \prooftree
    \Rule{}{
      \sequ{\assign{x}{\intertype{\valuetype}{}}}{\assign{x}{\valuetype}}
    }{\ruleAxiom}
  \justifies
    \sequ{\assign{x}{\intertype{\valuetype}{}}}{\assign{\absterm{y}{x}}{\functtype{\intertype{}{}}{\valuetype}}}
  \using
    \ruleArrowI
  \endprooftree
\justifies
  \sequ{}{\assign{K}{\functtype{\intertype{\valuetype}{}}{\functtype{\intertype{}{}}{\valuetype}}}}
\using
  \hspace{-13.9em} \Phi_{K} = 
  \hspace{11.2em} \ruleArrowI
\endprooftree
\qquad
\qquad
\qquad\quad
\prooftree
  \prooftree
    \Phi_{K}
    \quad
    \Rule{}{
      \sequ{}{\assign{I}{\valuetype}}
    }{\ruleValue}
  \justifies
    \sequ{}{\assign{\appterm{K}{I}}{\functtype{\intertype{}{}}{\valuetype}}}
  \using
    \hspace{-10.6em} \Phi_{KI} = 
    \hspace{7.3em} \ruleArrowE
  \endprooftree
\justifies
  \sequ{}{\assign{\appterm{\appterm{K}{I}}{\Omega}}{\valuetype}}
\using
  \hspace{-11.2em} \Phi_{KI\Omega} = 
  \hspace{7.3em} \ruleArrowE
\endprooftree $$
Then, 
$\toc{\Phi_{KI\Omega}} = \set{\rootpos, \pos{0}, \pos{00}, \pos{01}, \pos{000},
\pos{0000}} \subseteq \oc{KI\Omega}$.

\begin{remark}
\label{rem:typedLeftmost}
The weak-head redex of a typed term is always a typed occurrence.
\end{remark}

For convenience we introduce an alternative way to denote type derivations.
We refer to $\applyax{x}{\tau}$ as the result of applying $\ruleAxiom$ with
subject $x$ and type $\tau$: $$
\applyax{x}{\tau} \eqdef
\Rule{\vphantom{\Phi}}
     {\sequ{\assign{x}{\intertype{\tau}{}}}{\assign{x}{\tau}}}
     {\ruleAxiom} $$
We denote with $\applyval{x}{t}$ the result of applying $\ruleValue$
abstracting $x$ and term $t$: $$
\applyval{x}{t} \eqdef
\Rule{\vphantom{\Phi}}
     {\sequ{}{\assign{\absterm{x}{t}}{\valuetype}}}
     {\ruleValue} $$
We refer to $\applyabs{x}{\Phi_{t}}$ as the result of applying $\ruleArrowI$
with premise $\Phi_{t}$ and abstracting variable $x$: $$
\applyabs{x}{\Phi_{t}} \eqdef
\Rule{\Phi_{t}}
     {\sequ{\ctxtres{\projctxt{\Phi_{t}}}{x}{}}{\assign{\absterm{x}{t}}{\functtype{\Phi_{t}(x)}{\projtype{\Phi_{t}}}}}}
     {\ruleArrowI} $$
Likewise, we write $\applyapp{\Phi_{t}}{u}{\many{\Phi_{u}^{i}}{i \in I}}$ for
the result of applying $\ruleArrowE$ with premises $\Phi_{t}$ and
$\many{\Phi_{u}^{i}}{i \in I}$, and  argument $u$ (the argument $u$ is untyped
when $I = \varnothing$). Note that this application is valid provided that
$\projtype{\Phi_{t}} = \functtype{\intertype{\sigma_i}{i \in I}}{\tau}$ and
$\many{\projtype{\Phi_{u}^{i}} = \sigma_i}{i \in I}$. Then: $$
\applyapp{\Phi_{t}}{u}{\many{\Phi_{u}^{i}}{i \in I}} \eqdef
\Rule{\Phi_{t}
      \qquad
      \many{\Phi_{u}^{i}}{i \in I}}
     {\sequ{\ctxtsum{\projctxt{\Phi_{t}}}{\projctxt{\Phi_{u}^{i}}}{i \in I}}{\assign{\appterm{t}{u}}{\tau}}}
     {\ruleArrowE} $$

Given $\Phi$ and $\pos{p} \in \toc{\Phi}$, the \emph{multiset}
$\treeat{\Phi}{\pos{p}}$ of \emphdef{all the subderivations of $\Phi$ at
occurrence $\pos{p}$} is inductively defined as follows:
\begin{itemize}
  \item If $\pos{p} = \rootpos$, then $\treeat{\Phi}{\pos{p}} \eqdef
  \multiset{\Phi}$.
  
  \item If $\pos{p} = \pos{0p'}$, \ie $\projrule{\Phi} \in \set{\ruleArrowI,
  \ruleArrowE}$ with $\projsubj{\Phi} \in \set{\absterm{x}{t}, \appterm{t}{u}}$
  and $\Phi_{t} \in \projprem{\Phi}$. Then, $\treeat{\Phi}{\pos{p}} \eqdef
  \treeat{\Phi_{t}}{\pos{p'}}$.
  
  \item If $\pos{p} = \pos{1p'}$, \ie $\projrule{\Phi} = \ruleArrowE$ with
  $\projsubj{\Phi} = \appterm{t}{u}$ and $\projprem{\Phi} = \multiset{\Phi_{t}}
  \sqcup \multiset{\Phi_{u}^{i} \mid i \in I}$. Then, $\treeat{\Phi}{\pos{p}}
  \eqdef \bigsqcup_{i \in  I}{\treeat{\Phi_{u}^{i}}{\pos{p'}}}$ (recall
  $\sqcup$ denotes multiset union).
\end{itemize}

Given type derivations $\Phi, \many{\Psi_{i}}{i \in I}$ and a position $\pos{p}
\in \oc{\projsubj{\Phi}}$, \emphdef{replacing the subderivations of $\Phi$ at
occurrence $\pos{p}$ by $\many{\Psi_{i}}{i \in I}$}, written
$\replaceat{\Phi}{\pos{p}}{\many{\Psi_{i}}{i \in I}}$, is a type derivation
inductively defined as follows, assuming $\#(\treeat{\Phi}{\pos{p}}) = \#(I)$
and $\projsubj{\Psi_{i}} = \projsubj{\Psi_{j}}$ for every $i,j \in I$ (we call
$s$ the unique subject of all these derivations):
\begin{itemize}
  \item If $\pos{p} = \pos{\rootpos}$, then
  $\replaceat{\Phi}{\pos{p}}{\many{\Psi_{i}}{i \in I}} \eqdef \Psi_{i_0}$ with
  $I = \set{i_0}$.
  
  \item If $\pos{p} = \pos{0p'}$, either:
  \begin{itemize}
    \item $\projrule{\Phi} = \ruleArrowI$ with $\projsubj{\Phi} =
    \absterm{x}{t}$ and $\projprem{\Phi} = \multiset{\Phi_{t}}$. Then,
    $\replaceat{\Phi}{\pos{p}}{\many{\Psi_{i}}{i \in I}} \eqdef
    \applyabs{x}{\replaceat{\Phi_{t}}{\pos{p'}}{\many{\Psi_{i}}{i \in I}}}$; or
    
    \item $\projrule{\Phi} = \ruleArrowE$ with $\projsubj{\Phi} =
    \appterm{t}{u}$ and $\projprem{\Phi} = \multiset{\Phi_{t}} \sqcup
    \multiset{\Phi_{u}^{j} \mid j \in J}$. Then,
    $\replaceat{\Phi}{\pos{p}}{\many{\Psi_{i}}{i \in I}} \eqdef
    \applyapp{\replaceat{\Phi_{t}}{\pos{p'}}{\many{\Psi_{i}}{i \in
    I}}}{u}{\many{\Phi_{u}^{j}}{j \in J}}$.
  \end{itemize}
  
  \item If $\pos{p} = \pos{1p'}$, \ie $\projrule{\Phi} = \ruleArrowE$ with
  $\projsubj{\Phi} = \appterm{t}{u}$, $\projprem{\Phi} = \multiset{\Phi_{t}}
  \sqcup \multiset{\Phi_{u}^{j} \mid j \in J}$ and $I = \biguplus_{j \in
  J}{I_j}$. Then, $$\replaceat{\Phi}{\pos{p}}{\many{\Psi_{i}}{i \in I}} \eqdef
  \applyapp{\Phi_{t}}{\replaceat{u}{\pos{p'}}{s}}{\many{\replaceat{\Phi_{u}^{j}}{\pos{p'}}{\many{\Psi_{i}}{i \in I_j}}}{j \in J}}$$
  where 
$s$ is the unique subject of all the derivations $\Psi_{i}$ and
$\replaceat{r'}{\pos{q}}{r}$ denotes the replacement of the subterm
$\termat{r'}{\pos{q}}$ by $r$ in $r'$ (variable capture is allowed).
Remark that the  decomposition of $I$ into the sets $I_j$ ($j \in J$) is
non-deterministic, thus replacement turns out to be a non-deterministic
operation.
\end{itemize}


We can now state the two main properties of system $\V$, whose proofs
can be found in Sec. 7 of~\cite{BucciarelliKV17}.

\begin{theorem}[Weighted Subject Reduction]
\label{thm:subjectReductionV}
Let $\derivable{\Phi}{\sequ{\Gamma}{\assign{t}{\tau}}}{\V}$. If
$\reduction{\pos{r}}{t}{t'}{\beta}$, then there exists $\Phi'$ s.t.
$\derivable{\Phi'}{\sequ{\Gamma}{\assign{t'}{\tau}}}{\V}$. Moreover,
\begin{enumerate}
 \item If $\pos{r} \in \toc{\Phi}$, then $\size{\Phi} > \size{\Phi'}$.
 \item If $\pos{r} \notin \toc{\Phi}$, then $\size{\Phi} = \size{\Phi'}$.
\end{enumerate}
\end{theorem}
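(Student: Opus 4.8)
The plan is to reduce the statement to a \emph{substitution lemma} and then lift it through arbitrary typing derivations by structural induction on $\Phi$, guided by the redex position $\pos{r}$. The substitution lemma I would prove is: whenever $\derivable{\Phi_s}{\sequ{\ctxtsum{\Gamma}{\assign{x}{\intertype{\sigma_i}{i\in I}}}{}}{\assign{s}{\tau}}}{\V}$ (with $x\notin\dom{\Gamma}$) and $\many{\derivable{\Psi_i}{\sequ{\Delta_i}{\assign{u}{\sigma_i}}}{\V}}{i\in I}$, there is $\derivable{\Phi}{\sequ{\ctxtsum{\Gamma}{\Delta_i}{i\in I}}{\assign{\substitute{x}{u}{s}}{\tau}}}{\V}$ with
\[
\size{\Phi}=\size{\Phi_s}+\textstyle\sum_{i\in I}\size{\Psi_i}-\#(I).
\]
The summand $-\#(I)$ is the decisive non-idempotent bookkeeping, and I expect it to carry most of the work: since $\ruleAxiom$ is relevant (no weakening) and linear, the hypothesis $\assign{x}{\intertype{\sigma_i}{i\in I}}$ is consumed by \emph{exactly} $\#(I)$ leaves $\applyax{x}{\sigma_i}$ inside $\Phi_s$, each of size $1$, and substitution trades the $i$-th such leaf for $\Psi_i$.

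I would prove the substitution lemma by induction on $\Phi_s$. In the base case $\projrule{\Phi_s}=\ruleAxiom$ with subject $x$, the set $I$ is forced to be a singleton $\set{i_0}$ with $\Gamma$ empty, so $\substitute{x}{u}{x}=u$, $\Phi\eqdef\Psi_{i_0}$, and the size equation reads $\size{\Psi_{i_0}}=1+\size{\Psi_{i_0}}-1$. The axiom on a variable $y\neq x$ and the $\ruleValue$ case both yield $I=\varnothing$ (the body of a $\ruleValue$-abstraction is untyped, so $x$ is not consumed there) and leave the derivation unchanged up to its subject. The cases $\ruleArrowI$ and $\ruleArrowE$ apply the \ih to the premises after partitioning the family $(\sigma_i)_{i\in I}$ according to which premise consumes each $\sigma_i$ --- for $\ruleArrowE$ the partition runs over the argument copies --- with the union of contexts collecting the matching $\Delta_i$; the size equation is additive and hence composes. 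A standard $\alpha$-renaming keeps the bound variable of $\ruleArrowI$ distinct from $x$ and from $\fv{u}$ to avoid capture.

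For the theorem I induct on $\Phi$ following $\pos{r}$. If $\pos{r}=\rootpos$ then $t=\appterm{(\absterm{x}{s})}{u}$; its subject being an application forces $\projrule{\Phi}=\ruleArrowE$, and since the left premise types the abstraction $\absterm{x}{s}$ with a function type, inversion makes it end in $\ruleArrowI$, so $\Phi=\applyapp{\applyabs{x}{\Phi_s}}{u}{\many{\Phi_u^i}{i\in I}}$. As $\rootpos\in\toc{\Phi}$ always, the substitution lemma supplies $\Phi'$ of $\substitute{x}{u}{s}$, and comparing $\size{\Phi}=2+\size{\Phi_s}+\sum_i\size{\Phi_u^i}$ with $\size{\Phi'}=\size{\Phi_s}+\sum_i\size{\Phi_u^i}-\#(I)$ gives the strict drop $\size{\Phi}-\size{\Phi'}=2+\#(I)>0$, \ie item~1. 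For $\pos{r}=\pos{0r'}$ (the function side of an application or the body of an abstraction typed by $\ruleArrowI$) I simply recurse into that premise and reassemble with the same constructor; since reassembly shifts every size by the same constant and $\pos{0r'}\in\toc{\Phi}$ iff $\pos{r'}$ is typed in the premise, the \ih delivers the right item.

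The delicate case is $\projrule{\Phi}=\ruleArrowE$ with $\pos{r}=\pos{1r'}$ and the argument $u$ typed by a whole family $\many{\Phi_u^i}{i\in I}$ with $I\neq\varnothing$: I apply the \ih to each $\Phi_u^i$ and rebuild $\applyapp{\Phi_t}{u'}{\many{(\Phi_u^i)'}{i\in I}}$. The observation that reconciles the two items is that $\pos{1r'}\in\toc{\Phi}$ iff $\pos{r'}\in\toc{\Phi_u^i}$ for \emph{some} $i\in I$; thus if the occurrence is typed at least one copy shrinks strictly while none grows (item~1), whereas if it is untyped every copy keeps its size (item~2). The remaining untyped situations --- an untyped argument ($I=\varnothing$) and the body of a $\ruleValue$-abstraction --- are precisely the two kinds of untyped occurrence noted earlier; there the subterm carrying $\pos{r}$ is never inspected by $\Phi$, so the same tree with its subject rewritten from $t$ to $t'$ serves as $\Phi'$ with $\size{\Phi'}=\size{\Phi}$. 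The main obstacle throughout is keeping the $-\#(I)$ count of the substitution lemma exact and checking that global typedness of $\pos{r}$ is the disjunction of local typedness over argument copies, so that a typed occurrence lines up with a strict decrease in every branch. (Alternatively, the $\treeat{\Phi}{\pos{r}}$ and $\replaceat{\Phi}{\pos{r}}{-}$ operations of this section let one carry out all copies of the root case simultaneously, giving a more global proof.)
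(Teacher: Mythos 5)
Your proposal is correct and takes essentially the same route as the proof this paper relies on: the paper does not prove Thm.~\ref{thm:subjectReductionV} inline but defers to Sec.~7 of~\cite{BucciarelliKV17}, where the argument is exactly yours --- a substitution lemma with the exact non-idempotent size accounting $\size{\Phi} = \size{\Phi_s} + \sum_{i \in I}\size{\Psi_i} - \#(I)$ (each of the $\#(I)$ relevant axiom leaves for $x$ is traded for one $\Psi_i$), followed by structural induction on the derivation guided by the redex occurrence, with the typed/untyped case split handled precisely as you do (strict decrease of $2+\#(I)$ at a typed root redex, equality under $\ruleValue$-bodies and untyped arguments, and the disjunction over argument copies for $\pos{1r'}$).
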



\begin{theorem}[Subject Expansion]
\label{thm:subjectExpansionV}
Let $\derivable{\Phi'}{\sequ{\Gamma}{\assign{t'}{\tau}}}{\V}$. If $t
\rewrite{\beta} t'$, then there exists $\Phi$ s.t.
$\derivable{\Phi}{\sequ{\Gamma}{\assign{t}{\tau}}}{\V}$.
\end{theorem}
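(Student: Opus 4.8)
The plan is to derive Subject Expansion from a single \emph{anti-substitution} lemma and then to induct on the context in which the contracted redex occurs. Recall that $t \rewrite{\beta} t'$ means $t = \ctxtapp{\ctxt{C}}{r}$ and $t' = \ctxtapp{\ctxt{C}}{r'}$ for some context $\ctxt{C}$, where $r = \appterm{(\absterm{x}{s})}{u}$ is the contracted redex and $r' = \substitute{x}{u}{s}$. The key lemma is the partial inverse of the substitution lemma underlying Weighted Subject Reduction (Thm.~\ref{thm:subjectReductionV}): \emph{if $\derivable{\Psi}{\sequ{\Gamma}{\assign{\substitute{x}{u}{s}}{\tau}}}{\V}$, then there are a multiset $\M = \intertype{\sigma_i}{i \in I}$, a context $\Gamma_0$ and contexts $\many{\Delta_i}{i \in I}$ with $\Gamma = \ctxtsum{\Gamma_0}{(+_{i \in I}\Delta_i)}{}$, a derivation $\derivable{\Phi_s}{\sequ{\ctxtsum{\Gamma_0}{(\assign{x}{\M})}{}}{\assign{s}{\tau}}}{\V}$, and derivations $\derivable{\Phi_{u}^{i}}{\sequ{\Delta_i}{\assign{u}{\sigma_i}}}{\V}$ for each $i \in I$.} This is exactly what is needed to invert a single $\beta$-step at the root.

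For the theorem I would argue by induction on $\ctxt{C}$. If $\ctxt{C} = \Box$, then $t' = r' = \substitute{x}{u}{s}$ and the anti-substitution lemma applied to $\Phi'$ yields $\Phi_s$ typing $s$ under $\assign{x}{\M}$ together with the family $\many{\Phi_{u}^{i}}{i \in I}$. I then reassemble: applying $\ruleArrowI$ to $\Phi_s$ gives $\sequ{\Gamma_0}{\assign{\absterm{x}{s}}{\functtype{\M}{\tau}}}$, and a subsequent $\ruleArrowE$ with premises $\many{\Phi_{u}^{i}}{i \in I}$ produces the required $\Phi$ typing $r$ with $\tau$ in context $\ctxtsum{\Gamma_0}{(+_{i \in I}\Delta_i)}{} = \Gamma$ (when $I = \varnothing$, $\ruleArrowE$ admits the untyped argument $u$ and $\Gamma_0 = \Gamma$). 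For the inductive cases $\ctxt{C} = \appterm{\ctxt{D}}{w}$, $\appterm{w}{\ctxt{D}}$ and $\absterm{z}{\ctxt{D}}$, the derivation $\Phi'$ ends in $\ruleArrowE$, $\ruleArrowE$ and $\ruleArrowI$ respectively; I apply the \ih to the premise(s) typing $\ctxtapp{\ctxt{D}}{r'}$ --- several of them, one per element of the argument multiset, in the $\appterm{w}{\ctxt{D}}$ case --- and rebuild the same final rule, leaving the remaining premises untouched. As the \ih preserves both the context and the type of each premise, the reconstruction has the required conclusion $\sequ{\Gamma}{\assign{t}{\tau}}$. The subcases that do not invoke the \ih are those where the redex sits in an untyped position of $\Phi'$, namely $\ctxt{C} = \absterm{z}{\ctxt{D}}$ with last rule $\ruleValue$, or $\ctxt{C} = \appterm{w}{\ctxt{D}}$ with empty argument multiset; in both, $t$ and $t'$ differ only inside an untyped subterm, so the same last rule applies verbatim to $t$, with $\ruleValue$ typing the abstraction by $\valuetype$ in the former.

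The main obstacle is the anti-substitution lemma, which I would prove by induction on $s$ (equivalently on the structure of $\Psi$). The delicate point is the non-idempotent bookkeeping: each \emph{typed} occurrence of $x$ inside $s$ becomes an occurrence of $u$ in $\substitute{x}{u}{s}$ carrying its own type $\sigma_i$, so the proof must gather precisely these derivations of $u$ into the multiset $\M$ while splitting $\Gamma$ as $\ctxtsum{\Gamma_0}{(+_{i \in I}\Delta_i)}{}$. In the application and abstraction cases this forces a distribution of the index set $I$ across the several premises, mirroring the non-deterministic decomposition $I = \biguplus_{j \in J} I_j$ already used for the replacement operation; the variable case splits according to whether the occurrence is $x$ (then $s = x$, $\M = \intertype{\tau}{}$ and $\Phi_s = \applyax{x}{\tau}$) or a different variable (then $I = \varnothing$ and the whole context goes into $\Gamma_0$). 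Since Subject Expansion only asserts the existence of $\Phi$ and not any inequality between derivation sizes, this direction is lighter than Weighted Subject Reduction --- no weight tracking is required --- and the entire difficulty is concentrated in correctly recovering $\M$ and the context splitting inside this one lemma.
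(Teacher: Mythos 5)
Correct, and essentially the same proof as the paper's: the paper does not prove subject expansion inline but defers to Sec.~7 of~\cite{BucciarelliKV17}, whose argument is precisely your route --- an anti-substitution (reverse substitution) lemma with the non-idempotent splitting $\Gamma = \ctxtsum{\Gamma_0}{(+_{i \in I}\Delta_i)}{}$ recovering one derivation of $u$ per element of $\M$, followed by induction on the position of the contracted redex, with the untyped-argument and $\ruleValue$ cases handled verbatim. Your only slip is harmless: the index-set distribution $I = \biguplus_{j} I_j$ is needed only in the application case of the anti-substitution lemma, not in the abstraction case, which has a single premise.
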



Note that weighted subject reduction implies that 
reduction of typed redex occurrences turns out to be normalising.


\section{Substitution and Reduction on Derivations}
\label{sec:allowable}

In order to relate typed redex occurrences of convertible terms, we now extend
the notion of $\beta$-reduction to derivation trees, by making use of a natural
and basic concept of typed substitution. In contrast to substitution and
$\beta$-reduction on {\it terms}, these operations are now both
non-deterministic on derivation trees (see~\cite{Vial:thesis} for discussions
and examples). Given a variable $x$ and type derivations $\Phi_{t}$ and
$\many{\Phi_{u}^{i}}{i \in I}$, the \emphdef{typed substitution} of\ $x$ by
$\many{\Phi_{u}^{i}}{i \in I}$ in $\Phi_{t}$, written
$\substitute{x}{\many{\Phi_{u}^{i}}{i \in I}}{\Phi_{t}}$ by making an abuse of
notation, is a type derivation inductively defined on $\Phi_{t}$, only if
$\Phi_{t}(x) = \intertype{\projtype{\Phi_{u}^{i}}}{i \in I}$:
\begin{itemize}
  \item If $\Phi_{t} = \applyax{y}{\tau}$, then $$
\begin{array}{rcl}
\substitute{x}{\many{\Phi_{u}^{i}}{i \in I}}{\Phi_{t}} & \eqdef &
\begin{cases}
\Phi_{u}^{i_0} & \text{if $y = x$ where $I = \set{i_0}$} \\
\Phi_{t}       & \text{if $y \neq x$}
\end{cases}
\end{array} $$
  
  \item If $\Phi_{t} = \applyval{y}{t'}$, then $$
\begin{array}{rcl}
\substitute{x}{\many{\Phi_{u}^{i}}{i \in I}}{\Phi_{t}} & \eqdef &
\applyval{y}{\substitute{x}{u}{t'}}
\end{array} $$
  if there is no capture of the variable $y$.
  
  \item If $\Phi_{t} = \applyabs{y}{\Phi_{t'}}$, then $$
\begin{array}{rcl}
\substitute{x}{\many{\Phi_{u}^{i}}{i \in I}}{\Phi_{t}} & \eqdef & \applyabs{y}{\substitute{x}{\many{\Phi_{u}^{i}}{i \in I}}{\Phi_{t'}}}
\end{array} $$
  if there is no capture of the variable $y$.
  
  \item If $\Phi_{t} = \applyapp{\Phi_{r}}{s}{\many{\Phi_{s}^{j}}{j \in J}}$ with $I = I' \uplus (\biguplus_{j \in J}{I_j})$, where
  $\Phi_{r}(x) = \intertype{\projtype{\Phi_{u}^{i}}}{i \in I'}$ and
  $\many{\Phi_{s}(x) = \intertype{\projtype{\Phi_{u}^{i}}}{i \in I_j}}{j
  \in J}$, then $$
\begin{array}{rcl}
\substitute{x}{\many{\Phi_{u}^{i}}{i \in I}}{\Phi_{t}} & \eqdef &
\applyapp{\substitute{x}{\many{\Phi_{u}^{i}}{i \in I'}}{\Phi_{r}}}{\substitute{x}{u}{s}}{\many{\substitute{x}{\many{\Phi_{u}^{i}}{i \in I_j}}{\Phi_{s}^{j}}}{j \in J}}
\end{array} $$
Remark that the decomposition of $I$ into $I'$ and the sets $I_j$ ($j \in J$)
is non-deterministic, thus substitution of derivation trees turns out to be a
non-deterministic operation.
\end{itemize}

Intuitively, the typed substitution replaces typed occurrences of $x$ in
$\Phi_t$ by a corresponding derivation $\Phi_u^i$ matching the same type, where
such a matching is chosen in a non-deterministic way. Moreover, it also
substitutes all untyped occurrences of $x$ by $u$, where this untyped operation
is completely deterministic. Thus, for example, consider the following
substitution, where $\Phi_{KI}$ is defined in Sec.~\ref{sec:systemV}: $$
\substitute{x}{\Phi_{KI}}{\left(
  \prooftree \Rule{}{
    \sequ{\assign{x}{\intertype{\functtype{\intertype{}{}}{\valuetype}}{}}}{\assign{x}{\functtype{\intertype{}{}}{\valuetype}}}
  }{\ruleAxiom} \justifies
  \sequ{\assign{x}{\intertype{\functtype{\intertype{}{}}{\valuetype}}{}}}{\assign{\appterm{x}{x}}{\valuetype}}
  \using \ruleArrowE \endprooftree \right)} = \prooftree \Phi_{KI}
\justifies \sequ{}{\assign{\appterm{(KI)}{(KI)}}{\valuetype}} \using
\ruleArrowE \endprooftree $$

%

The following lemma relates the typed occurrences of the trees composing a
substitution and those of the substituted tree itself:
  
\begin{lemma}
\label{lem:tocSubstitution}
Let $\Phi_{t}$ and $\many{\Phi_{u}^{i}}{i \in I}$ be derivations such that $\substitute{x}{\many{\Phi_{u}^{i}}{i \in I}}{\Phi_{t}}$ is defined, and $\pos{p} \in \oc{t}$. Then,
\begin{enumerate}
  \item\label{lem:tocSubstitution:i} $\pos{p} \in \toc{\Phi_{t}}$ iff $\pos{p} \in \toc{\substitute{x}{\many{\Phi_{u}^{i}}{i \in I}}{\Phi_{t}}}$.
  \item\label{lem:tocSubstitution:ii} $\pos{q} \in \toc{\Phi_{u}^{k}}$ for some $k \in I$ iff there exists $\pos{p} \in \toc{\Phi_{t}}$ such that $\termat{t}{\pos{p}} = x$ and $\pos{pq} \in \toc{\substitute{x}{\many{\Phi_{u}^{i}}{i \in I}}{\Phi_{t}}}$.
\end{enumerate}
\end{lemma}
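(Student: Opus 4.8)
The plan is to prove both statements simultaneously by structural induction on the derivation $\Phi_{t}$, following the last rule $\projrule{\Phi_{t}}$. This is the natural line of attack because both the set $\toc{\Phi_{t}}$ and the typed substitution $\substitute{x}{\many{\Phi_{u}^{i}}{i\in I}}{\Phi_{t}}$ are defined by the \emph{same} recursion on $\projrule{\Phi_{t}}$, so each inductive case merely has to match the two definitions against one another. In the base cases $\projrule{\Phi_{t}} \in \set{\ruleAxiom,\ruleValue}$ we have $\toc{\Phi_{t}} = \set{\rootpos}$. For $\ruleValue$ the substitution performs only the untyped replacement $\substitute{x}{u}{t'}$ in the body and keeps the last rule a $\ruleValue$, so $\toc{}$ of the result is again $\set{\rootpos}$ and statement~(\ref{lem:tocSubstitution:i}) is immediate; statement~(\ref{lem:tocSubstitution:ii}) holds vacuously since the context is empty (hence $I = \varnothing$) and $\termat{t}{\rootpos}$ is an abstraction, not $x$. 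For $\ruleAxiom$ with subject $y \neq x$ the derivation is unchanged and again $I = \varnothing$. The only genuinely interesting base case is $\applyax{x}{\tau}$: here definedness forces $I = \set{i_0}$ and the result is exactly $\Phi_{u}^{i_0}$, so both statements collapse to the tautology $\pos{q} \in \toc{\Phi_{u}^{i_0}}$ iff $\pos{q}\in\toc{\Phi_{u}^{i_0}}$ (taking $\pos{p} = \rootpos$, with $\termat{x}{\rootpos} = x$).

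For the inductive cases the work is entirely bookkeeping about how the prefixes $\pos{0}$ and $\pos{1}$ commute with the two constructions. In the $\ruleArrowI$ case the substitution descends into the unique premise $\Phi_{t'}$ (which satisfies $\Phi_{t'}(x) = \Phi_{t}(x)$, so the \ih applies), a typed occurrence of $\Phi_{t}$ is either $\rootpos$ or $\pos{0p'}$ with $\pos{p'}\in\toc{\Phi_{t'}}$, and the result is $\applyabs{y}{\substitute{x}{\many{\Phi_{u}^{i}}{i\in I}}{\Phi_{t'}}}$. Since $\termat{t}{\pos{0p'}} = \termat{t'}{\pos{p'}}$, both statements transfer from the \ih on $\Phi_{t'}$ by prefixing with $\pos{0}$; the no-capture proviso ($y\neq x$, $y\notin\fv{u}$) is what makes this translation of occurrences of $x$ legitimate. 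The $\ruleArrowE$ case of statement~(\ref{lem:tocSubstitution:i}) is analogous, splitting a typed occurrence into $\rootpos$, $\pos{0p'}$ (handled by the \ih on $\Phi_{r}$) or $\pos{1p'}$ (handled by the \ih on the $\Phi_{s}^{j}$), using that a typed occurrence of the argument lies in $\toc{\Phi_{s}^{j}}$ for \emph{some} $j \in J$; the untyped substitution $\substitute{x}{u}{s}$ on the argument term contributes nothing to $\toc{}$ of the result, which is precisely why statement~(\ref{lem:tocSubstitution:i}) is stated only for $\pos{p}\in\oc{t}$.

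The main obstacle is the $\ruleArrowE$ case of statement~(\ref{lem:tocSubstitution:ii}), where the non-deterministic splitting $I = I' \uplus (\biguplus_{j\in J} I_j)$ must be threaded correctly. Given $k \in I$ I would first decide whether $k\in I'$ or $k\in I_j$ for a unique $j$: in the former case I apply the \ih on $\Phi_{r}$ with the subfamily $\many{\Phi_{u}^{i}}{i\in I'}$ to get a witness $\pos{p'}\in\toc{\Phi_{r}}$ with $\termat{r}{\pos{p'}} = x$, and take $\pos{p} = \pos{0p'}$; in the latter I apply the \ih on $\Phi_{s}^{j}$ with $\many{\Phi_{u}^{i}}{i\in I_j}$ and take $\pos{p} = \pos{1p'}$. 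For the converse direction I use that $\termat{t}{\rootpos} = \appterm{r}{s} \neq x$ forces any witness $\pos{p}$ with $\termat{t}{\pos{p}} = x$ to begin with $\pos{0}$ or $\pos{1}$, landing in one of the two subfamilies, and then invoke the corresponding \ih in reverse. The one auxiliary observation worth isolating and reusing is that $\termat{t}{\pos{p}} = x$ makes $\pos{p}$ a leaf of $\oc{t}$ (a variable has no proper sub-occurrences), so no position $\pos{pq}$ with $\pos{q}\neq\rootpos$ belongs to $\oc{t}$; this is exactly what cleanly separates the ``skeleton'' positions governed by statement~(\ref{lem:tocSubstitution:i}) from the positions internal to the inserted copies of $u$ governed by statement~(\ref{lem:tocSubstitution:ii}).
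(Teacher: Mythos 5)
Your overall strategy --- a simultaneous structural induction on $\Phi_{t}$, matching the parallel recursive definitions of $\toc{\cdot}$ and of typed substitution case by case --- is exactly the approach of the paper (whose proof is stated simply as ``by induction on $\Phi_{t}$''), and your base cases, your $\ruleArrowI$ case, and your treatment of item~(\ref{lem:tocSubstitution:i}) are correct. However, there is a genuine gap in the backward direction of item~(\ref{lem:tocSubstitution:ii}) in the $\ruleArrowE$ case, precisely the case you single out as the main obstacle. Suppose the given witness is $\pos{p} = \pos{1p'}$. Your hypotheses then unfold to: (a) $\pos{1p'} \in \toc{\Phi_{t}}$, i.e.\ $\pos{p'} \in \toc{\Phi_{s}^{j_1}}$ for \emph{some} $j_1 \in J$; and (b) $\pos{1p'q} \in \toc{\substitute{x}{\many{\Phi_{u}^{i}}{i \in I}}{\Phi_{t}}}$, i.e.\ $\pos{p'q} \in \toc{\substitute{x}{\many{\Phi_{u}^{i}}{i \in I_{j_2}}}{\Phi_{s}^{j_2}}}$ for \emph{some} $j_2 \in J$. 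Nothing forces $j_1 = j_2$: an occurrence of $x$ in the argument $s$ may well be typed in one premise and untyped in another --- e.g.\ $s = \appterm{y}{x}$, where $y$ can be assigned $\functtype{\intertype{\sigma}{}}{\tau_1}$ in one premise (so the occurrence of $x$ is typed) and $\functtype{\intertype{}{}}{\tau_2}$ in another (so it is untyped). But the \ih for item~(\ref{lem:tocSubstitution:ii}), as you state it, requires the three conjuncts $\pos{p'} \in \toc{\Phi_{s}^{j}}$, $\termat{s}{\pos{p'}} = x$ and $\pos{p'q} \in \toc{\substitute{x}{\many{\Phi_{u}^{i}}{i \in I_j}}{\Phi_{s}^{j}}}$ to hold for the \emph{same} $j$, so it can be invoked ``in reverse'' neither on $j_1$ nor on $j_2$. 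Your auxiliary leaf observation separates skeleton positions from inserted positions, but it does not resolve this index mismatch; this is where the non-determinism of the splitting genuinely bites, in a way that item~(\ref{lem:tocSubstitution:i}) avoids because there the \ih is an unconditional equivalence for each fixed $j$, so existential quantification over $j$ commutes with it.

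The gap is repairable by strengthening the inductive invariant: prove, in place of the backward implication as stated, that $\termat{t}{\pos{p}} = x$ and $\pos{pq} \in \toc{\substitute{x}{\many{\Phi_{u}^{i}}{i \in I}}{\Phi_{t}}}$ together imply \emph{both} $\pos{p} \in \toc{\Phi_{t}}$ and $\pos{q} \in \toc{\Phi_{u}^{k}}$ for some $k \in I$. Equivalently, isolate and prove the auxiliary fact that an occurrence of $x$ which is untyped in $\Phi_{s}^{j}$ remains untyped --- together with every position below it --- in $\substitute{x}{\many{\Phi_{u}^{i}}{i \in I_j}}{\Phi_{s}^{j}}$. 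With membership $\pos{p'} \in \toc{\Phi_{s}^{j_2}}$ turned into a \emph{conclusion} rather than a hypothesis, the $\ruleArrowE$ case goes through using only the index $j_2$ supplied by (b), and the strengthened statement still specialises to item~(\ref{lem:tocSubstitution:ii}) as stated. The rest of your argument can be kept unchanged.
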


\begin{proof}
By induction on $\Phi_{t}$.
\end{proof}

Based on the previous notion of substitutions on derivations, we are now able
to introduce (non-deterministic) reduction on derivation trees. The
\emphdef{reduction relation} $\rewrite{\beta}$ on derivation trees is then
defined by first considering the following basic rewriting rules.
\begin{enumerate}
  \item For typed $\beta$-redexes: $$
\prooftree
  \Rule{
    \derivable{\Phi_{t}}{\sequ{\Gamma;\assign{x}{\intertype{\sigma_i}{i \in I}}}{\assign{t}{\tau}}}{\V}
  }{
    \sequ{\Gamma}{\assign{\absterm{x}{t}}{\functtype{\intertype{\sigma_i}{i \in I}}{\tau}}}
  }{}
  \quad
  \many{\derivable{\Phi_{u}^{i}}{\sequ{\Delta_i}{u : \sigma_i}}{\V}}{i \in I}
\justifies
  \sequ{\ctxtsum{\Gamma}{\Delta_i}{i \in I}}{\assign{\appterm{(\absterm{x}{t})}{u}}{\tau}}
\using
  \ 
  \rrule{\beta}
  \ 
  \substitute{x}{\many{\Phi_{u}^{i}}{i \in I}}{\Phi_{t}}
\endprooftree $$

  \item For $\beta$-redexes in untyped occurrences, with $u \rewrite{\beta}
  u'$: $$
\begin{array}{c@{\qquad\quad}c}
\Rule{
  \sequ{\Gamma}{\assign{t}{\functtype{\intertype{}{}}{\tau}}}
}{
  \sequ{\Gamma}{\assign{\appterm{t}{u}}{\tau}}
}{}
\ 
\rrule{\nu}
\ 
\Rule{
  \sequ{\Gamma}{\assign{t}{\functtype{\intertype{}{}}{\tau}}}
}{
  \sequ{\Gamma}{\assign{\appterm{t}{u'}}{\tau}}
}{} & 
\Rule{
  \vphantom{\Gamma}
}{
  \sequ{}{\assign{\absterm{x}{u}}{\valuetype}}
}{}
\ 
\rrule{\xi}
\ 
\Rule{
  \vphantom{\Gamma}
}{
  \sequ{}{\assign{\absterm{x}{u'}}{\valuetype}}
}{}
\end{array} $$
\end{enumerate}

As in the case of the $\lambda$-calculus, where reduction is closed under usual
\emph{term} contexts, we need to close the previous relation under
\emph{derivation tree} contexts. However, a one-step reduction on a given
subterm causes many one-step reductions in the corresponding derivation tree
(recall $\treeat{\Phi}{\pos{p}}$ is defined to be a multiset). Then,
informally, given a redex occurrence $\pos{r}$ of $t$, a type derivation $\Phi$
of $t$, and the multiset of minimal subderivations of $\Phi$ containing
$\pos{r}$, written $\mathscr{M}$, we apply the reduction rules
$\rrule{\beta,\nu,\xi}$ to all the elements of $\mathscr{M}$, thus obtaining a
multiset $\mathscr{M'}$, and we recompose the type derivation of the reduct of
$t$.

To formalise this idea, given a type derivation $\Phi$ and an occurrence
$\pos{p} \in \oc{\projsubj{\Phi}}$, we define the \emphdef{maximal typed
prefix} of $\pos{p}$ in $\Phi$, written $\mtp{\pos{p}}{\Phi}$, as the unique
prefix of $\pos{p}$ satisfying
$$\mtp{\pos{p}}{\Phi} \in \toc{\Phi} \land \forall \pos{q} \in \toc{\Phi}.
(\pos{q} \prefix \pos{p} \implies \pos{q} \prefix \mtp{\pos{p}}{\Phi})$$
Notice that the multiset of subderivations of $\Phi$ at position
$\mtp{\pos{p}}{\Phi}$ (\ie $\treeat{\Phi}{\mtp{\pos{p}}{\Phi}}$) corresponds to
the multiset of minimal subderivations of $\Phi$ containing $\pos{p}$.
For instance, consider the type derivation $\Phi_{KI\Omega}$ presented in
Sec.~\ref{sec:systemV}: $$
\prooftree
  \prooftree
    \Phi_{K}
    \qquad
		\qquad
    \Rule{}{
      \sequ{}{\assign{I}{\valuetype}}
    }{
      \hspace{-5.8em} \Phi_{I} = 
      \hspace{3.3em} \ruleValue
	  }
  \justifies
    \sequ{}{\assign{\appterm{K}{I}}{\functtype{\intertype{}{}}{\valuetype}}}
  \using
    \ruleArrowE
  \endprooftree
\justifies
  \sequ{}{\assign{\appterm{\appterm{K}{I}}{\Omega}}{\valuetype}}
\using
  \hspace{-11.2em} \Phi_{KI\Omega} = 
  \hspace{7.3em} \ruleArrowE
\endprooftree $$
where $\toc{\Phi_{KI\Omega}} = \set{\rootpos, \pos{0}, \pos{00}, \pos{01},
\pos{000}, \pos{0000}}$. Indeed, $\mtp{\pos{010}}{\Phi_{KI\Omega}} = \pos{01}$
and the minimal subderivation of $\Phi_{KI\Omega}$ containing this occurrence
is $\treeat{\Phi_{KI\Omega}}{\pos{01}} = \multiset{\Phi_{I}}$. Also,
$\mtp{\pos{1p}}{\Phi_{KI\Omega}} = \rootpos$ where
$\treeat{\Phi_{KI\Omega}}{\rootpos} = \multiset{\Phi_{KI\Omega}}$.

Then, given terms $t$ and $t'$, and type derivations $\Phi$ and
$\Phi'$ of $t$ and $t'$ respectively, we say that $\Phi$ \emphdef{reduces} to
$\Phi'$ (written $\Phi \rewrite{\beta} \Phi'$) iff there exists $\pos{r} \in
\roc{t}$ such that
$$\treeat{\Phi}{\mtp{\pos{r}}{\Phi}} \leadsto_{\beta,\nu,\xi} \mathscr{M}
\quad\text{and}\quad
\Phi' = \replaceat{\Phi}{\mtp{\pos{r}}{\Phi}}{\mathscr{M}}$$
where $\leadsto_{\beta,\nu,\xi}$ denotes the lifting to multisets of the basic
rewriting rules introduced above by applying the same rule to all the elements
of the multiset.

This gives the reduction relation $\rewrite{\beta}$ on trees. A reduction
sequence on derivation trees contracting only redexes in typed positions is
dubbed a \emphdef{typed reduction sequence}.

Note that typed reductions are normalising by Thm.~\ref{thm:subjectReductionV},
yielding a special kind of derivation. Indeed, given a type derivation
$\derivable{\Phi}{\sequ{\Gamma}{\assign{t}{\tau}}}{\V}$, we say that $\Phi$ is
\emphdef{normal} iff $\toc{\Phi} \cap \roc{t} = \varnothing$. Reduction on
trees induces reduction on terms: when $\reductionn{\rho}{\Phi}{\Phi'}{\beta}$,
then $\projsubj{\Phi} \rewriten{\beta} \projsubj{\Phi'}$. By abuse of notation
we may denote both sequences with the same letter $\rho$.


\section{Weak-Head Neededness and Typed Occurrences}
\label{sec:redexes}

This section presents one of our main results. It establishes a connection
between  weak-head needed redexes and typed redex occurrences. More precisely,
we first show in Sec.~\ref{s:whn-typed} that every weak-head needed redex
occurrence turns out to be  a typed occurrence, whatever its type derivation
is. The converse does not however hold. But, we show in
Sec.~\ref{s:principal-whn} that any typed occurrence in a special kind of typed
derivation (that we call principal) corresponds to a weak-head needed redex
occurrence. We start with a technical lemma.

\begin{lemma}
\label{lem:typedDescendant}
Let $\reduction{\pos{r}}{\Phi_{t}}{\Phi_{t'}}{\beta}$ and $\pos{p} \in \oc{t}$ such that $\pos{p} \neq \pos{r}$ and $\pos{p} \neq \pos{r0}$. Then, $\pos{p} \in \toc{\Phi_{t}}$ iff there exists $\pos{p'} \in
\residuals{\pos{p}}{\pos{r}}$ such that $\pos{p'} \in \toc{\Phi_{t'}}$.
\end{lemma}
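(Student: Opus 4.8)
We have a tree-reduction step $\reduction{\pos{r}}{\Phi_{t}}{\Phi_{t'}}{\beta}$ and a position $\pos{p} \in \oc{t}$ with $\pos{p} \neq \pos{r}$ and $\pos{p} \neq \pos{r0}$. These two excluded positions are precisely the ones whose descendant set $\residuals{\pos{p}}{\pos{r}}$ is empty — the redex itself and the anchor $\lambda x$. The claim is a faithfulness statement: typed occurrences are preserved (and reflected) by tree reduction along the residual relation, exactly for those positions that survive the contraction.

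**The intended approach.** I would prove this by induction on $\Phi_t$, mirroring the structure of the definition of tree reduction $\rewrite{\beta}$. Recall that $\Phi \rewrite{\beta} \Phi'$ means there is $\pos{r} \in \roc{t}$ with $\treeat{\Phi}{\mtp{\pos{r}}{\Phi}} \leadsto_{\beta,\nu,\xi} \mathscr{M}$ and $\Phi' = \replaceat{\Phi}{\mtp{\pos{r}}{\Phi}}{\mathscr{M}}$. So the first split is on whether $\pos{r}$ is a typed occurrence of $\Phi_t$ or not, equivalently whether $\mtp{\pos{r}}{\Phi}=\pos{r}$ (the $\rrule{\beta}$ case) or $\mtp{\pos{r}}{\Phi}$ is a proper prefix (the $\rrule{\nu},\rrule{\xi}$ cases, where the contraction happens inside an untyped argument or untyped abstraction body).

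The nontrivial case is the typed one, where the top-level rule is $\rrule{\beta}$: here $\Phi_t = \applyapp{(\applyabs{x}{\Phi_s})}{u}{\many{\Phi_u^i}{i\in I}}$ with $\termat{t}{\pos{r}} = \appterm{(\absterm{x}{s})}{u}$, and $\Phi_{t'} = \substitute{x}{\many{\Phi_u^i}{i\in I}}{\Phi_s}$. The plan is to follow the four clauses of the definition of $\residuals{\pos{p}}{\pos{r}}$ and match each against Lemma~\ref{lem:tocSubstitution}. For positions with $\pos{r} \not\prefix \pos{p}$ (residual $\set{\pos{p}}$), the relevant subderivation is untouched, so preservation is direct. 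For $\pos{p} = \pos{r00q}$ (a position inside the abstraction body $s$, residual $\set{\pos{rq}}$), the matching is given by part~\ref{lem:tocSubstitution:i} of Lemma~\ref{lem:tocSubstitution}: the typed occurrences of $\Phi_s$ are exactly those of the substituted tree at the same body-position. For $\pos{p} = \pos{r1q}$ (a position inside the argument $u$, residuals $\set{\pos{rkq} \mid \termat{s}{\pos{k}}=x}$), the matching is governed by part~\ref{lem:tocSubstitution:ii}: $\pos{q} \in \toc{\Phi_u^k}$ for some $k \in I$ iff there is a typed occurrence of $x$ in $\Phi_s$ after which $\pos{q}$ becomes typed in the substituted tree. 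Here is exactly where the excluded cases $\pos{p}=\pos{r}$ and $\pos{p}=\pos{r0}$ are needed: they are the only $\pos{p}$ for which no residual survives, so excluding them guarantees the biconditional is nonvacuous on both sides.

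**The main obstacle.** I expect the real work to lie in carefully lining up the non-deterministic substitution with the residual bookkeeping in the $\pos{p}=\pos{r1q}$ clause. The residual set ranges over all occurrences $\pos{k}$ of $x$ in the body $s$, while the typed substitution only copies $\Phi_u^k$ at the \emph{typed} occurrences of $x$; one must check that Lemma~\ref{lem:tocSubstitution}\ref{lem:tocSubstitution:ii} exactly reconciles these, so that an argument-position is typed after reduction iff it was typed in \emph{some} copy $\Phi_u^k$ landing at a typed occurrence of $x$. The untyped cases ($\rrule{\nu},\rrule{\xi}$) are comparatively routine: contracting inside an untyped subterm changes neither the set of typed occurrences nor the residual structure at typed positions, so the induction hypothesis (together with the observation that $\pos{p}$ cannot fall strictly inside the untyped redex while remaining typed) closes them. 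I would present the typed $\rrule{\beta}$ base case in full and treat the structural/untyped cases by appeal to the induction hypothesis.
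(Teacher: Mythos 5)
Your proposal is correct and follows essentially the same route as the paper's proof: the paper inducts on the redex position $\pos{r}$ (equivalently, on the path from the root of $\Phi_t$ to the redex, which is your structural induction), resolves the root/typed $\rrule{\beta}$ case by matching the residual clauses $\pos{p}=\pos{r00q}$ and $\pos{p}=\pos{r1q}$ against parts (\ref{lem:tocSubstitution:i}) and (\ref{lem:tocSubstitution:ii}) of Lemma~\ref{lem:tocSubstitution} respectively, and dispatches the deeper (structural, including $\rrule{\nu}$/$\rrule{\xi}$ untyped) cases by the induction hypothesis, just as you describe. Your reading of the excluded positions $\pos{p}\neq\pos{r},\pos{r0}$ as exactly those with empty residual sets, and of the argument-position bookkeeping as the delicate point, also matches the paper's treatment.
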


\begin{proof}
If $\pos{p} = \rootpos$ the result holds since $\residuals{\rootpos}{\pos{r}} =
\set{\rootpos}$ and $\rootpos \in \toc{\Phi}$ for every possible $\Phi$ by
definition. Then, assume $\pos{p} \neq \rootpos$. We proceed by induction on
$\pos{r}$.
\begin{itemize}
  \item $\pos{r} = \rootpos$. Then, $\pos{r} \prefix \pos{p}$ and $\pos{r} \in
  \roc{t} \cap \toc{\Phi_{t}}$. Moreover, $t = \termat{t}{\pos{r}} =
  \appterm{(\absterm{x}{t_1})}{t_2}$ and $$
\begin{array}{c@{\hskip 0.4em}c@{\hskip 0.3em}c}
\raisebox{-0.7em}{$
\Phi_{t} =
$} &
\Rule{
  \Rule{
    \Phi_{t_1}
  }{
    \sequ{\Gamma'}{\assign{\absterm{x}{t_1}}{\functtype{\intertype{\sigma_i}{i \in I}}{\tau}}}
  }{}
  \quad
  \many{\Phi_{t_2}^{i}}{i \in I}
}{
  \sequ{\Gamma}{\assign{\appterm{(\absterm{x}{t_1})}{t_2}}{\tau}}
}{} &
\raisebox{-0.7em}{$
\rewrite{\beta} \substitute{x}{\many{\Phi_{t_2}^{i}}{i \in I}}{\Phi_{t_1}} = \Phi_{t'}
$}
\end{array} $$ with $t' = \substitute{x}{t_2}{t_1}$. Then, there are two
possibilities for $\pos{p}$:
  \begin{enumerate}
    \item $\pos{p} = \pos{00p'}$. Then, $\residuals{\pos{p}}{\rootpos} =
    \set{\pos{p'}}$. By definition, $\pos{p} \in \toc{\Phi_{t}}$ iff $\pos{p'}
    \in \toc{\Phi_{t_1}}$. Moreover, by Lem.~\ref{lem:tocSubstitution}
    (\ref{lem:tocSubstitution:i}), $\pos{p'} \in \toc{\Phi_{t_1}}$ iff
    $\pos{p'} \in \toc{\Phi_{t'}}$ (notice that $\pos{p'} \in \oc{t_1}$). Thus,
    we conclude.
    \item $\pos{p} = \pos{1p''}$. Then, $\residuals{\pos{p}}{\rootpos} =
    \set{\pos{qp''} \mid \termat{t_1}{\pos{q}} = x}$. By definition $\pos{p} \in
    \toc{\Phi_{t}}$ iff $\pos{p''} \in \toc{\Phi_{t_2}^{k}}$ for some $k \in I$.
    By Lem.~\ref{lem:tocSubstitution} (\ref{lem:tocSubstitution:ii}), $\pos{p''}
    \in \toc{\Phi_{t_2}^{k}}$ iff there exists $\pos{q} \in \toc{\Phi_{t_1}}$
    such that $\termat{t_1}{\pos{q}} = x$ and $\pos{p'} = \pos{qp''} \in
    \toc{\Phi_{t'}}$. Thus, we conclude.
  \end{enumerate}
  \item $\pos{r} = \pos{0r'}$. Then, we analyse the form of $t$:
  \begin{itemize}
    \item $t = \appterm{t_1}{t_2}$. Then, $t' = \appterm{t'_1}{t_2}$ with
    $\reduction{\pos{r'}}{t_1}{t'_1}{\beta}$. Moreover, $$\Phi_{t} =
    \Rule{\Phi_{t_1} \quad \many{\Phi_{t_2}^{i}}{i \in
    I}}{\sequ{\Gamma}{\assign{\appterm{t_1}{t_2}}{\tau}}}{} \rewrite{\beta}
    \Rule{\Phi_{t'_1} \quad \many{\Phi_{t_2}^{i}}{i \in
    I}}{\sequ{\Gamma}{\assign{\appterm{t'_1}{t_2}}{\tau}}}{} = \Phi_{t'}$$ with
    $\Phi_{t_1} \rewrite{\beta} \Phi_{t'_1}$.
    \\
    Now we have two possibilities for $\pos{p}$:
    \begin{itemize}
      \item $\pos{p} = \pos{0q}$. By definition, $\pos{p} \in \toc{\Phi_{t}}$
      iff $\pos{q} \in \toc{\Phi_{t_1}}$. Since $\pos{p} \neq \pos{r}$ implies
      $\pos{q} \neq \pos{r'}$, by inductive hypothesis, $\pos{q} \in
      \toc{\Phi_{t_1}}$ iff there exists $\pos{q'} \in
      \residuals{\pos{q}}{\pos{r'}}$ such that $\pos{q'} \in
      \toc{\Phi_{t'_1}}$. Thus, by definition once again, $\pos{p} \in
      \toc{\Phi_{t}}$ iff there exists $\pos{p'} \in
      \residuals{\pos{p}}{\pos{r}}$ such that $\pos{p'} \in \toc{\Phi_{t'}}$
      (\ie $\pos{p'} = \pos{0q'}$).
      \item $\pos{p} = \pos{1q}$. Then, $\pos{r} \not\prefix \pos{p}$ and
      $\residuals{\pos{p}}{\pos{r}} = \set{\pos{p}}$ (\ie $\pos{p'} = \pos{p}$).
      Thus, $\pos{p} \in \toc{\Phi_{t}}$ iff $\pos{q} \in \toc{\Phi_{t_2}^{k}}$
      for some $k \in I$ iff $\pos{p'} \in \toc{\Phi_{t'}}$ by definition.
    \end{itemize}
    \item $t = \absterm{x}{t_1}$. Then, $t' = \absterm{x}{t'_1}$ with
    $\reduction{\pos{r'}}{t_1}{t'_1}{\beta}$. If $\projrule{\Phi_{t}} =
    \ruleValue$, the result is immediate since $\toc{\Phi_{t}} = \set{\rootpos}
    = \toc{\Phi_{t'}}$. If not, $$\Phi_{t} =
    \Rule{\Phi_{t_1}}{\sequ{\Gamma}{\assign{\absterm{x}{t_1}}{\tau}}}{}
    \rewrite{\beta}
    \Rule{\Phi_{t'_1}}{\sequ{\Gamma}{\assign{\absterm{x}{t'_1}}{\tau}}}{}
    = \Phi_{t'}$$ with $\Phi_{t_1} \rewrite{\beta} \Phi_{t'_1}$. Then, $\pos{p}
    = \pos{0q}$ and, by definition, $\pos{p} \in \toc{\Phi_{t_1}}$ iff
    $\pos{q} \in \toc{\Phi_{t_1}}$. By inductive hypothesis, $\pos{q} \in
    \toc{\Phi_{t_1}}$ iff there exists $\pos{q'} \in
    \residuals{\pos{q}}{\pos{r'}}$ such that $\pos{q'} \in \toc{\Phi_{t'_1}}$.
    Thus, we conclude as in the previous case.
  \end{itemize}
  \item $\pos{r} = \pos{1r'}$. This case is symmetric to the one presented
  above. We have $t = \appterm{t_1}{t_2}$ and $t' = \appterm{t_1}{t'_2}$ with
  $\reduction{\pos{r'}}{t_2}{t'_2}{\beta}$, and $$\Phi_{t} = \Rule{\Phi_{t_1}
  \quad \many{\Phi_{t_2}^{i}}{i \in I}}{\sequ{\Gamma}{\assign{\appterm{t_1}{t_2}}{\tau}}}{}
  \rewrite{\beta} \Rule{\Phi_{t_1} \quad \many{\Phi_{t'_2}^{i}}{i \in
  I}}{\sequ{\Gamma}{\assign{\appterm{t_1}{t'_2}}{\tau}}}{} = \Phi_{t'}$$ with
  $\many{\Phi_{t_2}^{i} \rewrite{\beta} \Phi_{t'_2}^{i}}{i \in I}$. Then, if
  $\pos{p} = \pos{0q}$ (\ie $\pos{r} \not\prefix \pos{p}$) we may conclude
  by definition with $\residuals{\pos{p}}{\pos{r}} = \set{\pos{p}}$. Otherwise,
  $\pos{p} = \pos{1q}$ and $\pos{p} \in \toc{\Phi_{t}}$ iff $\pos{q} \in
  \toc{\Phi_{t'_2}^{k}}$ for some $k \in I$. Thus, the result follows by
  definition from the inductive hypothesis.
\end{itemize}
\vspace{-2em}
\end{proof}

\subsection{Weak-Head Needed Redexes are Typed}
\label{s:whn-typed}

In order to show that every weak-head needed redex occurrence corresponds to
a typed occurrence in some type derivation we start by proving that typed
occurrences do not come from untyped ones.

\begin{lemma}
\label{lem:typedAncestor}
Let $\reductionn{\rho}{\Phi_{t}}{\Phi_{t'}}{\beta}$ and $\pos{p} \in
\oc{t}$. If there exists $\pos{p'} \in \residuals{\pos{p}}{\rho}$ such that
$\pos{p'} \in \toc{\Phi_{t'}}$, then $\pos{p} \in \toc{\Phi_{t}}$.
\end{lemma}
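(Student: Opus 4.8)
The plan is to prove this by induction on the length of the reduction sequence $\rho$, using Lemma~\ref{lem:typedDescendant} as the single-step engine. The statement is essentially the ``ancestor'' direction of the descendant lemma, lifted from one step to many steps: if a residual of $\pos{p}$ ends up typed after the whole sequence, then $\pos{p}$ was already typed at the start. Since Lemma~\ref{lem:typedDescendant} gives an ``iff'' characterising typedness of an occurrence in terms of typedness of its residuals after a single contraction, the induction should chain these equivalences together.

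\medskip

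The base case $\rho = \emptyred$ is immediate: then $t = t'$, $\Phi_{t} = \Phi_{t'}$, and $\residuals{\pos{p}}{\emptyred} = \set{\pos{p}}$, so $\pos{p'} = \pos{p}$ and the hypothesis $\pos{p'} \in \toc{\Phi_{t'}}$ is literally $\pos{p} \in \toc{\Phi_{t}}$. For the inductive step, I would write $\rho = \pos{r}\rho'$ with $\reduction{\pos{r}}{\Phi_{t}}{\Phi_{s}}{\beta}$ and $\reductionn{\rho'}{\Phi_{s}}{\Phi_{t'}}{\beta}$. By definition of residuals along a sequence, $\residuals{\pos{p}}{\pos{r}\rho'} = \residuals{(\residuals{\pos{p}}{\pos{r}})}{\rho'}$, so the given $\pos{p'} \in \toc{\Phi_{t'}}$ is a residual (after $\rho'$) of some occurrence $\pos{p''} \in \residuals{\pos{p}}{\pos{r}}$. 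The inductive hypothesis applied to $\rho'$ and the occurrence $\pos{p''}$ then yields $\pos{p''} \in \toc{\Phi_{s}}$. Finally, since $\pos{p''} \in \residuals{\pos{p}}{\pos{r}}$ is nonempty, the clause $\pos{p} = \pos{r}$ or $\pos{p} = \pos{r0}$ of the residual definition is excluded (those give $\varnothing$), so Lemma~\ref{lem:typedDescendant} applies to the single step $\reduction{\pos{r}}{\Phi_{t}}{\Phi_{s}}{\beta}$ and gives $\pos{p} \in \toc{\Phi_{t}}$.

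\medskip

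The main subtlety to check carefully is the bookkeeping of residuals across the decomposition of $\rho$, in particular confirming that $\residuals{\pos{p}}{\pos{r}} \neq \varnothing$ so that the side condition of Lemma~\ref{lem:typedDescendant} ($\pos{p} \neq \pos{r}$ and $\pos{p} \neq \pos{r0}$) is met. This follows because $\pos{p}''$ lies in $\residuals{\pos{p}}{\pos{r}}$ and has a residual, namely $\pos{p'}$, surviving through $\rho'$; hence the set is nonempty, which by the residual definition forces $\pos{p} \notin \set{\pos{r}, \pos{r0}}$. I expect this compatibility between the ``disappearing'' cases of residuals and the excluded cases of Lemma~\ref{lem:typedDescendant} to be the only delicate point; once it is observed, the induction closes routinely.
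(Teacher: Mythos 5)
Your proof is correct and takes essentially the same approach as the paper, which disposes of this lemma as a ``straightforward induction on $\rho$ using Lem.~\ref{lem:typedDescendant}''. The only detail the paper leaves implicit is the one you rightly flag as delicate: that $\pos{p''} \in \residuals{\pos{p}}{\pos{r}}$ being nonempty rules out $\pos{p} = \pos{r}$ and $\pos{p} = \pos{r0}$ (the cases excluded by the side condition of Lemma~\ref{lem:typedDescendant}), so the single-step lemma indeed applies.
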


\begin{proof}
Straightforward induction on $\rho$ using Lem.~\ref{lem:typedDescendant}.
\end{proof}

\begin{theorem}
\label{thm:whn-typed}
Let $\pos{r}$ be a weak-head needed redex in $t$. Let $\Phi$ be a type
derivation of  $t$. Then,  $\pos{r} \in \toc{\Phi}$.
\end{theorem}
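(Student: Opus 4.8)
The plan is to prove the contrapositive, transferred through the typed reduction machinery of system $\V$. Given a weak-head needed redex $\pos{r}$ in $t$ and a type derivation $\Phi$ of $t$, I want to show $\pos{r} \in \toc{\Phi}$. The natural strategy is to exploit the weighted subject reduction theorem (Thm.~\ref{thm:subjectReductionV}): typed reduction is normalising, so starting from $\Phi$ and contracting only redexes at typed occurrences must terminate in a \emph{normal} derivation $\Phi_0$ (one with $\toc{\Phi_0} \cap \roc{\projsubj{\Phi_0}} = \varnothing$). Since $\NF{\weaknd} = \WHNF{\beta}$, and a normal derivation has no typed redex occurrences, I expect the subject of a normal derivation to be a weak-head normal form; this is where Rem.~\ref{rem:typedLeftmost} is crucial, since it guarantees that if the subject still had a weak-head redex, that redex would be a typed occurrence, contradicting normality.

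First I would fix the typed reduction sequence $\reductionn{\rho}{\Phi}{\Phi_0}{\beta}$ reaching a normal derivation $\Phi_0$, which exists by termination of typed reduction. The induced term-level sequence $\reductionn{\rho}{t}{t_0}{\beta}$ contracts only typed occurrences at each step, and by the reasoning above $t_0 \in \WHNF{\beta}$, so $\rho$ is in fact a reduction of $t$ to a weak-head normal form. Since $\pos{r}$ is weak-head needed in $t$, by definition $\rho$ must \emph{use} $\pos{r}$, i.e.\ some step of $\rho$ contracts a residual of $\pos{r}$. Because every step of $\rho$ contracts a typed occurrence, this means a residual $\pos{r'} \in \residuals{\pos{r}}{\rho'}$ (for the prefix $\rho'$ of $\rho$ preceding that step) sits at a typed occurrence in the corresponding intermediate derivation.

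Then I would pull this typedness back to $\Phi$ using the ancestor lemma. Concretely, the contracted residual $\pos{r'}$ is a typed occurrence of the derivation $\Phi'$ reached after $\rho'$, so $\pos{r'} \in \residuals{\pos{r}}{\rho'}$ with $\pos{r'} \in \toc{\Phi'}$. Applying Lem.~\ref{lem:typedAncestor} to the sequence $\reductionn{\rho'}{\Phi}{\Phi'}{\beta}$ and the occurrence $\pos{r}$ yields $\pos{r} \in \toc{\Phi}$, which is exactly the conclusion.

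I expect the main obstacle to be the first paragraph's claim that a \emph{normal} derivation has a weak-head-normal subject, and more precisely the bookkeeping that every contracted redex along $\rho$ genuinely lies at a typed occurrence of the relevant intermediate derivation. The subtlety is that ``typed reduction sequence'' was defined as contracting redexes in typed positions, so I must argue that the terminating sequence guaranteed by weighted subject reduction can indeed be taken to contract the weak-head redex (a typed occurrence by Rem.~\ref{rem:typedLeftmost}) at each step, and that termination of such a sequence forces a weak-head normal form rather than merely a derivation with no further typed redexes strictly inside. Handling the interaction between ``used in $\rho$'' (which tracks residuals) and the typed-occurrence predicate across the whole sequence — ensuring the residual that gets contracted is the one certified typed by Rem.~\ref{rem:typedLeftmost} — is the delicate point; everything else is a direct appeal to Thm.~\ref{thm:subjectReductionV} and Lem.~\ref{lem:typedAncestor}.
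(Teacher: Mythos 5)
Your proof is correct, and it differs from the paper's in how the witness reduction to weak-head normal form is produced. The paper takes the \emph{term-level} weak-head reduction from $t$ as witness: by Thm.~\ref{thm:leftmost} the weak-head needed redex $\pos{r}$ is used in it, by Rem.~\ref{rem:typedLeftmost} every step of it contracts a typed occurrence, and then Lem.~\ref{lem:typedAncestor} pulls typedness of the contracted residual back to $\Phi$ --- exactly your final step. You instead build the witness at the \emph{derivation level}: a maximal typed reduction from $\Phi$, whose termination is explicit from weighted subject reduction (Thm.~\ref{thm:subjectReductionV}), ending in a normal derivation $\Phi_0$ whose subject lies in $\WHNF{\beta}$ by the Rem.~\ref{rem:typedLeftmost} argument --- in effect re-proving Lem.~\ref{lem:PhiNForm-WHNF}, which the paper only establishes later in Sec.~\ref{sec:charact}; then the definition of weak-head neededness applies directly, with no appeal to Thm.~\ref{thm:leftmost}. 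What your route buys is that the existence of a reduction from $t$ to weak-head normal form --- which the paper's appeal to Thm.~\ref{thm:leftmost} tacitly presupposes, and which for a typable $t$ rests on precisely the termination argument you spell out --- is established rather than assumed. What the paper's route buys is brevity, since Thm.~\ref{thm:leftmost} packages the ``used'' claim. One remark on your closing paragraph: the worry that the terminating sequence must be arranged to contract the weak-head redex at each step is unfounded; as your own first three paragraphs show, \emph{any} maximal typed sequence works, because normality of the endpoint together with Rem.~\ref{rem:typedLeftmost} already forces its subject to be a weak-head normal form.
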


\begin{proof}
By Thm.~\ref{thm:leftmost}, $\pos{r}$ is used in the weak-head reduction from
$t$ to $t' \in \WHNF{\beta}$. By Rem.~\ref{rem:typedLeftmost}, the weak-head
reduction contracts only typed redexes. Thus, $\pos{r}$ or some of its
residuals is a typed occurrence in its corresponding derivation tree. Finally,
we conclude by Lem.~\ref{lem:typedAncestor}, $\pos{r} \in \toc{\Phi}$.
\end{proof}

\subsection{Principally Typed Redexes are Weak-Head Needed}
\label{s:principal-whn}

As mentioned before, the converse of Thm.~\ref{thm:whn-typed} does not hold:
there are some typed occurrences that do not correspond to any weak-head needed
redex occurrence. This can be illustrated in the following examples (recall
$\Phi_{KI\Omega}$ defined in Sec.~\ref{sec:systemV}): $$
\Rule{
  \Phi_{KI\Omega}
  \vphantom{\Rule{}{\Gamma}{}}
}{
  \sequ{}{\assign{\absterm{y}{KI\Omega}}{\functtype{\intertype{}{}}{\valuetype}}}
}{\ruleArrowI}
\quad
\Rule{
  \Rule{}{
    \sequ{\assign{y}{\intertype{\functtype{\intertype{\valuetype}{}}{\valuetype}}{}}}{\assign{y}{\functtype{\intertype{\valuetype}{}}{\valuetype}}}
  }{\ruleAxiom}
  \quad
  \Phi_{KI\Omega}
}{
  \sequ{\assign{y}{\intertype{\functtype{\intertype{\valuetype}{}}{\valuetype}}{}}}{\assign{\appterm{y}{(KI\Omega)}}{\valuetype}}
}{\ruleArrowE} $$

Indeed, the occurrence $\pos{0}$ (resp $\pos{1}$) in the term
$\absterm{y}{KI\Omega}$ (resp. $\appterm{y}{(KI\Omega)}$) is typed but not
weak-head needed, since both terms are already in weak-head normal form.
Fortunately, typing relates to redex occurrences if we restrict type
derivations to \emph{principal} ones: given a term $t$ in weak-head
$\beta$-normal form, the derivation
$\derivable{\Phi}{\sequ{\Gamma}{\assign{t}{\tau}}}{\V}$ is
\emphdef{normal principally typed} if:
\begin{itemize}
  \item $t = \appterm{\appterm{\appterm{x}{t_1}}{\ldots}}{t_n}\ (n \geq 0)$, and
  $\Gamma = \set{\assign{x}{\intertype{\functtype{\overbrace{\functtype{\functtype{\intertype{}{}}{\ldots}}{\intertype{}{}}}^{\text{$n$ times}}}{\tau}}{}}}$
  and $\tau$ is a type variable  $\alpha$  (\ie none of the $t_i$ are typed), or 
  
  \item $t = \absterm{x}{t'}$, and $\Gamma = \varnothing$ and $\tau = \valuetype$. 
\end{itemize}

Given a weak-head normalising term $t$ such that
$\derivable{\Phi_t}{\sequ{\Gamma}{\assign{t}{\tau}}}{\V}$, we say that
$\Phi_{t}$ is \emphdef{principally typed} if $\Phi_{t} \rewriten{\beta}
\Phi_{t'}$ for some $t' \in \whnf{\beta}{t}$ implies $\Phi_{t'}$ is normal
principally typed.

Note in particular that the previous definition does not depend on the
chosen weak-head normal form $t'$: suppose $t'' \in \whnf{\beta}{t}$ is another
weak-head normal form of $t$, then $t'$ and $t''$ are convertible terms by the
Church-Rosser property~\cite{Barendregt84} so that $t'$ can be normal
principally typed iff $t''$ can, by Thm.~\ref{thm:subjectReductionV}
and~\ref{thm:subjectExpansionV}.

\begin{lemma}
\label{lem:typedRedex}
Let $\Phi_{t}$ be a type derivation with subject $t$ and $\pos{r} \in \roc{t}
\cap \toc{\Phi_{t}}$. Let $\reductionn{\rho}{\Phi_{t}}{\Phi_{t'}}{\beta}$ such
that $\Phi_{t'}$ is normal. Then, $\pos{r}$ is used in $\rho$.
\end{lemma}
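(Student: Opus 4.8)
The plan is to argue by contradiction. Assume that $\pos{r}$ is \emph{not} used in $\rho$, and derive a residual of $\pos{r}$ that is simultaneously a redex occurrence and a typed occurrence in $\Phi_{t'}$. This contradicts the hypothesis that $\Phi_{t'}$ is normal, since normality means precisely $\toc{\Phi_{t'}} \cap \roc{t'} = \varnothing$. Concretely, I would establish by induction on the length of $\rho$ the following auxiliary claim: whenever $\pos{r} \in \roc{t} \cap \toc{\Phi_{t}}$ and $\pos{r}$ is not used in $\rho$, there exists $\pos{r'} \in \residuals{\pos{r}}{\rho}$ with $\pos{r'} \in \roc{t'} \cap \toc{\Phi_{t'}}$. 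The base case $\rho = \emptyred$ is immediate, as the unique residual is $\pos{r}$ itself, which lies in $\roc{t} \cap \toc{\Phi_{t}}$ by hypothesis.

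For the inductive step I would write $\rho = \pos{s}\rho''$, so that the first step contracts the redex at occurrence $\pos{s}$, producing a derivation $\Phi_{t_1}$ of some term $t_1$. The crux is that Lemma~\ref{lem:typedDescendant} applies to the tracked occurrence $\pos{r}$ and the contracted redex $\pos{s}$. Its two side conditions hold: since $\pos{r}$ is not used we have $\pos{r} \neq \pos{s}$; and since $\pos{r} \in \roc{t}$, the subterm $\termat{t}{\pos{r}}$ is an application, whereas $\termat{t}{\pos{s0}}$ is an abstraction, whence $\pos{r} \neq \pos{s0}$. Thus from $\pos{r} \in \toc{\Phi_{t}}$ the lemma furnishes some $\pos{r_1} \in \residuals{\pos{r}}{\pos{s}}$ with $\pos{r_1} \in \toc{\Phi_{t_1}}$. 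Moreover, residuals of a redex occurrence are again redex occurrences (as recorded in Sec.~\ref{sec:residuals}), so $\pos{r_1} \in \roc{t_1}$ too, giving $\pos{r_1} \in \roc{t_1} \cap \toc{\Phi_{t_1}}$. Because $\pos{r}$ is not used in $\rho$, none of the residuals of $\pos{r_1}$ are contracted along $\rho''$, i.e.\ $\pos{r_1}$ is not used in $\rho''$. The induction hypothesis applied to $\pos{r_1}$, $\Phi_{t_1}$ and $\rho''$ then yields $\pos{r'} \in \residuals{\pos{r_1}}{\rho''} \subseteq \residuals{\pos{r}}{\rho}$ with $\pos{r'} \in \roc{t'} \cap \toc{\Phi_{t'}}$, establishing the claim and the desired contradiction.

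The main obstacle is keeping the tracked occurrence simultaneously a redex and a typed occurrence across each step. Residuals may split into several copies when the redex sits inside a duplicated argument, or, in the purely syntactic theory of residuals, even vanish when the argument is erased. The delicate point is exactly that Lemma~\ref{lem:typedDescendant} certifies that at least one residual stays \emph{typed}; this residual, being a residual of a redex, remains automatically a redex occurrence, so at every step I can select a single residual witnessing both properties. The subsidiary check $\pos{r} \neq \pos{s0}$, licensed by the application-versus-abstraction distinction for redex occurrences, is what makes that lemma applicable throughout, and the rest is a routine traced induction.
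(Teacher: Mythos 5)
Your proposal is correct and follows essentially the same route as the paper, whose proof is exactly a ``straightforward induction on $\rho$ using Lem.~\ref{lem:typedDescendant}''; your contrapositive formulation (an unused redex keeps a typed redex residual, contradicting normality of $\Phi_{t'}$) is just the natural way to flesh that induction out. The details you supply --- the side condition $\pos{r} \neq \pos{s0}$ via the application-versus-abstraction distinction, and the fact that residuals of redex occurrences remain redex occurrences --- are precisely the checks the paper leaves implicit.
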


\begin{proof}
Straightforward induction on $\rho$ using Lem.~\ref{lem:typedDescendant}.
\end{proof}

The notions of leftmost and weak-head needed reductions on (untyped) terms
naturally extends to \emph{typed} reductions on tree derivations. We thus have:

\begin{lemma}
\label{lem:typedLeftmost}
Let $t$ be a weak-head normalising term and $\Phi_{t}$ be principally typed.
Then, a leftmost typed reduction sequence starting at $\Phi_{t}$ is 
weak-head needed.
\end{lemma}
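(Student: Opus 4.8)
The plan is to run the leftmost typed reduction from $\Phi_t$ to completion and show that, as long as the subject has not yet reached a weak-head normal form, each step contracts exactly the weak-head redex, while the reduction is forced to stop precisely when a weak-head normal form is reached. Since the weak-head redex is weak-head needed by Lem.~\ref{lem:leftmostNeeded}, this makes every step of the sequence weak-head needed, which is the claim.

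First I would record termination. Typed reductions are normalising by Thm.~\ref{thm:subjectReductionV}, so the leftmost typed reduction is a finite sequence $\Phi_t = \Phi_0 \rewrite{\beta} \cdots \rewrite{\beta} \Phi_n$ ending in a \emph{normal} derivation $\Phi_n$, i.e.\ one carrying no typed redex. Writing $t_i = \projsubj{\Phi_i}$, I would then observe that $t_n \in \WHNF{\beta}$: if $t_n$ had a weak-head redex it would be a typed occurrence by Rem.~\ref{rem:typedLeftmost}, contradicting normality of $\Phi_n$. Thus the sequence genuinely reaches a weak-head normal form, and $t_n \in \whnf{\beta}{t}$.

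Next, for each step $i < n$ I would argue that $t_i$ is not in weak-head normal form and that the $i$-th step contracts its weak-head redex. The key identification is that the leftmost redex of a term not in weak-head normal form is its weak-head redex; this redex is typed by Rem.~\ref{rem:typedLeftmost}, and being leftmost among \emph{all} redexes it is a fortiori the leftmost \emph{typed} redex, which is exactly what a leftmost typed reduction contracts. Hence the step contracts the weak-head redex of $t_i$, which is weak-head needed in $t_i$ by Lem.~\ref{lem:leftmostNeeded}, so the step is weak-head needed.

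The delicate point — and where the \emph{principally typed} hypothesis is indispensable — is ruling out that some intermediate $t_i$ with $i < n$ is already in weak-head normal form while $\Phi_i$ still carries a typed redex. The examples preceding the lemma (e.g.\ $\absterm{y}{KI\Omega}$) show this can happen for arbitrary derivations, and there the contracted redex would not be weak-head needed, so the lemma would fail. To exclude it I would use the definition directly: if $t_i \in \WHNF{\beta}$ then $t_i \in \whnf{\beta}{t}$, so principal typedness forces $\Phi_i$ to be normal principally typed, hence normal, contradicting $i < n$. Consequently no step is ever taken past a weak-head normal form, and the whole leftmost typed reduction is weak-head needed. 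I expect this synchronisation — matching the termination of the typed reduction with the arrival at a weak-head normal form — to be the main obstacle, all other steps being direct appeals to Rem.~\ref{rem:typedLeftmost} and Lem.~\ref{lem:leftmostNeeded}.
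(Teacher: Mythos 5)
Your proof is correct and takes essentially the same route as the paper's: the paper argues by induction on the sequence, at each step using the principally-typed hypothesis to rule out that the current subject is already in weak-head normal form (otherwise the derivation would be normal principally typed, hence without typed redexes), and then invoking Rem.~\ref{rem:typedLeftmost} and Lem.~\ref{lem:leftmostNeeded} to conclude that the contracted leftmost typed redex is the weak-head redex and is weak-head needed --- exactly your argument, merely phrased inductively rather than unrolled over the steps of the sequence. Your additional observations (termination via Thm.~\ref{thm:subjectReductionV} and the final subject being in weak-head normal form) are harmless packaging that the paper's statement does not require.
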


\begin{proof}
By induction on the leftmost typed sequence (called $\rho$). If $\rho$ is empty
the result is immediate. If not, we show that $t$ has a typed weak-head needed
redex (which is leftmost by definition) and conclude by inductive hypothesis.
Indeed, assume $t \in \WHNF{\beta}$. By definition $\Phi_{t}$ is normal
principally typed and thus it has no typed redexes. This contradicts $\rho$
being non-empty. Hence, $t$ has a weak-head redex $\pos{r}$ (\ie $t \notin
\WHNF{\beta}$). Moreover, $\pos{r}$ is both typed (by
Rem.~\ref{rem:typedLeftmost}) and weak-head needed (by
Lem.~\ref{lem:leftmostNeeded}). Thus, we conclude.
\end{proof}

\begin{theorem}
\label{thm:typed-whn}
Let $t$ be a weak-head normalising term, $\Phi_{t}$ be principally typed and
$\pos{r} \in \roc{t} \cap \toc{\Phi_{t}}$. Then, $\pos{r}$ is a weak-head
needed redex in $t$.
\end{theorem}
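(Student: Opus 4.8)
The plan is to prove Theorem~\ref{thm:typed-whn} by reducing the question about an arbitrary typed redex occurrence $\pos{r}$ to a question about the \emph{leftmost} typed redex, where Lemma~\ref{lem:typedLeftmost} already gives us weak-head neededness. The key observation is that $\pos{r} \in \roc{t} \cap \toc{\Phi_{t}}$ is a typed redex, so it must be contracted (or have a residual contracted) along any typed reduction sequence that normalises $\Phi_{t}$; this is precisely the content of Lemma~\ref{lem:typedRedex}. So the strategy is to run a specific typed reduction sequence that is both leftmost and weak-head needed, and then argue that since $\pos{r}$ is typed, it cannot be avoided by this sequence, hence it is weak-head needed.

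\smallskip

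\textbf{First} I would fix the leftmost typed reduction sequence $\reductionn{\rho}{\Phi_{t}}{\Phi_{t'}}{\beta}$ where $\Phi_{t'}$ is normal (such a sequence exists and terminates because typed reductions are normalising by Thm.~\ref{thm:subjectReductionV}). By Lemma~\ref{lem:typedLeftmost}, since $t$ is weak-head normalising and $\Phi_{t}$ is principally typed, this sequence $\rho$ is weak-head needed, meaning every redex it contracts is weak-head needed in its corresponding term. \textbf{Next}, because $\pos{r} \in \roc{t} \cap \toc{\Phi_{t}}$ and $\Phi_{t'}$ is normal (so $\toc{\Phi_{t'}} \cap \roc{t'} = \varnothing$), Lemma~\ref{lem:typedRedex} tells me that $\pos{r}$ is \emph{used} in $\rho$: that is, $\rho$ contracts either $\pos{r}$ itself or one of its residuals.

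\smallskip

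\textbf{The crux} is then to transfer neededness back from the used residual to $\pos{r}$ itself. Write $\rho = \rho' \pos{r'} \rho''$ where $\pos{r'} \in \residuals{\pos{r}}{\rho'}$ is the residual actually contracted. Since every step of $\rho$ is weak-head needed, the step contracting $\pos{r'}$ is weak-head needed, i.e.\ $\pos{r'}$ is a weak-head needed redex in its corresponding term. Now I invoke the general principle, already used in the proof of Thm.~\ref{thm:leftmost}, that if a redex is \emph{not} weak-head needed then none of its residuals is weak-head needed either; contrapositively, $\pos{r'}$ weak-head needed along the residual chain from $\pos{r}$ forces $\pos{r}$ to be weak-head needed in $t$. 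This closes the argument.

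\smallskip

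\textbf{The main obstacle} I anticipate is ensuring that the notion of ``used in $\rho$'' provided by Lemma~\ref{lem:typedRedex} (phrased for typed reductions on derivation trees) aligns correctly with the notion of residuals on \emph{terms} that underlies weak-head neededness, since $\rho$ lives on derivation trees while neededness is a property of term reductions. I would handle this by appealing to the remark that reduction on trees induces reduction on terms with the same contracted occurrences (``$\reductionn{\rho}{\Phi}{\Phi'}{\beta}$ implies $\projsubj{\Phi} \rewriten{\beta} \projsubj{\Phi'}$''), so the term-level residual structure matches the tree-level one step by step, and the backward-transfer lemma about residuals of non-needed redexes applies verbatim at the term level.
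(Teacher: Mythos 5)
Your proposal is correct and follows essentially the same route as the paper's own proof: fix the leftmost typed reduction sequence $\rho$ from $\Phi_{t}$ to a normal derivation, use Lemma~\ref{lem:typedLeftmost} to conclude $\rho$ is weak-head needed, use Lemma~\ref{lem:typedRedex} to conclude $\pos{r}$ is used in $\rho$, and conclude. The only difference is that you make explicit the final step --- transferring weak-head neededness from the contracted residual $\pos{r'}$ back to $\pos{r}$ via the principle that no residual of a non-needed redex is needed --- which the paper leaves implicit here but states and uses in exactly this form in its proof of Thm.~\ref{thm:leftmost}.
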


\begin{proof}
Let $\reductionn{\rho}{\Phi_{t}}{\Phi_{t'}}{\beta}$ be the leftmost typed
reduction sequence where $\Phi_{t'}$ is normal. Note that $\Phi_{t'}$ exists
by definition of \emph{principally typed}. By Lem.~\ref{lem:typedLeftmost},
$\rho$ is a weak-head needed reduction sequence. Moreover, by
Lem.~\ref{lem:typedRedex}, $\pos{r}$ is used in $\rho$. Hence, $\pos{r}$ is a
weak-head needed redex in $t$.
\end{proof}

As a direct consequence of Thm.~\ref{thm:whn-typed} and~\ref{thm:typed-whn},
given a weak-head normalising term $t$, the typed redex occurrences in its
principally typed derivation (which always exists) correspond to its weak-head
needed redexes. Hence, system $\V$ allows to identify all the weak-head needed
redexes of a weak-head normalising term.



\section{Characterising Weak-Head Needed Normalisation}
\label{sec:charact}

This section presents one of the main pieces contributing to our observational
equivalence  result. Indeed, we relate typing with weak-head neededness by
showing that any typable term in system $\V$ is normalising for weak-head
needed reduction. This characterisation highlights the power of intersection
types. We start by a technical lemma.

\begin{lemma}
\label{lem:PhiNForm-WHNF}
Let $\derivable{\Phi}{\sequ{\Gamma}{\assign{t}{\tau}}}{\V}$. Then, $\Phi$
normal implies $t \in \WHNF{\beta}$.
\end{lemma}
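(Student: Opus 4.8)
The plan is to prove the contrapositive: assuming $t \notin \WHNF{\beta}$, I will exhibit a redex occurrence of $t$ that is simultaneously a typed occurrence of $\Phi$, which immediately gives $\toc{\Phi} \cap \roc{t} \neq \varnothing$, i.e. $\Phi$ is not normal. Concretely, if $t$ is not a weak-head normal form then it has a weak-head redex, located at some occurrence $\pos{r} \in \roc{t}$. Since $t$ is the subject of the derivation $\Phi$, it is a typed term, and Rem.~\ref{rem:typedLeftmost} tells us that the weak-head redex of a typed term is always a typed occurrence, so $\pos{r} \in \toc{\Phi}$. Hence $\pos{r} \in \toc{\Phi} \cap \roc{t}$ and $\Phi$ fails to be normal, which is what we wanted to contradict.

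Under this approach the whole content is concentrated in Rem.~\ref{rem:typedLeftmost}, so there is essentially no obstacle beyond invoking it. If instead one wants a self-contained argument that does not appeal to that remark, I would prove the statement directly by induction on $\Phi$ (equivalently, on the structure of $t$). The base cases are immediate: if $\projrule{\Phi} = \ruleAxiom$ then $t$ is a variable, and if $\projrule{\Phi} = \ruleValue$ or $\projrule{\Phi} = \ruleArrowI$ then $t$ is an abstraction; in all these cases $t$ is already in $\WHNF{\beta}$.

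The only interesting case is $t = \appterm{t_1}{t_2}$, where $\projrule{\Phi} = \ruleArrowE$ with left premise $\Phi_{t_1}$ typing $t_1$ with an arrow type. Here I would first observe that $\Phi_{t_1}$ is again normal: any element of $\toc{\Phi_{t_1}} \cap \roc{t_1}$ would, prefixed by $\pos{0}$, belong to $\toc{\Phi} \cap \roc{t}$, contradicting normality of $\Phi$. By the inductive hypothesis, $t_1 \in \WHNF{\beta}$, so $t_1$ is either an abstraction or a term headed by a variable. The case where $t_1$ is an abstraction is ruled out: then $t = \appterm{t_1}{t_2}$ would itself be a redex, so $\rootpos \in \roc{t}$, but $\rootpos \in \toc{\Phi}$ holds for every derivation by definition, again contradicting normality. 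Therefore $t_1$, and hence $t$, is headed by a variable, which gives $t \in \WHNF{\beta}$.

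The step I expect to be the crux in the inductive version is exactly this last elimination of the abstraction case, which is the syntactic counterpart of Rem.~\ref{rem:typedLeftmost}: it is what forbids a weak-head redex from hiding in a normal derivation, and it hinges on the fact that the root occurrence $\rootpos$ is always typed.
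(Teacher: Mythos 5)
Your proposal is correct, and your self-contained inductive fallback is essentially the paper's own proof: the paper also proceeds by induction on the last rule of $\Phi$, with the same immediate base cases $\ruleAxiom$, $\ruleValue$, $\ruleArrowI$, and in the $\ruleArrowE$ case it likewise observes that the left premise is normal, applies the inductive hypothesis to get the left subterm in $\WHNF{\beta}$, and rules out the abstraction case because $\rootpos \in \toc{\Phi}$ would make the root a typed redex. Your primary argument --- the contrapositive via Rem.~\ref{rem:typedLeftmost} --- is a genuine shortcut that the paper does not take: a term not in $\WHNF{\beta}$ has a weak-head redex, the remark says that redex is a typed occurrence of any derivation with that subject, so $\toc{\Phi} \cap \roc{t} \neq \varnothing$ and normality fails at once. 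What this buys is brevity and a clean separation of concerns; what it costs is that the entire content is concentrated in the remark, which the paper states without proof --- its justification is precisely the fact that typed occurrences always follow the left spine of applications, i.e., the induction you spell out in your second argument. Both routes are sound, and you correctly identified that the elimination of the abstraction case is the crux where the remark and the induction meet.
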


\begin{proof}
By induction on $\Phi$ analysing the last rule applied.
\begin{itemize}
  \item $\ruleAxiom$. Then $t = x \in \WHNF{\beta}$.
  
  \item $\ruleValue$. Then $t = \absterm{x}{t'} \in \WHNF{\beta}$.
  
  \item $\ruleArrowI$. Then $t = \absterm{x}{t'} \in \WHNF{\beta}$.
  
  \item $\ruleArrowE$. Then $t = \appterm{r}{u}$ with $\Gamma =
  \ctxtsum{\Gamma'}{\Delta_i}{i \in I}$,
  $\derivable{\Phi'}{\sequ{\Gamma'}{\assign{r}{\functtype{\intertype{\sigma_i}{i
  \in I}}{\tau}}}}{\V}$ and $(\sequ{\Delta_i}{\assign{u}{\sigma_i}})_{i
  \in I}$ for some $I$. From $\Phi$ normal we have $\Phi'$ normal too. Thus, by
  inductive hypothesis, $r \in \WHNF{\beta}$. Moreover, $r \neq \absterm{y}{r'}$
  since $\rootpos \in \toc{\Phi}$. Then, $r =
  \appterm{\appterm{\appterm{x}{r_1}}{\ldots}}{r_n}$ and we conclude with $t =
  \appterm{\appterm{\appterm{\appterm{x}{r_1}}{\ldots}}{r_n}}{u} \in
  \WHNF{\beta}$.
\end{itemize}
\vspace{-2em}
\end{proof}

Let $\reductionn{\rho}{t_1}{t_n}{\beta}$. We say that $\rho$ is a
\emphdef{left-to-right} reduction sequence iff for every $i < n$ if
$\reduction{\pos{r_i}}{t_i}{t_{i+1}}{\beta}$ and $\pos{l_i}$ is to the
left of $\pos{r_i}$ then, for every $j > i$ such that
$\reduction{\pos{r_j}}{t_j}{t_{j+1}}{\beta}$ we have that $\pos{r_j} \notin
\residuals{\set{\pos{l_i}}}{\rho_{ij}}$ where $\reductionn{\rho_{ij}}{t_i}{t_j}{\beta}$
is the corresponding subsequence of $\rho$. In other words, for every $j$ and every
$i < j$, $\pos{r_j}$ is not a residual of a redex to the left of $\pos{r_i}$
(relative to the given reduction subsequence from $t_i$ to
$t_j$)~\cite{Barendregt84}.

Left-to-right reductions define in particular standard strategies, which 
give canonical ways to construct reduction sequences from one term to another:

\begin{theorem}[\cite{Barendregt84}]
\label{thm:standard}
If $t \rewriten{\beta} t'$, there exists a left-to-right reduction from $t$ to $t'$.
\end{theorem}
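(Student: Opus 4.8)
The plan is to prove the statement by induction on the length of the given reduction $\reductionn{\rho}{t}{t'}{\beta}$, reordering it into a left-to-right one through a \emph{leftmost-used} extraction. If $\rho$ is empty then $t = t'$ and the empty sequence is vacuously left-to-right. Otherwise $\rho$ is non-empty, so at least one redex of $t$ is used in $\rho$ (e.g. the redex contracted at the first step). Since \emph{to-the-left} is a total order on $\roc{t}$, among all redexes of $t$ whose residuals get contracted along $\rho$ there is a unique to-the-left-most one; call it $\pos{l}$. The idea is to peel off a contraction of $\pos{l}$ at the front, standardise the remainder by the induction hypothesis, and check that prepending the $\pos{l}$-step preserves the left-to-right property.

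The central ingredient is a permutation lemma allowing $\pos{l}$ to be brought to the front, i.e. to rewrite $\rho$ as $\reduction{\pos{l}}{t}{t_1}{\beta}$ followed by some $\reductionn{\rho_1}{t_1}{t'}{\beta}$. To establish this I would locate the first step of $\rho$ that contracts $\pos{l}$ or a residual of $\pos{l}$ and repeatedly swap it backwards across the immediately preceding step. Each such swap is sound because, by minimality of $\pos{l}$, no redex contracted before that step can be to-the-left of the current residual of $\pos{l}$ (such a redex would trace back, by running Lem.~\ref{lem:leftPreservation} along the prefix, to a used redex of $t$ strictly to-the-left of $\pos{l}$, a contradiction); in particular each preceding contraction is strictly to-the-right of the residual of $\pos{l}$, so Lem.~\ref{lem:leftPreservation} guarantees that this residual survives at the same occurrence and stays to-the-left of everything contracted, exactly as in Cor.~\ref{cor:leftmost} but relative to used redexes. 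Swapping a contraction of $\pos{l}$'s residual back past a contraction strictly to its right is then the standard residual permutation that preserves the two endpoints.

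Once $\pos{l}$ sits at the front I would apply the induction hypothesis to $\reductionn{\rho_1}{t_1}{t'}{\beta}$ to obtain a left-to-right reduction $\rho_1'$, and return the sequence consisting of the $\pos{l}$-step followed by $\rho_1'$. This sequence is left-to-right: the conditions involving the first step hold because, $\pos{l}$ being the to-the-left-most used redex, no later contracted redex is a residual of a redex of $t$ that is to-the-left of $\pos{l}$; and the conditions among the remaining steps are precisely the left-to-right conditions of $\rho_1'$, since residuals there are traced along subsequences internal to the tail starting at $t_1$. The main obstacle is the well-foundedness of this recursion together with the termination of the swapping process: when $\pos{l}$ duplicates (or erases) a redex lying inside its argument, pulling $\pos{l}$ to the front turns a single step into a \emph{development} of several residuals, so the plain number of steps need not decrease and is not a sound induction measure. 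Resolving this is the delicate classical point; I would appeal to the Finite Developments theorem to ensure each development is finite and combine it with a suitable lexicographic or multiset measure on reduction sequences that the permutation provably does not increase, which is exactly what makes standardisation non-trivial~\cite{Barendregt84}.
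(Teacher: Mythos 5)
The paper does not actually prove this statement: it is imported wholesale from Barendregt's book (hence the citation in the theorem header) and used as a black box in the proof of Thm.~\ref{thm:typedIffNeeded}. So there is no proof in the paper to compare yours against; the relevant comparison is with the classical standardisation proofs. Your outline is the classical Curry--Feys-style extraction argument --- contract the to-the-left-most redex of $t$ whose residuals are used in $\rho$, standardise the remainder, recurse --- which is a legitimate route to the theorem (it is essentially the one underlying \cite{Barendregt84}).

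As a proof, however, the proposal has genuine gaps, the first of which you flag yourself: the recursion has no well-founded measure. Pulling $\pos{l}$ to the front turns single steps into developments of sets of residuals, so the length of the sequence does not decrease, and saying you would combine Finite Developments with ``a suitable lexicographic or multiset measure'' names the difficulty rather than resolves it --- constructing that measure (or replacing it by another device, e.g.\ Takahashi's parallel-reduction proof or Klop's strongly-finite-developments argument) \emph{is} the content of the theorem, and citing \cite{Barendregt84} for it makes the attempt circular. Second, the final verification that prepending the $\pos{l}$-step to the standardised tail $\rho_1'$ yields a left-to-right sequence needs a stronger induction hypothesis than stated: you need $\rho_1'$ to contract only redexes traceable back to redexes used in $\rho_1$; a generic left-to-right reduction from $t_1$ to $t'$ could in principle contract a residual of a redex of $t$ lying to the left of $\pos{l}$, violating the condition at $i = 1$. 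Relatedly, your swapping argument treats every redex contracted before $\pos{l}$'s residual as traceable, via Lem.~\ref{lem:leftPreservation}, to a used redex of $t$; but redexes \emph{created} along $\rho$ have no ancestor in $t$, so the minimality of $\pos{l}$ says nothing about them, and one needs a separate (true, but unstated) argument that a created redex to the left of $\pos{l}$'s residual forces its creator --- contracted strictly earlier --- to lie even further to the left, yielding the contradiction. These bookkeeping invariants about residuals and creation are exactly what makes standardisation non-trivial, and they are absent from the sketch.
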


\begin{theorem}
\label{thm:typedIffNeeded}
Let $t \in \Term$. Then,
$\derivable{\Phi}{\sequ{\Gamma}{\assign{t}{\tau}}}{\V}$ iff $t \in
\WN{\weaknd}$.
\end{theorem}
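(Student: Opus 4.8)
The plan is to prove the two implications separately, leaning on weighted subject reduction (Thm.~\ref{thm:subjectReductionV}) for the forward direction and on subject expansion (Thm.~\ref{thm:subjectExpansionV}) for the backward one, and to exploit the identity $\NF{\weaknd} = \WHNF{\beta}$ recorded earlier.

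For the direction $\Rightarrow$, I would show that typability forces the weak-head needed strategy to terminate. Starting from $\derivable{\Phi}{\sequ{\Gamma}{\assign{t}{\tau}}}{\V}$, I would argue that along \emph{any} weak-head needed reduction $t \rewrite{\weaknd} t_1 \rewrite{\weaknd} \cdots$ a type derivation is maintained at every step (by Thm.~\ref{thm:subjectReductionV}, typability is stable under $\beta$-reduction) and that each step strictly shrinks the derivation: indeed, by Thm.~\ref{thm:whn-typed} every weak-head needed redex is a typed occurrence of whatever derivation is currently at hand, so contracting it falls under case (1) of Thm.~\ref{thm:subjectReductionV} and yields $\size{\Phi} > \size{\Phi'}$. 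Since $\size{\cdot}$ is a natural number, the sequence must be finite; its final term has no weak-head needed redex, hence lies in $\NF{\weaknd} = \WHNF{\beta}$, giving $t \in \WN{\weaknd}$. An alternative route, using the two results stated just above, is to first reduce $\Phi$ by typed reduction to a normal derivation $\Phi'$ (normalising by Thm.~\ref{thm:subjectReductionV}), conclude $\projsubj{\Phi'} \in \WHNF{\beta}$ by Lem.~\ref{lem:PhiNForm-WHNF}, and then invoke standardisation (Thm.~\ref{thm:standard}) together with Lem.~\ref{lem:leftmostNeeded} to turn the resulting $\beta$-reduction to weak-head normal form into the leftmost, hence weak-head needed, one.

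For the direction $\Leftarrow$, suppose $t \in \WN{\weaknd}$, so there is a weak-head needed reduction $\reductionn{\rho}{t}{u}{\beta}$ with $u \in \NF{\weaknd} = \WHNF{\beta}$. I would first exhibit a type derivation for $u$ directly from its shape: if $u = \absterm{x}{s}$, rule $\ruleValue$ types it with $\valuetype$; if $u = \appterm{\appterm{\appterm{x}{t_1}}{\ldots}}{t_n}$, an application of $\ruleAxiom$ assigning $x$ the type $\functtype{\intertype{}{}}{\functtype{\ldots}{\functtype{\intertype{}{}}{\alpha}}}$ ($n$ empty-multiset arrows) followed by $n$ instances of $\ruleArrowE$ with empty index set leaves every $t_i$ untyped and derives $\alpha$ --- this is exactly the normal principally typed derivation of Sec.~\ref{s:principal-whn}. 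Since $\rho$ is in particular a $\beta$-reduction, I would then propagate typability backwards by iterating subject expansion (Thm.~\ref{thm:subjectExpansionV}) along the finite sequence $t \rewriten{\beta} u$, concluding that $t$ is typable.

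I expect the forward direction to be the delicate one. The clean argument hinges on invoking Thm.~\ref{thm:whn-typed} at every step to guarantee the contracted redex is \emph{typed} in whatever derivation weighted subject reduction currently provides, so that the size strictly decreases rather than merely failing to increase; care is needed to phrase the induction over the whole strategy rather than over a single reduction. In the alternative standardisation-based route, the subtle point is the classical one of showing that if $t$ has a weak-head normal form at all then the leftmost (weak-head) reduction reaches it, which is precisely where Thm.~\ref{thm:standard} and the identification of the leftmost redex with the weak-head redex (Lem.~\ref{lem:leftmostNeeded}) do the work. The backward direction is comparatively routine once the base case of typing an arbitrary weak-head normal form is set up.
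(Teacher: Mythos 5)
Your proposal is correct, and your backward direction coincides exactly with the paper's: both exhibit the normal principally typed derivation of the weak-head normal form (rule $\ruleValue$ for an abstraction, $\ruleAxiom$ followed by $n$ applications of $\ruleArrowE$ with empty index sets for a head variable) and pull typability back along the finite sequence by iterating subject expansion (Thm.~\ref{thm:subjectExpansionV}). The interesting divergence is in the forward direction, where the paper's proof is what you present as the \emph{alternative} route: normalise the derivation by typed reduction (Thm.~\ref{thm:subjectReductionV}), obtain $t' \in \WHNF{\beta}$ via Lem.~\ref{lem:PhiNForm-WHNF}, invoke standardisation (Thm.~\ref{thm:standard}) to get a left-to-right reduction $t \rewriten{\beta} t'$, and then argue by contradiction, using the left-to-right property together with Lem.~\ref{lem:leftmostNeeded}, that the prefix of that reduction up to the first weak-head normal form is leftmost, hence weak-head needed. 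Your primary route dispenses with standardisation altogether: since by Thm.~\ref{thm:whn-typed} every weak-head needed redex is a typed occurrence of \emph{any} derivation of the current term, item (1) of weighted subject reduction forces $\size{\cdot}$ to decrease strictly along every weak-head needed sequence, so all such sequences have length at most $\size{\Phi}$, and a maximal one must end in a term without weak-head needed redexes, \ie in $\NF{\weaknd} = \WHNF{\beta}$. This is sound and non-circular, since Thm.~\ref{thm:whn-typed} is established in Sec.~\ref{sec:redexes}, before the present theorem, and it actually yields something stronger than the paper proves: \emph{strong} normalisation of the weak-head needed strategy on typable terms, with an explicit bound on the number of steps, rather than mere existence of a normalising sequence. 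What the paper's route buys in exchange is independence from the machinery of Sec.~\ref{sec:redexes} (residuals of typed occurrences, Thm.~\ref{thm:leftmost}): it relies only on the type-theoretic results of Sec.~\ref{sec:systemV} and Sec.~\ref{sec:allowable} plus classical rewriting facts, which is presumably why the authors chose it despite the extra combinatorial bookkeeping.
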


\begin{proof}
$\Rightarrow$) By Thm.~\ref{thm:subjectReductionV} we know that the strategy
reducing only typed redex occurrences is normalising, \ie there exist $t'$ and
$\Phi'$ such that $t \rewriten{\beta} t'$,
$\derivable{\Phi'}{\sequ{\Gamma}{\assign{t'}{\tau}}}{\V}$ and $\Phi'$ normal.
Then, by Lem.~\ref{lem:PhiNForm-WHNF}, $t' \in \WHNF{\beta}$. By
Thm.~\ref{thm:standard}, there exists a left-to-right reduction
$\reductionn{\rho}{t}{t'}{\beta}$. Let us write $$\rho : t = t_1
\rewriten{\beta} t_n \rewriten{\beta} t'$$ such that $t_1, \ldots, t_{n-1}
\notin \WHNF{\beta}$ and $t_n \in \WHNF{\beta}$.

We claim that all reduction steps in $t_1 \rewriten{\beta} t_n$ are leftmost.
Assume towards a contradiction that there exists $k < n$ such that
$\reduction{\pos{r}}{t_k}{t_{k+1}}{\beta}$ and $\pos{r}$ is not the leftmost
redex of $t_k$ (written $\pos{l_{k}}$). Since $\rho$ is a left-to-right
reduction, no residual of $\pos{l_{k}}$ is contracted after the $k$-th step.
Thus, there is a reduction sequence from $t_k \notin \WHNF{\beta}$ to $t_n \in
\WHNF{\beta}$ such that $\pos{l_{k}}$ is not used in it. This leads to a
contradiction with $\pos{l_{k}}$ being weak-head needed in $t_k$ by
Lem.~\ref{lem:leftmostNeeded}.

As a consequence, there is a leftmost reduction sequence $t \rewriten{\beta}
t_n$. Moreover, by Lem.~\ref{lem:leftmostNeeded}, $t \rewriten{\weaknd} t_n
\in \WHNF{\beta} = \NF{\weaknd}$. Thus, $t \in \WN{\weaknd}$.

$\Leftarrow$) Consider the reduction $\reductionn{\rho}{t}{t'}{\weaknd}$ with
$t' \in \whnf{\beta}{t}$. Let
$\derivable{\Phi'}{\sequ{\Gamma}{\assign{t'}{\tau}}}{\V}$ be the normal
principally typed derivation for $t'$ as defined in Sec.~\ref{s:principal-whn}.
Finally, we conclude by induction in $\rho$ using
Thm.~\ref{thm:subjectExpansionV},
$\derivable{\Phi}{\sequ{\Gamma}{\assign{t}{\tau}}}{\V}$.
\end{proof}

\section{The Call-by-Need Lambda-Calculus}
\label{sec:call-by-need}

This section describes the syntax and the operational semantics of the
call-by-need lambda-calculus introduced in~\cite{AccattoliBM14}. It is more
concise than previous specifications of
call-by-need~\cite{AriolaFMOW95,AriolaF97,MaraistOW98,ChangF12}, but it is
operationally equivalent to them~\cite{BalabonskiBBK17}, so that our results
could also be presented by using alternative specifications.

Given a countable infinite set $\TermVariable$ of variables $x, y, z, \ldots$
we define different syntactic categories for terms, values, list contexts,
answers and need contexts: 

\begin{center}
\begin{tabular}{rrcll}
\textbf{(Terms)}         & $t, u$               & $\Coloneq$ & $x \in \TermVariable \mid \appterm{t}{u} \mid \absterm{x}{t} \mid \substerm{x}{u}{t}$ \\
\textbf{(Values)}        & $v$                  & $\Coloneq$ & $\absterm{x}{t}$ \\
\textbf{(List contexts)} & $\ctxt{L}$           & $\Coloneq$ & $\Box \mid \substerm{x}{t}{\ctxt{L}}$ \\ 
\textbf{(Answers)}       & $a$                  & $\Coloneq$ & $\ctxtapp{\ctxt{L}}{\absterm{y}{t}}$ \\
\textbf{(Need contexts)} & $\ctxt{M}, \ctxt{N}$ & $\Coloneq$ & $\Box \mid \appterm{\ctxt{N}}{t} \mid \substerm{x}{t}{\ctxt{N}} \mid \substerm{x}{\ctxt{M}}{\ctxtwoc{\ctxt{N}}{x}}$
\end{tabular}
\end{center}

We denote the set of terms by $\TermExplicit$. Terms of the form
$\substerm{x}{u}{t}$ are \emphdef{closures}, and $\exsubs{x}{u}$
is called an \emphdef{explicit substitution} (ES). The set of
$\TermExplicit$-terms without ES is the set of  \emph{terms of the
$\lambda$-calculus}, \ie $\Term$. The notions of \emphdef{free}
and \emphdef{bound} variables are defined as expected, in
particular, $\fv{\substerm{x}{u}{t}} \eqdef \fv{t} \setminus \set{x} \cup \fv{u}$, 
$\fv{\absterm{x}{t}} \eqdef \fv{t} \setminus \set{x}$,  
$\bv{\substerm{x}{u}{t}} \eqdef \bv{t} \cup \set{x} \cup \bv{u}$
and $\bv{\absterm{x}{t}} \eqdef \bv{t} \cup \set{x}$. We extend the
standard notion of \emphdef{$\alpha$-conversion} to ES,
as expected.

We use the special notation $\ctxtwoc{\ctxt{N}}{u}$ or $\ctxtwoc{\ctxt{L}}{u}$
when the free variables of $u$ are not captured by the context, \ie
there are no abstractions or explicit substitutions in the context
that binds the free variables of $u$. Thus for example, given $\ctxt{N} =
\substerm{x}{z}{(\appterm{\Box}{x})}$, we have $\substerm{x}{z}{(\appterm{y}{x})}
= \ctxtapp{\ctxt{N}}{y} = \ctxtwoc{\ctxt{N}}{y}$, but
$\substerm{x}{z}{(\appterm{x}{x})} = \ctxtapp{\ctxt{N}}{x}$ cannot be
written as $\ctxtwoc{\ctxt{N}}{x}$. Notice the use of this special
notation in the last case of needed contexts, an example of such case
being $\substerm{x}{\Box}{\substerm{y}{t}{(\appterm{x}{y})}}$.

\bigskip
The \emphdef{call-by-need calculus}, introduced in~\cite{AccattoliBM14}, is
given by the set of terms $\TermExplicit$ and the \emphdef{reduction
relation} $\rewrite{\callbyneed}$, the \emphdef{union} of $\rewrite{\dB}$
and $\rewrite{\lsv}$, which are, respectively, the closure by \emph{need
contexts} of the following rewriting rules: $$
\begin{array}{rcl}
\appterm{\ctxtapp{\ctxt{L}}{\absterm{x}{t}}}{u}            & \rrule{\dB}  & \ctxtapp{\ctxt{L}}{\substerm{x}{u}{t}} \\
\substerm{x}{\ctxtapp{\ctxt{L}}{v}}{\ctxtwoc{\ctxt{N}}{x}} & \rrule{\lsv} & \ctxtapp{\ctxt{L}}{\substerm{x}{v}{\ctxtwoc{\ctxt{N}}{v}}}
\end{array} $$
These rules avoid capture of free variables. An example of
$\callbyneed$-reduction sequence is the following, where the redex of each step
is underlined for clearness: $$
\begin{array}{l@{\ }l@{\ }l@{\ }l}
\underline{\appterm{(\absterm{x_1}{\appterm{I}{(\appterm{x_1}{I})}})}{(\absterm{y}{\appterm{I}{y}})}}                        & \rewrite{\dB}  &
\substerm{x_1}{\absterm{y}{\appterm{I}{y}}}{(\underline{\appterm{I}{(\appterm{x_1}{I})}})}                                   & \rewrite{\dB}  \\
\underline{\substerm{x_1}{\absterm{y}{\appterm{I}{y}}}{\substerm{x_2}{\appterm{x_1}{I}}{x_2}}}                               & \rewrite{\lsv} &
\substerm{x_1}{\absterm{y}{\appterm{I}{y}}}{\substerm{x_2}{\underline{\appterm{(\absterm{x_3}{\appterm{I}{x_3}})}{I}}}{x_2}} & \rewrite{\dB}  \\
\substerm{x_1}{\absterm{y}{\appterm{I}{y}}}{\substerm{x_2}{\substerm{x_3}{I}{(\underline{\appterm{I}{x_3}})}}{x_2}}          & \rewrite{\dB}  &
\substerm{x_1}{\absterm{y}{\appterm{I}{y}}}{\substerm{x_2}{\underline{\substerm{x_3}{I}{\substerm{x_4}{x_3}{x_4}}}}{x_2}}    & \rewrite{\lsv} \\
\substerm{x_1}{\absterm{y}{\appterm{I}{y}}}{\substerm{x_2}{\substerm{x_3}{I}{\underline{\substerm{x_4}{I}{x_4}}}}{x_2}}      & \rewrite{\lsv} &
\substerm{x_1}{\absterm{y}{\appterm{I}{y}}}{\underline{\substerm{x_2}{\substerm{x_3}{I}{\substerm{x_4}{I}{I}}}{x_2}}}        & \rewrite{\lsv} \\
\substerm{x_1}{\absterm{y}{\appterm{I}{y}}}{\substerm{x_3}{I}{\substerm{x_4}{I}{\substerm{x_2}{I}{I}}}}
\end{array} $$

As for call-by-name, reduction preserves free variables, \ie
$t \rewrite{\callbyneed} t'$ implies $\fv{t} \supseteq \fv{t'}$.
Notice that call-by-need reduction is also weak, so that
answers are not $\callbyneed$-reducible.


\section{Observational Equivalence}
\label{sec:results}

The results in Sec.~\ref{sec:charact} are used here to prove soundness and
completeness of call-by-need w.r.t weak-head neededness, our second main
result. More precisely, a call-by-need interpreter stops in a value if and only
if the weak-head needed reduction stops in a value. This means that
call-by-need and call-by-name are observationally equivalent.

Formally, given a reduction relation $\R$ on a term language $\mathcal{T}$, and
an associated notion of context for $\mathcal{T}$, we define $t$ to be 
\emphdef{observationally equivalent} to $u$, written $t \obs{\R} u$, iff
$\ctxtapp{\ctxt{C}}{t} \in \WN{\R} \Leftrightarrow \ctxtapp{\ctxt{C}}{u} \in
\WN{\R}$ for every context $\ctxt{C}$. In order to show our final result we
resort to the following theorem:

\begin{theorem}[\cite{Kesner16}]
\label{thm:name-need}
\begin{enumerate}
\item \label{uno} 
Let $t \in \Term$. Then, $\derivable{\Phi}{\sequ{\Gamma}{\assign{t}{\tau}}}{\V}$ iff $t \in \WN{\callbyname}$.
\item \label{dos}
  For all  terms $t$ and $u$ in $\Term$, $t \obs{\callbyname} u$ iff
  $t \obs{\callbyneed} u$.
   \end{enumerate}
\end{theorem}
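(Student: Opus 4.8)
The plan is to prove the two parts separately, reusing the machinery already developed for system $\V$; part \ref{uno} is a soundness/completeness argument by subject reduction and expansion, and part \ref{dos} reduces observational equivalence to a typing condition.

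For part \ref{uno}, I would argue exactly as in the proof of Theorem~\ref{thm:typedIffNeeded}, but with the weak-head needed strategy replaced by call-by-name (which is deterministic, always contracting the weak-head redex, so that $\NF{\callbyname} = \WHNF{\beta}$). For the implication from typability to normalisation, note that by Remark~\ref{rem:typedLeftmost} the weak-head redex of any typed term is a typed occurrence; hence weighted subject reduction (Theorem~\ref{thm:subjectReductionV}(1)) makes each $\callbyname$-step strictly decrease $\size{\Phi}$. As this is a natural number, the call-by-name reduction terminates, and by Lemma~\ref{lem:PhiNForm-WHNF} the reached term lies in $\WHNF{\beta} = \NF{\callbyname}$, so $t \in \WN{\callbyname}$. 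Conversely, if $t \in \WN{\callbyname}$ then $t \rewriten{\callbyname} s$ with $s \in \WHNF{\beta}$; I would first type $s$ by its normal principally typed derivation (rule $\ruleValue$ for an abstraction, or typing the head variable with the appropriate functional type and leaving the arguments untyped in the applicative case), and then propagate typability backwards along the reduction by subject expansion (Theorem~\ref{thm:subjectExpansionV}).

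For part \ref{dos}, the key is a typing characterisation of call-by-need normalisation matching the one for call-by-name. I would establish that a term $s \in \TermExplicit$ is typable iff $s \in \WN{\callbyneed}$, again by a weighted subject reduction / subject expansion argument adapted to $\rewrite{\dB}$ and $\rewrite{\lsv}$, observing that answers are exactly the $\callbyneed$-normal forms. The crucial bridge is that typability is invariant under unfolding of explicit substitutions: an ES $\substerm{x}{u}{s}$ behaves like $\appterm{(\absterm{x}{s})}{u}$, which is $\beta$-convertible to $\substitute{x}{u}{s}$, so stability of $\V$ under conversion (Theorems~\ref{thm:subjectReductionV} and~\ref{thm:subjectExpansionV}) yields that $s$ is typable iff its pure unfolding is. Combining this with part~\ref{uno} gives, for every term, that membership in $\WN{\callbyname}$, typability, and membership in $\WN{\callbyneed}$ all coincide. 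Consequently $t \obs{\callbyname} u$ amounts to ``$\ctxtapp{\ctxt{C}}{t}$ is typable iff $\ctxtapp{\ctxt{C}}{u}$ is typable, for every pure context $\ctxt{C}$'', and $t \obs{\callbyneed} u$ is the same condition quantified over all $\TermExplicit$-contexts. Since pure contexts are a special case of $\TermExplicit$-contexts, $\obs{\callbyneed} \Rightarrow \obs{\callbyname}$ is immediate, and for the converse I would unfold the explicit substitutions of an arbitrary $\TermExplicit$-context into a pure context testing the same typability.

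The main obstacle is precisely this last reduction of context languages: unfolding the explicit substitutions of a context can duplicate or erase the hole (when the substituted variable occurs several times, or not at all), so the resulting pure ``context'' is genuinely multi-hole, and one must verify that non-idempotent typability is preserved under such unfolding and that the multi-hole test still collapses to the single-hole observational equivalence. This is exactly where the compositional, resource-sensitive nature of system $\V$ is needed, and it is the technical heart of the cited result of~\cite{Kesner16}; the preceding paragraphs merely re-derive its ingredients from the tools available here.
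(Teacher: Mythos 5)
First, a point of comparison: the paper does not prove this theorem at all. It is imported wholesale from~\cite{Kesner16} (hence the citation in the theorem header), and Sec.~\ref{sec:results} uses both items as black boxes. So your proposal is not an alternative to an in-paper argument; it is an attempt to reconstruct the cited result, and must be judged on its own merits.

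Your part~\ref{uno} is correct and uses exactly the machinery the paper provides: it is the proof of Thm.~\ref{thm:typedIffNeeded} transposed to call-by-name. Since $\callbyname$ always contracts the weak-head redex, Rem.~\ref{rem:typedLeftmost} together with weighted subject reduction (Thm.~\ref{thm:subjectReductionV}(1)) bounds the length of the deterministic $\callbyname$-sequence by $\size{\Phi}$, and the converse follows from the normal principally typed derivations of weak-head normal forms plus subject expansion (Thm.~\ref{thm:subjectExpansionV}). One slip: Lem.~\ref{lem:PhiNForm-WHNF} is not applicable where you invoke it, because the derivation of the final term need not be \emph{normal} in the technical sense (it may still type redexes in non-weak-head position, as in $\appterm{x}{(\appterm{I}{I})}$ with the argument typed). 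But you do not need it: a term to which no $\callbyname$-step applies has no weak-head redex and is in $\WHNF{\beta} = \NF{\callbyname}$ by definition.

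Part~\ref{dos} is where the proposal stops being a proof, and you say so yourself. Two concrete gaps. First, system $\V$ has no typing rule for $\substerm{x}{u}{t}$, so the claim ``$s$ typable iff $s \in \WN{\callbyneed}$'' for $s \in \TermExplicit$ cannot even be formulated with the paper's tools; it requires the system $\A$ of~\cite{Kesner16} (mentioned only in the paper's conclusion), together with its own weighted subject reduction and expansion for $\rewrite{\dB}$ and $\rewrite{\lsv}$, none of which you develop. Second, the bridge between the two context languages: $\obs{\callbyneed} \Rightarrow \obs{\callbyname}$ is indeed routine once the two typability characterisations are in hand, but the converse quantifies over all $\TermExplicit$-contexts, and your plan of unfolding their explicit substitutions produces a \emph{multi-hole} pure context in which $t$ appears under various substitutions. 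Showing that typability is invariant under this unfolding, and that single-hole $\obs{\callbyname}$ controls such multi-hole substituted occurrences (for instance via $\beta$-conversion invariance of $\WN{\callbyname}$, itself a consequence of part~\ref{uno} and Thms.~\ref{thm:subjectReductionV} and~\ref{thm:subjectExpansionV}), is precisely the technical content of the cited theorem. Flagging that step as ``the heart of~\cite{Kesner16}'' is honest, but it means part~\ref{dos} remains a citation dressed as a proof sketch, not a proof.
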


These observations allows us to conclude:

\begin{theorem}
For all  terms $t$ and $u$ in $\Term$, $t \obs{\weaknd} u$ iff
$t \obs{\callbyneed} u$. 
\end{theorem}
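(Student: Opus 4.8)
The plan is to factor the statement through call-by-name and to reuse the characterisations of weak-head needed normalisation already established. By Thm.~\ref{thm:name-need}(\ref{dos}) we know that $t \obs{\callbyname} u$ iff $t \obs{\callbyneed} u$ for all $t, u \in \Term$, so it suffices to prove the remaining equivalence $t \obs{\weaknd} u$ iff $t \obs{\callbyname} u$. Everything then reduces to comparing the weak-head needed and the call-by-name observational preorders.

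The key observation I would isolate first is a \emph{pointwise} coincidence of the two normalisation predicates on ordinary $\lambda$-terms: for every $s \in \Term$, chaining Thm.~\ref{thm:typedIffNeeded} (which says $s$ is typable in $\V$ iff $s \in \WN{\weaknd}$) with Thm.~\ref{thm:name-need}(\ref{uno}) (which says $s$ is typable in $\V$ iff $s \in \WN{\callbyname}$) yields $s \in \WN{\weaknd}$ iff $s \in \WN{\callbyname}$. Hence $\WN{\weaknd}$ and $\WN{\callbyname}$ denote the very same subset of $\Term$. This is the one non-trivial ingredient, but its proof is already discharged: it is precisely the conjunction of the two cited characterisations, both routed through typability in system $\V$.

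From here the argument is a short diagram chase. Both $\weaknd$ and $\callbyname$ are reduction relations on the same term language $\Term$, so the observational equivalences $\obs{\weaknd}$ and $\obs{\callbyname}$ quantify over exactly the same class of $\Term$-contexts $\ctxt{C}$. For any such $\ctxt{C}$ and any $t \in \Term$, the plugged term $\ctxtapp{\ctxt{C}}{t}$ is again a member of $\Term$, so the pointwise coincidence applies to it, giving $\ctxtapp{\ctxt{C}}{t} \in \WN{\weaknd}$ iff $\ctxtapp{\ctxt{C}}{t} \in \WN{\callbyname}$ (and likewise for $u$). Substituting these equivalences into the definition of observational equivalence immediately turns the condition defining $t \obs{\weaknd} u$ into the one defining $t \obs{\callbyname} u$, so the two coincide; composing with Thm.~\ref{thm:name-need}(\ref{dos}) then closes the proof.

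The genuinely hard work has already been done upstream, so I do not expect a real obstacle here; the only care needed is the bookkeeping about contexts. Concretely, I would make sure to state explicitly that the two observational equivalences range over identical contexts and that context application keeps us inside $\Term$, since the cited characterisations only speak about $\Term$-terms. Once that is spelled out, the equivalence follows mechanically.
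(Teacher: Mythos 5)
Your proposal is correct and follows essentially the same route as the paper: reduce to $t \obs{\weaknd} u \Leftrightarrow t \obs{\callbyname} u$ via Thm.~\ref{thm:name-need}:\ref{dos}, then bridge the two normalisation predicates on each plugged term $\ctxtapp{\ctxt{C}}{t}$ through typability in $\V$ using Thm.~\ref{thm:typedIffNeeded} and Thm.~\ref{thm:name-need}:\ref{uno}. Stating the pointwise coincidence $\WN{\weaknd} = \WN{\callbyname}$ as a separate step before quantifying over contexts is only a cosmetic reorganisation of the paper's chain of equivalences.
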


\begin{proof} By Thm.~\ref{thm:name-need}:\ref{dos} it is sufficient to show 
$t \obs{\weaknd} u$ iff $t \obs{\callbyname} u$. The proof proceeds as follows: $$
\vspace{-1.5em}
\begin{array}{rclc@{\qquad}l}
t                                            & \obs{\callbyname} & u                                            & \text{iff} & \text{(definition)} \\
\ctxtapp{\ctxt{C}}{t} \in \WN{\callbyname}   & \Leftrightarrow   & \ctxtapp{\ctxt{C}}{u} \in \WN{\callbyname}   & \text{iff} & \text{(Thm.~\ref{thm:name-need}:\ref{uno})} \\
\ctxtapp{\ctxt{C}}{t} \text{ typable in } \V & \Leftrightarrow   & \ctxtapp{\ctxt{C}}{u} \text{ typable in } \V & \text{iff} & \text{(Thm.~\ref{thm:typedIffNeeded})} \\
\ctxtapp{\ctxt{C}}{t} \in \WN{\weaknd}       & \Leftrightarrow   & \ctxtapp{\ctxt{C}}{u} \in \WN{\weaknd}       & \text{iff} & \text{(definition)} \\
t                                            & \obs{\weaknd}     & u
\end{array} $$
\end{proof}


\section{Conclusion}
\label{sec:conclusion}

We establish a clear connection between the semantical standard notion of
neededness and the syntactical concept of call-by-need. The use of
non-idempotent types --a powerful technique being able to characterise
different operational properties-- provides a simple and natural tool to show
observational equivalence between these two notions. 
We refer the reader to~\cite{BalabonskiTh} for other proof techniques (not
based on intersection types) used to connect semantical notions of neededness
with syntactical notions of lazy evaluation.

An interesting (and not difficult) extension of our result in
Sec.~\ref{sec:redexes} is that call-by-need reduction (defined on
$\lambda$-terms with explicit substitutions) contracts only $\dB$ weak-head
needed redexes, for an appropriate (and very natural) notion of weak-head
needed redex for $\lambda$-terms with explicit substitutions. A technical tool
to obtain such a result would be the type system $\A$~\cite{Kesner16}, a
straightforward adaptation of system $\V$ to call-by-need syntax. 

Given the recent formulation of \emph{strong
call-by-need}~\cite{BalabonskiBBK17} describing a deterministic call-by-need
strategy to normal form (instead of weak-head normal form), it would be natural
to extend our technique to obtain an observational equivalence result between
the standard notion of needed reduction (to full normal forms) and the strong
call-by-need strategy. This remains as future work.

\bibliographystyle{plain}
\bibliography{biblio}

\begin{thebibliography}{10}

\bibitem{AccattoliBM14}
Beniamino Accattoli, Pablo Barenbaum, and Damiano Mazza.
\newblock Distilling abstract machines.
\newblock In Johan Jeuring and Manuel M.~T. Chakravarty, editors, {\em
  Proceedings of the 19th {ACM} {SIGPLAN} international conference on
  Functional programming, Gothenburg, Sweden, September 1-3, 2014}, pages
  363--376. {ACM}, 2014.

\bibitem{AriolaF97}
Zena~M. Ariola and Matthias Felleisen.
\newblock The call-by-need lambda calculus.
\newblock {\em J. Funct. Program.}, 7(3):265--301, 1997.

\bibitem{AriolaFMOW95}
Zena~M. Ariola, Matthias Felleisen, John Maraist, Martin Odersky, and Philip
  Wadler.
\newblock The call-by-need lambda calculus.
\newblock In Ron~K. Cytron and Peter Lee, editors, {\em Conference Record of
  POPL'95: 22nd {ACM} {SIGPLAN-SIGACT} Symposium on Principles of Programming
  Languages, San Francisco, California, USA, January 23-25, 1995}, pages
  233--246. {ACM} Press, 1995.

\bibitem{BaaderN98}
Franz Baader and Tobias Nipkow.
\newblock {\em Term rewriting and all that}.
\newblock Cambridge University Press, 1998.

\bibitem{BalabonskiTh}
Thibaut Balabonski.
\newblock {\em La pleine paresse, une certaine optimalit\'e}.
\newblock Ph.{D}. {T}hesis, Universit\'e Paris-Diderot, 2012.

\bibitem{BalabonskiBBK17}
Thibaut Balabonski, Pablo Barenbaum, Eduardo Bonelli, and Delia Kesner.
\newblock Foundations of strong call by need.
\newblock {\em {PACMPL}}, 1({ICFP}):20:1--20:29, 2017.

\bibitem{Barendregt84}
Hendrik~P. Barendregt.
\newblock {\em The Lambda Calculus Its Syntax and Semantics}, volume 103.
\newblock North Holland, revised edition, 1984.

\bibitem{BarendregtKKS87}
Hendrik~P. Barendregt, Richard Kennaway, Jan~Willem Klop, and M.~Ronan Sleep.
\newblock Needed reduction and spine strategies for the lambda calculus.
\newblock {\em Inf. Comput.}, 75(3):191--231, 1987.

\bibitem{BucciarelliKV17}
Antonio Bucciarelli, Delia Kesner, and Daniel Ventura.
\newblock Non-idempotent intersection types for the lambda-calculus.
\newblock {\em Logic Journal of the {IGPL}}, 25(4):431--464, 2017.

\bibitem{ChangF12}
Stephen Chang and Matthias Felleisen.
\newblock The call-by-need lambda calculus, revisited.
\newblock In Helmut Seidl, editor, {\em Programming Languages and Systems -
  21st European Symposium on Programming, {ESOP} 2012, Held as Part of the
  European Joint Conferences on Theory and Practice of Software, {ETAPS} 2012,
  Tallinn, Estonia, March 24 - April 1, 2012. Proceedings}, volume 7211 of {\em
  Lecture Notes in Computer Science}, pages 128--147. Springer, 2012.

\bibitem{CoppoD80}
Mario Coppo and Mariangiola Dezani{-}Ciancaglini.
\newblock An extension of the basic functionality theory for the
  {\(\lambda\)}-calculus.
\newblock {\em Notre Dame Journal of Formal Logic}, 21(4):685--693, 1980.

\bibitem{Carvalho:thesis}
Daniel de~Carvalho.
\newblock {\em S\'{e}mantiques de la logique lin\'{e}aire et temps de calcul}.
\newblock PhD thesis, Universit\'{e} Aix-Marseille II, 2007.

\bibitem{Gardner94}
Philippa Gardner.
\newblock Discovering needed reductions using type theory.
\newblock In Masami Hagiya and John~C. Mitchell, editors, {\em Theoretical
  Aspects of Computer Software, International Conference {TACS} '94, Sendai,
  Japan, April 19-22, 1994, Proceedings}, volume 789 of {\em Lecture Notes in
  Computer Science}, pages 555--574. Springer, 1994.

\bibitem{Kesner16}
Delia Kesner.
\newblock Reasoning about call-by-need by means of types.
\newblock In Bart Jacobs and Christof L{\"{o}}ding, editors, {\em Foundations
  of Software Science and Computation Structures - 19th International
  Conference, {FOSSACS} 2016, Held as Part of the European Joint Conferences on
  Theory and Practice of Software, {ETAPS} 2016, Eindhoven, The Netherlands,
  April 2-8, 2016, Proceedings}, volume 9634 of {\em Lecture Notes in Computer
  Science}, pages 424--441. Springer, 2016.

\bibitem{MaraistOW98}
John Maraist, Martin Odersky, and Philip Wadler.
\newblock The call-by-need lambda calculus.
\newblock {\em J. Funct. Program.}, 8(3):275--317, 1998.

\bibitem{Rocca88}
Simona Ronchi~Della Rocca.
\newblock Principal type scheme and unification for intersection type
  discipline.
\newblock {\em Theor. Comput. Sci.}, 59:181--209, 1988.

\bibitem{Bakel92}
Steffen van Bakel.
\newblock Complete restrictions of the intersection type discipline.
\newblock {\em Theor. Comput. Sci.}, 102(1):135--163, 1992.

\bibitem{Vial:thesis}
Pierre Vial.
\newblock {\em Non-Idempotent Intersection Types, Beyond Lambda-Calculus}.
\newblock PhD thesis, Universit\'{e} Paris-Diderot, 2017.

\bibitem{Wadsworth:thesis}
Christopher~P. Wadsworth.
\newblock {\em Semantics and Pragmatics of the Lambda Calculus}.
\newblock PhD thesis, Oxford University, 1971.

\end{thebibliography}

\end{document}